\documentclass[runningheads]{llncs}
\usepackage{fullpage}
\usepackage{graphicx}
\usepackage{amsfonts, amsmath, amssymb}
\usepackage{comment}
\usepackage{xcolor}
\usepackage{breakcites}
\usepackage{xspace}
\usepackage{algorithm, algorithmic}
\usepackage{url}
\usepackage{comment}
\usepackage[normalem]{ulem}
\usepackage{hyperref}
\usepackage{booktabs}
\providecommand{\noopsort}[1]{}
\usepackage{array,multirow}
\usepackage{ifthen}
\usepackage{MnSymbol}
\usepackage{fancyhdr}
\mathchardef\mhyphen="2D
\def\bsk{\textsf{\textbf{sk}}}

\def\bct{{\textsf{\textbf{ct}}}}
\def\ba{{\mathbf{a}}}
\def\bb{{\mathbf{b}}}

\def\be{{\mathbf{e}}}

\def\ModPPMM{{\mathrm{Mod\mhyphen PP\mhyphen MM}}}
\def\ModPPMv{{\mathrm{Mod\mhyphen PP\mhyphen Mv}}}
\def\fpPPMM{{\mathrm{fp\mhyphen PP\mhyphen MM}}}

\def\Relin{{\mathsf{Relin}}}

\newcommand{\isst}[1]{#1^*}
\newcommand{\trace}[1]{\texttt{Tr}(#1)}
\newcommand{\enc}[1]{\textsf{Enc}\left(#1\right)}
\newcommand{\ct}{\textsf{ct}}
\newcommand{\ring}{\mathcal{R}}
\newcommand{\gal}[1]{\text{Gal}(#1)}
\newcommand{\sk}{\textsf{sk}}

\newcommand{\ctvec}{\mathsf{\mathbf{ct}}}
\newcommand{\auxvec}{\mathbf{aux}}

\newcommand{\cmt}{\textrm{C-MT}\xspace}
\newcommand{\tweak}{\textsf{Tweak}\xspace}
\newcommand{\transpose}{\textsf{Transpose}\xspace}

\newcommand{\rescale}{\textsf{Rescale}}
\newcommand{\aut}{\textsf{Aut}}

\newcommand{\modswitch}{\textsf{ModSwitch}}

\newboolean{publishedversion}
\setboolean{publishedversion}{false}

\newcommand{\rotmtx}[1]{\mathtt{Toep}(#1)}
\newcommand{\toep}[1]{\texttt{Toep}(#1)}
\newcommand{\VVec}[1]{\mathtt{Vec}(#1)}
\newcommand{\Pol}[1]{\mathtt{Pol}(#1)}

\newcommand{\pcmm}{\mathrm{PC\mhyphen MM}\xspace}
\newcommand{\ppmm}{\mathrm{PP\mhyphen MM}\xspace}
\newcommand{\ppmv}{\mathrm{PP\mhyphen Mv}\xspace}
\newcommand{\pcmv}{\mathrm{PC\mhyphen Mv}\xspace}
\newcommand{\ccmm}{\mathrm{CC\mhyphen MM}\xspace}
\newcommand{\cpmm}{\mathrm{CP\mhyphen MM}\xspace}
\newcommand{\cpmv}{\mathrm{CP\mhyphen Mv}\xspace}
\newcommand{\CPMM}{\mathrm{CP\mhyphen MM}}
\newcommand{\ccmv}{\mathrm{CC\mhyphen Mv}\xspace}

\def\tx{\widetilde{x}}

\newcommand{\style}[1]{\ensuremath{\mathsf{#1}}}

\newcommand{\swk}{\style{swk}}
\newcommand{\fmtswk}{\style{fmt}\mbox{-}\style{swk}}

\newcommand{\relin}{\style{Relin}}

\def\ZZ{{\mathbb{Z}}}
\def\RR{{\mathbb{R}}}

\def\R{{\mathcal{R}}}

\def\Ecd{\textsf{Ecd}}
\def\Dcd{\textsf{Dcd}}
\def\Enc{\textsf{Enc}}
\def\Dec{\textsf{Dec}}
\def\Add{\textsf{Add}}
\def\PCMult{\textsf{PCMult}}

\def\KeyGen{\textsf{KeyGen}}
\def\KeySwitch{\textsf{KeySwitch}}

\def\SWKGen{\textsf{SWKGen}}

\def\Rescale{\textsf{Rescale}}

\def\PCMult{\textsf{PCMult}}

\def\ModDecomp{\textsf{ModDecomp}}

\def\pk{\mathsf{pk}}
\def\sk{\mathsf{sk}}
\def\rk{\mathsf{rk}}

\def\ct{\mathsf{ct}}
\def\coeff{\textrm{coeff}}

\def\Aut{{\textsf{Aut}}}

\def\bx{{\mathbf{x}}}
\def\by{{\mathbf{y}}}
\def\LWE{{\textsf{LWE}}}
\def\RLWE{{\textsf{RLWE}}}
\def\RGSW{{\textsf{RGSW}}}
\def\MLWE{{\textsf{MLWE}}}
\def\GGSW{{\filledstar\textsf{GSW}}}
\def\GLWE{{\filledstar\textsf{LWE}}}
\def\shs{\textrm{sh-$s$}}
\def\sha{\textrm{sh-$a$}}
\def\ModPack{{\textsf{ModPack}}}

\newcommand{\FFLASBibkeyhack}[7]{}
\newcommand{\HEaaNBibkeyhack}[5]{}

\NewDocumentCommand{\opnorm}{sO{}m}{
  \IfBooleanTF{#1}{
    $\left|\opnormkern\left|\opnormkern\left|
    #3
    \right|\opnormkern\right|\opnormkern\right|
  $}{
    \mathopen{#2|\opnormkern #2|\opnormkern #2|}
    #3
    \mathclose{#2|\opnormkern #2|\opnormkern #2|}
  }
}
\newcommand{\opnormkern}{\mkern-1.5mu\relax}

\def\BA{{\bf{A}}}
\def\BD{{\bf{D}}}
\def\wBA{{\widetilde{\BA}}}
\def\BB{{\bf B}}
\def\BE{{\bf{E}}}
\def\wBB{{\widetilde{\BB}}}
\def\BM{{\bf M}}
\def\BS{{\bf S}}
\def\BU{{\bf U}}

\def\bb{{\bf b}}
\def\ba{{\bf a}}

\def\bv{{\bf v}}

\def\bk{{\bf k}}

\def\bm{{\bf m}}
\def\BSv{{\bf S}^{\bf v}}
\def\BSU{{\bf S}^{\bf U}}

\def\BSU{\BS^\BU}
\def\BSM{\BS^\BM}

\def\bv{\mathbf{v}}

\begin{document}
\title{
Fast Homomorphic Linear Algebra with BLAS
}
\titlerunning{ }

\author{Youngjin Bae\inst{1}
\orcidID{0000-0001-6870-4504} 
\and Jung Hee Cheon\inst{1,2}
\orcidID{0000-0002-7085-2220} 
\and \\
Guillaume Hanrot\inst{3} 
\orcidID{0000-0001-9319-0365}
\and Jai~Hyun Park\inst{3}
\orcidID{0000-0002-5401-8949} 
\and \\  
Damien Stehlé\inst{3}\orcidID{0000-0003-3435-2453}
}

\authorrunning{Y. Bae, J. H. Cheon, G. Hanrot, J. H. Park, and D. Stehlé}
\institute{}
\institute{CryptoLab Inc., Seoul, Republic of Korea \and Seoul National University, Seoul, Republic of Korea \and CryptoLab Inc., Lyon, France}

\maketitle

\begin{abstract}
Homomorphic encryption is a cryptographic paradigm allowing to compute on encrypted data, opening a wide range of applications in privacy-preserving data manipulation, notably in AI. Many of those applications require significant linear algebra computations (matrix-vector products, and matrix-matrix products). 

This central role of linear algebra computations goes far beyond homomorphic algebra and applies to most areas of scientific computing. This high versatility led, over time, to the development of a set of highly optimized routines, specified in 1979 under the name BLAS (basic linear algebra subroutines). 

Motivated both by the applicative importance of homomorphic linear algebra and the access to highly efficient implementations of cleartext linear algebra able to draw the most out of available hardware, we explore the connections between CKKS-based homomorphic linear algebra and floating-point plaintext linear algebra. The CKKS homomorphic encryption system is the most natural choice in this setting, as it natively handles real numbers and offers a large SIMD parallelism.

We provide reductions for matrix-vector products, vector-vector products for moderate-sized to large matrices to their plaintext equivalents. Combined with BLAS, we demonstrate that the efficiency loss between CKKS-based encrypted square matrix multiplication and double-precision floating-point square matrix multiplication is a mere 4-12 factor, depending on the precise situation.
\end{abstract}
\textbf{Corresponding author}: Jai Hyun Park, \texttt{jaihyunp@cryptolab.co.kr}\\
\smallskip

\noindent \textbf{Keywords:} Encrypted linear algebra, matrix-matrix product, matrix-vector product, homomorphic encryption, CKKS. 
\section{Introduction}
\begingroup
\renewcommand{\thefootnote}{}
\footnotetext{
\copyright~IACR 2026. This article is the final version submitted by the author(s) to the IACR and to Springer-Verlag on March 19, 2026. The version published by Springer-Verlag is available at \url{https://doi.org/10.1007/s00145-026-09580-x}.\vspace{2pt}\\
This is an extended version of articles published by the same authors in the proceedings of Crypto 2024~\cite{BCHPS24} and by the fourth author in the proceedings of Eurocrypt 2025~\cite{Park25}.
}
\endgroup

This work focuses on homomorphic linear algebra, which we define in its most general form as the following problem: given two matrices~$\BU \in \RR^{d_1\times d_2}$ and $\BM \in \RR^{d_2\times d_3}$ that can be either encrypted or plaintext, return ciphertexts decrypting to the product $\BU \cdot \BM \in \RR^{d_1 \times d_3}$. For $d_3 = 1$, we obtain, as a particular case of this problem, the homomorphic matrix-vector problem, which will be denoted by Mv. The general matrix-matrix problem will itself be denoted by MM. 
We aim at homomorphic linear algebra algorithms that are compatible with fully homomorphic encryption (FHE), notably with the CKKS scheme~\cite{CKKS17}.\footnote{We primarily focus on CKKS for the sake of simplicity, even though most of our techniques can be  adapted to BGV and BFV.} 
This constraint enables our algorithms to be used as parts of larger homomorphic computations.
In more details, we shall consider the following problems.
\begin{itemize}
    \item[$\bullet$] \textbf{Homomorphic matrix-vector product}. Homomorphic matrix-vector product takes as inputs a matrix $\BM = (m_{i,j})_{0\le i < d_1, 0\le j < d_2}$ and a vector $\bv = (v_i)_{0\le i < d_2}$, either of which can be ciphertext or plaintext, giving the three flavours $\cpmv$, $\pcmv$ and 
    $\ccmv$. 
  Matrix-vector products occur in a large variety of contexts where privacy concerns exist. It is highly common in private inference for neural networks (fully connected layers, convolutional layers), in private information retrieval (one of the ways to implement PIR is through representation of a database by a matrix, in which case the query is encoded into 
    a one-hot vector), in approximate vector search (consisting in comparing a template vector to a database; see, e.g., \cite{HDCZ23}). In most of these situations, the matrix and vector dimension are typically large, and the matrix is known beforehand and can be precomputed upon. 
    
    \item[$\bullet$] \textbf{Plaintext-ciphertext matrix multiplication\label{not:cpmm} ($\pcmm$ and $\cpmm$)}\footnote{We distinguish the $\pcmm$ and the $\cpmm$ cases. Indeed, ciphertexts encrypting a matrix contain the coefficients packed either in a row-wise or column-wise manner. Once a packing is fixed, $\pcmm$ and $\cpmm$ become two different problems.} $\pcmm$ and $\cpmm$ take as  inputs a matrix ${\bf U}=(u_{i,j})_{0\le i< d_1, 0 \le j<d_2}$ (in clear) and ciphertext(s) corresponding to a  matrix ${\bf M}=(m_{i,j})_{0\le i<d_2,0\le j<d_3}$ and return ciphertext(s) corresponding to the matrix~$\BM \cdot \BU$ (for CP-MM) or~${\bf U} \cdot {\bf M}$ (for $\pcmm$). $\cpmm$ arises in the context of privacy-preserving machine learning, notably in the inference phase of transformers for large language models (LLM), such as GPT~\cite{GPT}, BERT~\cite{bert} and LLaMA~\cite{llama}.  $\cpmm$ has been applied for the private evaluation of such large language models along with secure multiparty computation~\cite{iron,PZM+24,east}
    or fully homomorphic encryption~\cite{nexus}.  In~\cite{FCES+23}, $\cpmm$ is used for federated principal component analysis between different data providers, and, in~\cite{LZ22}, applications to smart contracts, health and finance are described.
    
      \item[$\bullet$] \textbf{Ciphertext-ciphertext matrix multiplication ($\ccmm$)}. $\ccmm$ takes as inputs two sets of ciphertext(s) encrypting two input matrices and outputs ciphertext(s) encrypting the product matrix. $\ccmm$ plays a central role in privacy-preserving machine learning (PPML) when a server trains or performs inference on machine learning models using encrypted data from the client, for example, during privacy-preserving training and inference of large language models (LLM) in~\cite{PZM+24,iron,nexus}. 
\end{itemize}

Many of these applications require the ability to handle large dimensions. We illustrate this fact in the case of LLM evaluation. Overall, transformers perform four different types of matrix multiplications, parameterized by $\mathsf{dim}$, $\mathsf{ntol}$, $\mathsf{nheads}$ and~$\mathsf{dff}$: 
\begin{itemize}
\item[$\bullet$] in the so-called attention phase, a weight matrix  ${\bf U} \in \RR^{\mathsf{dim}\times\mathsf{dim}}$ is multiplied by several user data matrices ${\bf M}\in \RR^{\mathsf{dim}\times \mathsf{ntok}}$;
\item[$\bullet$] right after the attention phase, the resulting matrices are split as matrices $\BM_i \in \RR^{\mathsf{dim}/\mathsf{nheads} \times \mathsf{ntok}}$, and two such encrypted matrices are multiplied as ${\BM'_i}^t \cdot \BM_i \in \RR^{\mathsf{ntok} \times \mathsf{ntok}}$;
\item[$\bullet$] in the feed-forward phase, larger weight  matrices  ${\bf U} 
 \in \RR^{\mathsf{dff}\times \mathsf{dim}}$ are multiplied by processed user data  ${\bf M} \in \RR^{\mathsf{dim}\times \mathsf{ntok}}$ before the activation step; 
 \item[$\bullet$] later in the feed-forward phase, the processed data is reduced back to dimension $\mathsf{dim}$, by multiplying  another large weight matrix in~$\RR^{\mathsf{dim}\times \mathsf{dff}}$ with the current user data matrix in~$\RR^{\mathsf{dff}\times \mathsf{ntok}}$.   
\end{itemize}

For privacy-preserving LLM inference, the weight matrices above are cleartext, whereas the user data matrices are encrypted. In mainstream LLMs, the $\mathsf{ntok}$ dimension ranges from~$128$ to $16\ 384$ (in GPT-3.5) or more, and the dimension $\mathsf{dim}$ ranges from 4 096 (LLaMA-7B) to 18\ 432 (PaLM~540B). 
This illustrates the need for efficient algorithms for homomorphic linear algebra that scale well, in order to handle large dimensions. 

In contrast, most of the existing algorithms have been developed with smaller dimensions in mind and scale poorly to such large dimensions. State-of-the-art $\cpmm$ and $\ccmm$ algorithms rely on fully homomorphic encryption schemes (FHE) based on the ring learning with errors problem~\cite{SSTX09,LPR10} (RLWE), such as BGV~\cite{BGV14}, BFV~\cite{Brakerski12,FV12} and CKKS~\cite{CKKS17}.  Most have asymptotic run-times that are linear in the product of the dimensions (or less for large dimensions, using recursive blocking techniques such as Strassen's algorithm). However, their 
practical performance is often much worse than the corresponding plaintext computation. For instance, for $\cpmm$, an adaptation of~\cite{JKLS18} is used in~\cite{FCES+23}, which reports a multiplication of a $256 \times 256$ plaintext matrix with a $256 \times 8$ ciphertext matrix  in 3.8s on an Intel Xeon 2.5GHz with 24 threads on 12 cores; as a comparison, a similar double-precision floating-point computation using OpenBLAS~\cite{openblas} takes of the order of~$20\mu s$ using single-threaded computation on a similar CPU, a difference of more than 5 orders of magnitude. 

The core difficulty of encrypted linear algebra stems from the  \textbf{}structure of the ciphertexts. RLWE-based FHE schemes encrypt a vector in a ciphertext as a whole, requiring ``key-switching'' operations to rearrange data in the vector. The existing approaches for homomorphic linear algebra rely on the basic operations on ciphertexts provided by the homomorphic encryption system: SIMD additions, SIMD multiplications, and key-switchings.  These approaches suffer from two drawbacks. 
Firstly, even though most of those approaches achieve the correct arithmetic complexity (the same order of base-ring add/mult operations as used in clear), the number of key-switching operations is often so large that these dominate the execution time in practice. 
Secondly, the involved mixture of operations performed  by these algorithms, in particular the frequent use of key-switchings, makes the computational and memory access pattern complex. As a result, these algorithms and their implementations are difficult to optimize, leading to frequent bottlenecks in memory access, with numerous steps becoming memory-bound due in particular to the large size of the switching keys.

\subsection{Reductions to cleartext linear algebra}

We also work with \RLWE-format ciphertexts. However, our approach differs from most previous works in that we obtain \emph{reductions to plaintext linear algebra} computations (i.e., $\ppmm$ and $\ppmv$), following a path initiated by~\cite{LZ22}. 

From a theoretical point of view, this goes against the broad belief that homomorphic computation is doomed to be several orders of magnitude slower than cleartext computations. The existence of such a reduction proves that reality is more complex and that certain tasks can be performed homomorphically at a modest cost overhead. 
From a practical point of view, this allows us to rely on the long work of the high-performance linear algebra community, which has led to the development of highly optimized libraries for the basic linear algebra subroutines (BLAS)~\cite{openblas,flint}.  

Our  reductions approach has two main limitations. First, the dimension of the operations in the reduction may not be the exact same as the dimension of the original operation. In particular, they may increase the arithmetic complexity in the case of matrices with dimensions smaller than the RLWE ring-degree.  Second, most of our reductions are not simple calls to cleartext linear algebra operations, but require some tasks before or after those calls. We distinguish two different types of such tasks. We call \emph{processing} (pre- and post-) tasks that have a negligible costs  compared to that of the cleartext linear algebra component. In the case where a task is more costly, it can be performed at the outset and we call it \emph{pre-computation}. Typically, such a task can be amortized in the context of numerous calls to the main routine.  

Despite these limitations, the reduction approach still makes the computational patterns much more regular and allows us to benefit, through the use of BLAS implementations,  from the clear understanding of this computational and memory pattern provided by years of work of the linear algebra community. It also yields algorithms that are very simple to implement, and get the most out of any piece of hardware for which a BLAS implementation is available (including in particular GPU architectures). 

This approach was initiated by~\cite{LZ22}, which reduces one $\cpmm$ to two $\ppmm$'s modulo an integer. Even though no server-side experiment is reported in~\cite{LZ22}, this hinted to the fact that~$\cpmm$ implementations can be based on fast linear algebra software and hence be very efficient. The algorithm from~\cite{LZ22} does not move data within \RLWE\ ciphertexts, but still requires that the matrix dimensions should be above or equal to the RLWE ring-degree. In practice, to enable computations with reasonable precision and 128-bit security, the degree should be at least~$2^{12}$ for a stand-alone $\cpmm$, and~$2^{13}$ if we want to enable key-switching for compatibility with FHE.

\subsection{Contributions}
We introduce reductions from the encrypted linear algebra tasks with various dimensions to cleartext linear algebra computations with similar dimensions. The ciphertexts we consider are defined as polynomials modulo some integer~$q$, and even though the cleartext matrix multiplications are approximations to reals with some precision, the linear algebra computations we perform on cleartexts are modulo integers~$q'$ of bit-size similar to that of~$q$. The bit-size of~$q'$ is within a small additive constant from that of~$q$.

We describe several $\cpmm$ (resp. $\cpmv$) algorithms that reduce to one or two $\ppmm$'s (resp.\ $\ppmv$'s), for a row dimension on the ciphertext matrix below and above the RLWE ring-degree, in a black-box manner and with limited overhead.
We also introduce a $\ccmm$ algorithm that reduces $\ccmm$ to four modular $\ppmm$'s for input dimensions above the RLWE ring-degree, and a general reduction algorithm that can handle any matrix-vector or matrix-matrix product, reducing it to eight plaintext operations of the same kind.

\paragraph{The algorithms.}
Our $\cpmm$ algorithms extend the one from~\cite{LZ22}, providing greater dimension flexibility and greater efficiency. We give two algorithms, for ciphertext matrices of dimensions~$d_1 \times d_2$, depending on whether~$d_1$ is smaller 
 or larger than the RLWE ring-degree~$N$ (note that~\cite{LZ22} works only for~$d_1 \geq N$). In both cases, our algorithms consist in reducing $\cpmm$ to two modular $\ppmm$'s, where the $\ppmm$ dimensions are~$N \times d_2 \times d_3$ and~$d_1 \times d_2 \times d_3$, if the plaintext matrix dimensions are $d_2 \times d_3$.
 
We also describe two algorithms (for small and large~$d_1$) which only use a single $d_1 \times d_2 \times d_3$ modular $\ppmm$, if one allows for precomputations on the cleartext matrix. Finally, we show that the latter modular $\ppmm$ can be replaced by a floating-point $\ppmm$,
hence obtaining reductions from $\cpmm$ to a single floating-point $\ppmm$ with the same dimension, with precomputation. 
In the case $d_3 = 1$, our algorithms yield $\cpmv$ algorithms.
\smallskip

For matrices of dimension larger or equal to the RLWE degree, our first $\ccmm$ algorithm consists of reducing $\ccmm$ to four modular $\ppmm$s, where $\ppmm$'s have the same dimension as the given $\ccmm$. As this algorithm relied on a large evaluation key size, 
we provide lightweight $\ccmm$ algorithms for large matrices, which use much fewer evaluation keys and have comparable efficiency. 
\smallskip

Our $\ccmv$ algorithm is a reduction to eight modular $\ppmv$'s with a modulus of similar 
size. It handles matrix-vector products of dimension $d_1 \times d_2$, with a matrix encrypted with a ring-degree~$N$ and a vector encrypted with a ring-degree~$N'$ that is not necessarily equal to~$N$. The algorithm performs two modular PP-Mv for each one of the four pairs of dimensions of $\{d_1, N\} \times \{d_2, N'\}$. This extends to~$\ccmm$ and~$\pcmm$ in a variety of dimensions by viewing the second matrix as a batch of vectors. 

\paragraph{Technical tools.}
The main new technical tool for $\cpmm$ is the use of compact formats for multiple \RLWE\ ciphertexts. Recall that a \RLWE\ ciphertext consists of a pair~$(a,b)$ of polynomials such that~$a\cdot \sk+b \approx m$, where~$\sk$ is the secret key and~$m$ is the underlying message. 
We say that several \RLWE\ ciphertexts are in shared-$a$ format if their first components~``$a$'' are identical. This format has been used many times for \emph{security proofs} in lattice-based cryptography, in the context of the lossy-mode proof technique (see, among many others, \cite{PW08,GKPV10,BLPRS13}). We use this format for \emph{computation} purposes. We provide
conversion algorithms from shared-secret \RLWE\ ciphertexts to shared-$a$ ciphertexts, and vice versa. 
 
The main new technical tool for $\ccmm$ is a new fast ciphertext matrix transpose ($\cmt$) algorithm. The algorithm uses $\widetilde{O}(N^2)$ mod-$q$ arithmetic operations to transpose an $N\times N$ matrix encrypted in $N$ ciphertexts, where~$q$ is the ciphertext modulus. 
Although $\cmt$ is an important matrix operation, 
we emphasize that it also has applications beyond $\ccmm$.

Finally, the main technical tool for our most general reduction is a generalization of (ring) GSW encryption~\cite{GSW13} to matrices, built on top of the \RLWE-type structures used in the $\cpmm$ approach, and an extension to Mv setting of the classical external product from \RGSW\ and \RLWE\ to \RLWE\ ciphertexts~\cite{BP16, CGGI17}.
All the various techniques developed for the $\cpmm$ algorithm extend to this situation, allowing for a wide range of efficient reductions. We also describe a method, relying on the \cmt algorithm, to obtain a \RGSW\ encryption of a matrix  from a \RLWE\ encryption of that matrix, in the case of a square $N\times N$ matrix.

The algorithms and the ciphertext formats they rely on are summarized in Table~\ref{tab:list_algorithms}.
Some tasks include others as particular cases. For instance, as $\ccmm$ encompasses all other problems, our $\ccmm$ algorithms can be used for all other problems. However:
\begin{itemize}
  \item[$\bullet$] the plaintext-ciphertext algorithms are more flexible in terms of dimension that they can handle; they are  also more efficient by a constant factor; 
  \item[$\bullet$] even though the Mv algorithms are particular cases of the MM algorithms, they are analyzed independently, as certain pre- or post-processings  which are negligible in the square MM situation become significant or even dominant in the Mv case;
  \item[$\bullet$] when both an \RLWE-based algorithm  and an \RLWE-and-\RGSW-based algorithm are available (such as for $\ccmm$), the former, when applicable, is usually more efficient, and the latter is more flexible in terms of dimensions. 
\end{itemize}

\begin{table}
    \centering
    \begin{tabular}{|c|c|c|c|}
    \hline
    Task    & Ciphertext format(s)      & Reference   \\ \hline 
    $\cpmm$ & $\GLWE_{d_1}$                          & Section~\ref{sse:pcmm}~or Section~\ref{sse:precomputation}$^\dag$\\\hline
    $\pcmm$ & $\GLWE_{d_2}$   & Sections~\ref{sec:transpose}~and~\ref{sse:pcmm}, or Section~\ref{sse:gsw}$^\dag$\\\hline

    $\ccmm$ & $\GLWE_{d_1}$~and~$\GLWE_{d_2}$                        & Section~\ref{sse:ccmm}\\\hline
    $\ccmm$ & $\GGSW_{d_1,d_2}$ and $\GLWE_{d_2}$     & Section~\ref{sse:gsw}\\\hline

    \end{tabular}
    \caption{List of reductions from encrypted linear algebra to cleartext linear algebra. The input matrices are $d_1\times d_2$ and $d_2\times d_3$ matrices, and taking $d_3=1$ (resp.\ $d_1=1$) give a matrix-vector (resp.\ vector-matrix) case. 
    The notation $\GLWE_{d}$  refers to Shared-a \RLWE\ if $d>N$, $\RLWE$ if $d=N$ and $\MLWE$ if $d<N$, and the notation~$\GGSW_{d,d'}$ is defined similarly (see Section~\ref{sec:gsw_format}). The $\GGSW$-based reduction takes an input matrix in $\GGSW_{d,d'}$ format and the other one in $\GLWE_{d'}$ format, and returns a matrix in $\GLWE_d$ format.
    The references with $\dag$ correspond to reductions that require precomputation on the plaintext matrix.}
    \label{tab:list_algorithms}
\end{table}

\paragraph{Implementations.}
We implemented several of our algorithms using the HEaaN library~\cite{heaanlib}. In Section~\ref{sec:experiments}, we provide experimental data that focuses on: (1)~the cost of $\cpmm$ with RLWE-based schemes and no precomputation
(2)~the cost of $\ccmm$ for large square matrices, and (3)~the cost of $\ccmm$ and $\ccmv$ using the RGSW-based approach.

\subsection{Technical overview}\label{sec:technical_overview}
For the sake of simplicity, in this overview, we assume that matrices are square. Let~$d$ denote their dimension. 
Let~$q \geq 2$, $N$ be a power of~2, $\R_N = \mathbb{Z}[X]/(X^N+1)$  and $\R_{q,N} = \mathbb{Z}_q[X]/(X^N+1)$. When there is no ambiguity regarding the value of $N$, we shall simply write $\R$ and $\R_q$.

\medskip
\noindent
{\bf The LZ algorithm.}
Our starting point is the algorithm from~\cite{LZ22}. Assume first that~$d=N$.
The matrix~${\bf M}$ is provided by~$d$ \RLWE\ ciphertexts~$(a_i,b_i)$, that encrypt the columns. We have, over~$\R_q$: 
\[
\forall i: a_i \cdot \sk + b_i \ \approx  \ \sum_j m_{j,i}X^j  \bmod q\enspace,
\]
where~$\sk \in \R$ is the secret key. We can rewrite the above in matrix form, as follows:
\begin{equation}
\label{eq:LZinput}
\rotmtx{\sk} \cdot \BA + \BB \ \approx  \ \BM \ \bmod q \enspace,
\end{equation}
where the $i$-th column of~$\BA$ (resp.\ $\BB$) consists of the coefficients of~$a_i$ (resp.\ $b_i$), for all $0\leq i < d$. We let $\rotmtx{\sk}$\label{not:toep} denote the matrix whose $i$-th column consist of the coefficients of $X^i \cdot \sk \in \R$, for all~$0 \leq i < d$. 
Note that the matrix~$\rotmtx{\sk}$ is structured, whereas~$\BA$ and~$\BB$ do not have a particular structure. 
Now, multiplying by the plaintext matrix~$\BU \in \mathbb{R}^{d \times d}$ on both sides of Equation~\eqref{eq:LZinput} gives: 
\begin{equation}
\label{eq:LZoutput}
 \rotmtx{\sk} \cdot \left(\BA \cdot \BU\right) + \left( \BB \cdot \BU\right) \ \approx \ \BM \cdot \BU \ \bmod q \enspace.
\end{equation}
Note that the error term hidden in the~$\approx$ symbol has also been multiplied by~${\bf U}$, which should be taken into account when setting parameters. Now, note that Equation~\eqref{eq:LZoutput} is of the same form as Equation~\eqref{eq:LZinput}. Defining~$a_i'$ (resp.\ $b_i'$) as the element of~$\R_{q}$ whose coefficients correspond to the $i$-th column of~${\bf A} \cdot {\bf U}$ (resp.~${\bf B} \cdot {\bf U}$) for all~$0 \le i < d$, we obtain that the~$(a_i,b_i)$'s are an encryption of~${\bf M} \cdot {\bf U}$. Overall, the $\cpmm$~${\bf M} \cdot {\bf U}$ reduces to the $\ppmm$'s ${\bf A} \cdot {\bf U}$ and ${\bf B} \cdot {\bf U}$ (modulo~$q$).

 When~$d = k N$ for some integer~$k \geq 1$, the approach generalizes as follows. Each column is encrypted using several \RLWE\ ciphertexts (in total, there are~$d^2/N$ of them). This leads to an equation of the form: 
\[
({\bf I} \otimes \rotmtx{\sk})\cdot \BA + {\bf B } \ \approx \ {\bf M} \ \bmod q \enspace .
\]
By multiplying on the right by~${\bf U}$, one again obtains a $\cpmm$ algorithm that consists of two $\ppmm$'s modulo~$q$.

\medskip
\noindent
{\bf Shared-$a$ \RLWE\ ciphertexts and large-dimensional $\cpmm$.}
To decrease the cost of large-dimensional $\cpmm$, 
we use multi-secret \RLWE\ ciphertexts sharing their $a$-parts. 
The matrix $\textbf{M}$ is provided by $d^2/N$ multi-secret \RLWE\ ciphertexts $(a_{i}, b_{j,i})_{0\le i<d, 0\le j<k}$. Note that the number of~$a_i$'s is only~$d$. We have,  over~$\R_q$: 
\[
\forall i,j: a_i \cdot \sk_j + b_{j,i} \ \approx  \ \sum_t m_{Nj+t,i}X^t \bmod q\enspace,
\]
where the $\sk_j$'s in~$\R$ are the secret keys. In matrix form, this gives: 
\begin{equation}\label{eq:msrlwe-in}
    \begin{bmatrix}
        \rotmtx{\sk_0}\\ \hline
        \vdots\\ \hline
        \rotmtx{\sk_{k-1}}
    \end{bmatrix} \cdot \BA 
    + {\bf B} \ \approx  \ {\bf M} \ \bmod q \enspace,
\end{equation}
where ${\bf A}$ (resp.\ $\textbf{B}$) consists of the coefficients of~$a_i$ (resp.\ $b_{j,i}$) and is of size $N\times d$ (resp.\ $d\times d$). As in \RLWE\ case, we multiply $\textbf{U}$ on both sides of the equation: 
\begin{equation*}
    \begin{bmatrix}
        \rotmtx{\sk_0}\\ \hline
        \vdots\\ \hline
        \rotmtx{\sk_{k-1}}
    \end{bmatrix} \cdot (\BA \cdot \BU) 
    + ({\bf B} \cdot \BU) \ \approx  \ ({\bf M}\cdot \BU) \ \bmod q \enspace.
\end{equation*}
Since it has the same form as the original matrix equation, we can view it as $d^2/N$ multi-secret \RLWE\ ciphertexts $(a_{i}', b_{j,i}')$ with respect to secret keys $\sk_0,\ldots,\sk_{k-1}$. 
This gives us a reduction from $\cpmm$ (with multi-secret \RLWE) of $\textbf{U}$ and $\textbf{M}$ to the $\ppmm$'s of $\textbf{U}$ with~$\textbf{A}$ and~$\textbf{U}$ with~$\textbf{B}$. Importantly, the number of columns of~$\textbf{A}$ is smaller than that of~$\textbf{M}$, by a factor $k$, giving a significant saving compared to~\cite{LZ22} for large~$d$.

\medskip
\noindent
{\bf \MLWE\ ciphertexts and small-dimensional $\cpmm$.}
For smaller matrices, with $dk=N$ for some integer~$k$,
\RLWE\ ciphertexts contain several columns of the input ciphertext matrix. Operating directly on these ciphertexts for the $\cpmm$ seems difficult, as it is likely to involve expensive key-switching operations.  
To handle this setup, we rely on \MLWE\ ciphertexts~\cite{BGV14,LS15}.
The matrix $\textbf{M}$ is provided by $d$ \MLWE\ ciphertexts $({\bf a}_{i}, b_{i})_{0\le i<d}$, i.e., we have,   over~$\mathbb{Z}_q[X]/(X^d+1)$: 
\[
\forall i,j: \langle \bsk, {\bf a}_i \rangle + b_{i} \ \approx  \ \sum_t m_{j,i}X^j \bmod q\enspace,
\]
where $\bsk=(\sk_0,\ldots,\sk_{k-1})\in (\mathbb{Z}_q[X]/(X^d+1))^{k}$ is the secret key. We can rewrite the above in matrix form as follows:
\begin{equation*}
     [\rotmtx{\sk_0}|\ldots|\rotmtx{\sk_{k-1}}]\cdot \BA  + \BB  \ \approx  \ \BM \ \bmod q \enspace,
\end{equation*}
where ${\bf A}$ (resp.\ $\textbf{B}$)  consists of the coefficients of~$a_i$ (resp.\  $b_{i}$) and is of size $N\times d$ (resp.\ $d\times d$). Again, multiplying both sides of the equation by~$\textbf{U}$ gives a $\cpmm$ algorithm. 
The result corresponds to $d$ \MLWE\ ciphertexts~$({\bf a}_{i}', b_{i}')$.

\medskip
\noindent
{\bf Shared-$a$ ciphertexts and $\cpmm$ with precomputation.}
Allowing precomputation, we can even reduce $\cpmm$ to a single $\ppmm$ modulo $q$ during the online phase. For ease of discussion, assume that $d=N$. Using again a shared-$a$ encryption, we can rewrite the multi-secret ciphertexts~$(a, b_i)$, which encrypt the columns of a matrix $\textbf{M}$ under secret keys $\sk_i$, in matrix form, as follows:
\begin{equation}
\label{eq:msrlwe-in2}
 \rotmtx{a} \cdot {\bf S} + {\bf B} \ \approx  \ {\bf M} \ \bmod q \enspace,
\end{equation}
where the $i$-th column of~${\bf S}$ (resp.\ ${\bf B}$ and~$\rotmtx{a}$) consists of the coefficients of~$\sk_i$ (resp.\ $b_i$ and~$x^i \cdot a \in \R_{q}$), for all~$0 \leq i < d$. 
Note that this is a different form of matrix equation from Equation~\eqref{eq:msrlwe-in}. 
By multiplying $\textbf{U}$ to both sides, we get:
\[
\rotmtx{a} \cdot ({\bf S}\cdot {\bf U}) + ({\bf B}\cdot{\bf U}) \ \approx  \ ({\bf M}\cdot{\bf U}) \ \bmod q \enspace.
\]
This has the same form as Equation~\eqref{eq:msrlwe-in2}, and hence we can view it as multi-secret \RLWE\ ciphertexts under secret key ${\sk}_i'$, where ${\sk}_i'$ consists of the $i$-th column of~$\textbf{S}'=\textbf{S}\cdot\textbf{U}$.
If the computation continues with a specific secret key, e.g., $\sk$, we can switch the keys from $\sk_i'$ to~$\sk$ after $\cpmm$, using key-switching. 
The switching keys can be precomputed  (at the cost of a $\cpmm$ modulo~$pq$ for some auxiliary modulus $p \approx q$, as switching keys require an auxiliary modulus), and the key-switching cost $\widetilde{O}(N^2)$ is relatively small compared to the cost of one~$\ppmm$.

For matrices of size $d>N$ or $d<N$, we describe analogous approaches in Section~\ref{sse:precomputation}. For $d>N$, we use multi-secret \RLWE, and for $d<N$, we use multi-secret \MLWE. In all cases, in terms of online cost, we reduce one $\cpmm$ to one $\ppmm$. 

\medskip
\noindent
{\bf $\cpmm$ with a single floating-point $\ppmm$.}
So far, all $\ppmm$'s are modular $\ppmm$s, i.e., matrix products over $\ZZ_q$ for some integer $q$. Modular $\ppmm$ is typically slower than floating-point $\ppmm$, as it does not directly benefit from the high-performance of numerical BLAS libraries, even though it can be reduced to several floating-point $\ppmm$. 

However, for approximate homomorphic encryption, the least significant bits of the $b$-parts of the \RLWE\ ciphertexts contain numerical and \RLWE\ errors that are not relevant. Thus, to multiply $\textbf{U}$ and~$\textbf{B}$, we use floating-point $\ppmm$ on the most significant bits of~$\textbf{B}$, instead of computing $\ppmm$ modulo~$q$. 

We can apply this optimization to all the $\cpmm$ algorithms mentioned above. 
It is most effective for large-dimensional matrices (as the
cost of computing~${\bf B} {\bf U}$ is then higher than that 
of computing~${\bf A} {\bf U}$) and for the algorithms using precomputation.  
In particular, for $\cpmm$ with precomputation, this optimization provides a reduction to a single floating-point $\ppmm$.

\subsubsection{Ciphertext matrix transpose.}
Before describing the first reduction from $\ccmm$ to $\ppmm$, we introduce its main technical component, namely, ciphertext matrix transpose (\cmt). 
\cmt takes as inputs ciphertexts encrypting a matrix row-by-row (resp. column-by-column), and returns ciphertexts encrypting the same matrix column-by-column (resp. row-by-row).
In this work, we focus on large matrices where each ciphertext encrypts  one single row (or column). We  note that \cmt is an interesting problem beyond its application as a tool for $\ccmm$.
For example, consider FHE scenarios with multiple parties, such as multi-party HE~\cite{MTBH21} and proxy re-encryption~\cite{GSB+23}. Each client party encrypts its data with multiple features in a ciphertext, sends ciphertext(s) to the computing server, and the server computes tasks over the aggregated ciphertexts.
During the computation, the server may need to convert the client-wise  ciphertexts into feature-wise  ciphertexts. Another $\cmt$ can convert  feature-wise ciphertexts to client-wise ciphertexts before sending the results back to each client.
This is exactly $\cmt$, and we can directly apply our \cmt algorithm to these scenarios. We refer to 
Figure~\ref{fig:data-transposition} for an illustration.

\begin{figure}[ht!]
    \centering
    \includegraphics[width=10cm]{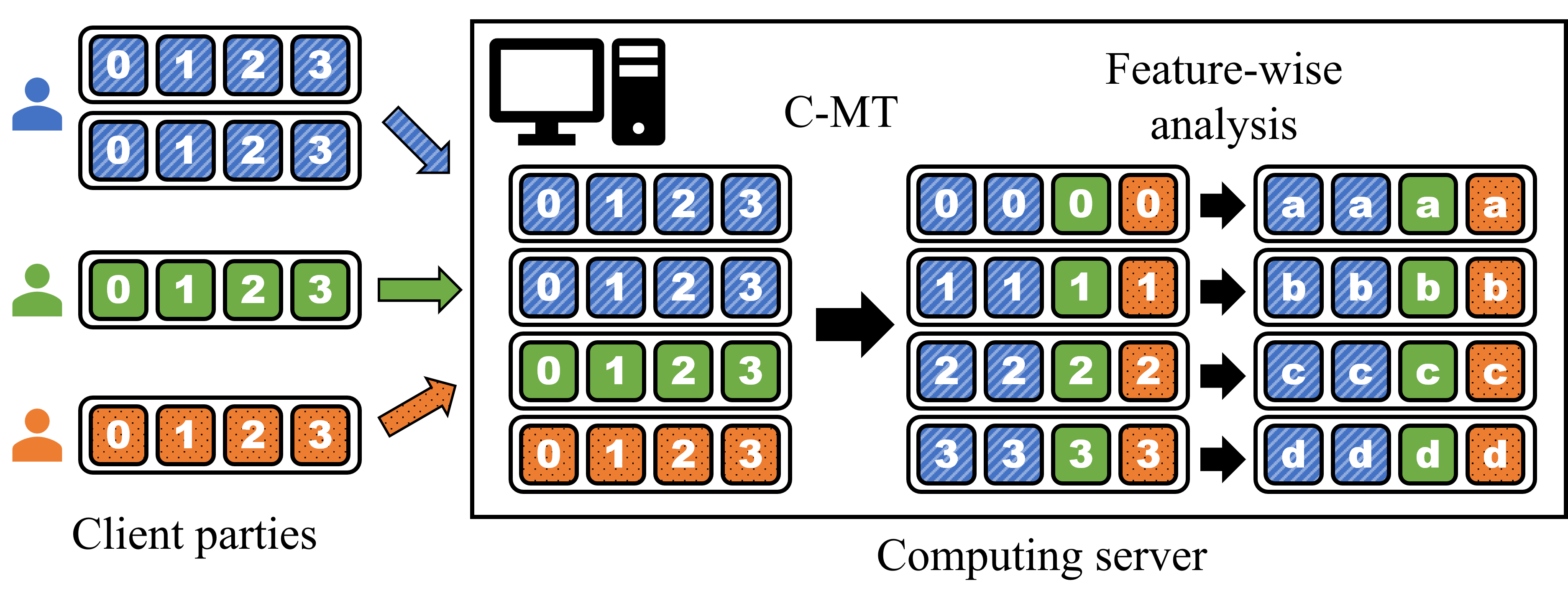}
    \caption{Visualization of a $\cmt$ application to re-formatting client-wise  ciphertexts into feature-wise  ciphertexts. }
    \label{fig:data-transposition}
\end{figure}

We now survey our \cmt algorithm. 
We start from the well-known observation on the ring $\ring_q=\mathbb{Z}_q[X]/(X^N+1)$ that
$$N\cdot m_j = \sum_{\sigma\in\gal{\ring/\mathbb{Z}}} \sigma(X^{-j}\cdot m)$$
for $0 \leq j<N$,
where $m(X)=\sum_{0 \leq i <N} m_i X^i$ is an element in $\ring_q$, and $\gal{\ring/\mathbb{Z}}$ is the group of automorphisms of~$\mathbb{Z}[X]/(X^N+1)$ induced by the Galois automorphisms of~$\mathbb{Q}[X]/(X^N+1)$ that fix~$\mathbb{Q}$. By abuse, we let the latter act on $\ring_q$; if $m = \sum_{0 \leq i <N} m_i X^i  \in \ring_q$, we define $\sigma(m) = \sum_{0 \leq i <N} m_i \sigma(X)^i \in \ring_q$. 

Given $N$ plaintexts $\{m_i=\sum_j M_{i,j}X^j\}_{0\le i<N}$ in $\ring_q$ that each encode one row (or column) of an $N\times N$ matrix $\textbf{M}$, the plaintexts $\{m'_j\}_{0\le j<N}$ that encode the transpose matrix $\textbf{M}^t$ are: 
\begin{equation*}
\begin{split}
m'_j &= \sum_{0 \leq i <N} M_{i,j}X^i = \sum_{0 \leq i <N} \left(N^{-1}\cdot\sum_{\sigma\in\gal{\ring/\mathbb{Z}}}\sigma({X^{-j}\cdot m_i})\right)\cdot X^i\\
     &= N^{-1}\cdot\sum_{\sigma\in\gal{\ring/\mathbb{Z}}} \sigma\left(\sum_{0 \leq i <N} m_i\cdot \sigma^{-1}( X^i)\right) \sigma(X^{-j})
\end{split}
\end{equation*}
for each $0 \leq j < N$. The goal of $\cmt$ is to obtain $\{m'_j\}_{0\le j<N}$ from $\{m_i\}_{0\le i<N}$ in  encrypted state.

We proceed in three steps: 
\begin{enumerate}
    \item Compute
$\{\widetilde{m}_\sigma = \sum_i m_i\cdot\sigma^{-1}( X^i)\}_{\sigma\in\gal{\ring/\mathbb{Z}}}$ from $\{m_i\}_{0\le i<N}$;
    \item Compute
$\{\Bar{m}_\sigma = \sigma(\widetilde{m}_\sigma)\}_{\sigma\in\gal{\ring/\mathbb{Z}}}$ using automorphisms;
    \item Computing $\{m'_j = \sum_\sigma {\Bar{m}_\sigma \cdot \sigma (X^{-j})}\}_{0\le j<N}$ from $\{\Bar{m}_\sigma\}_{\sigma\in\gal{\ring/\mathbb{Z}}}$. 
\end{enumerate}

The second step is achieved in  encrypted state using $N$ key-switching operations, which costs $\widetilde{O}(N^2)$ mod-$q$ arithmetic operations. 
For Steps~$1$ and~$3$, we devise a fast divide-and-conquer algorithm, reducing the cost to $O(N^2\log{N})$ mod-$q$ arithmetic operations. 
Putting it together, the overall cost of the $\cmt$ algorithm is~$\widetilde{O}(N^2)$. 

\subsubsection{$\pcmm$.}
Using the \cmt algorithm, we can reduce $\pcmm$ to $\cpmm$. We note that $\textbf{U}\cdot \textbf{M}=(\textbf{M}^t\cdot \textbf{U}^t)^t$ for matrices $\textbf{U}$ and $\textbf{M}$, even when $\textbf{M}$ is encrypted. 
To be more precise, given a plaintext matrix $\textbf{U}$ and encrypted matrix $\enc{\textbf{M}}$, we obtain the encrypted matrix $\enc{\textbf{U}\cdot \textbf{M}}$ as follows:
first, one computes $\enc{\textbf{M}^t}$ with a \cmt; then, using a $\cpmm$ algorithm, one multiplies it with the plaintext matrix $\textbf{U}^t$, obtaining $\enc{\textbf{M}^t\cdot \textbf{U}^t}$; finally, one transposes the result to get the desired $\enc{\textbf{U}\cdot\textbf{M}}$. 

As a consequence, for square matrices, we can convert all $\cpmm$ algorithms into $\pcmm$ algorithms by using two \cmt's. We note that the costs of \cmt for $d\times d$ matrix is $\widetilde{O}(dN)$, which is  relatively minor compared to $\cpmm$ unless the matrix dimension~$d$ is smaller than $N^{1/2}$. 
The non-square cases  can be handled using the GSW-based approach that we shall describe later.

\subsubsection{Ciphertext-ciphertext matrix multiplication.} 
We now outline the reduction from $\ccmm$ to $\ppmm$'s. The basic idea is to multiply two \RLWE-based encryptions in matrix forms 
\[
\toep{\sk}\cdot \textbf{A} + \textbf{B}~\approx~\textbf{M} \bmod q \enspace,
\]
while preserving the Toeplitz-like structure using $\cmt$. 
Assume that we are given two bundles of ciphertexts that encrypt each column of matrices $\textbf{M}$ and $\textbf{M}'$, respectively. In matrix form, we are given matrices~$\textbf{A}$, $\textbf{B}$, $\textbf{A}'$ and $\textbf{B}'$ such that in modulo $q$:
$$
\toep{\sk}\cdot \textbf{A} +\textbf{B}~\approx~\textbf{M} \bmod q~~~\text{and}~~~
\toep{\sk} \cdot \textbf{A}'+\textbf{B}'~\approx~\textbf{M}' \bmod q \enspace.
$$
Using the $\cmt$ algorithm, we transpose the column-wise encryption $(\textbf{A}', \textbf{B}')$ of $\textbf{M}'$. 
In matrix form, the \cmt algorithm outputs the row-wise encryption $(\wBA', \wBB')$ of $\textbf{M}'$ that satisfies
$$
\toep{\sk}\cdot\textbf{A}+\textbf{B}~\approx~\textbf{M} \bmod q~~~\text{and}~~~
\wBA'\cdot\toep{\sk}^t+\wBB'~\approx~\textbf{M}' \bmod q\enspace.
$$
We multiply the  two matrix equation above to obtain (modulo~$q$)
\begin{equation*}
    \begin{split}
\textbf{MM}'
~&\approx~(\toep{\sk}\cdot\textbf{A}+\textbf{B})\cdot
(\wBA'\cdot\toep{\sk}^t+\wBB')\\
&=~\toep{\sk}\cdot\textbf{C}_{0,0}\cdot\toep{\sk}^t 
+ \toep{\sk}\cdot\textbf{C}_{0,1}
+ \textbf{C}_{1,0}\cdot\toep{\sk}^t
+ \textbf{C}_{1,1} \enspace,
    \end{split}
\end{equation*}
where $\textbf{C}_{0,0}=\BA\wBA'$, $\textbf{C}_{0,1}=\textbf{A}\wBB'$, $\textbf{C}_{1,0}=\textbf{B}\wBA'$ and~$\textbf{C}_{1,1}=\textbf{B}\wBB'$. 
We note that the Toeplitz format is preserved.

We view $(\textbf{C}_{0,0}, \textbf{0})$ as a row-wise encryption of $\textbf{C}_{0,0} \cdot \toep{\sk}^t$, and apply the \cmt algorithm to it. Then, we obtain a row-wise encryption $(\textbf{D}_0, \textbf{D}_1)$  of $\toep{\sk}\cdot\textbf{C}_{0,0}$. We rewrite it in matrix form as follows:
$$
\textbf{C}_{0,0} \cdot  \toep{\sk}^t~\approx~ \toep{\sk} \cdot \textbf{D}_0 + \textbf{D}_1 \bmod q\enspace.
$$
Similarly, we obtain $(\textbf{D}_2, \textbf{D}_3)$ such that 
$$
\textbf{C}_{1,0}\cdot \toep{\sk}^t ~\approx~ \toep\sk \cdot \textbf{D}_2 + \textbf{D}_3 \bmod q
$$
by applying \cmt to $(\textbf{C}_{1,0}, \textbf{0})$. Putting it together, we have
$$
\textbf{M}\textbf{M}'~\approx~\toep{\sk^2}\cdot \textbf{D}_0 + \toep\sk\cdot(\textbf{D}_1+\textbf{D}_2+\textbf{C}_{0,1})  + (\textbf{D}_3+\textbf{C}_{1,1}) \bmod q \enspace.
$$
After the column-wise relinearization, made of key-switchings from $\sk^2$ to~$\sk$, we finally obtain a matrix equation
$$\toep{\sk}\cdot \textbf{A}'' + \textbf{B}''~\approx~\textbf{M}\textbf{M}' \bmod q  \enspace.
$$
This is a matrix form of column-wise encryption of the product matrix $\textbf{M}\textbf{M}'$. 
This reduces $\ccmm$ to four $\ppmm$'s, three $\cmt$'s, and $N$ key-switchings.

\subsubsection{Lightweight \cmt and $\ccmm$ with small key sizes.} 
There is a potential concern about the large key size of our $\ccmm$ and $\cmt$ algorithms. The previous $\cmt$ algorithm requires $N$ evaluation keys for each of the $N$ homomorphic automorphisms. This leads to evaluation keys of size $\widetilde{\Omega}(N^2)$, which might be problematic as~$N$ can be large. 

To handle this difficulty, we describe a lightweight $\cmt$ algorithm that uses only three evaluation keys. The basic idea is to repeatedly update and use a single evaluation key for all homomorphic automorphisms. We need one evaluation key for all homomorphic automorphisms and two other keys to ``update'' the evaluation key. 
This idea is motivated by the hierarchical key management system in~\cite{LLKN23}.
While the update procedure requires additional computations, the asymptotic complexity is the same as the original algorithm.

A lightweight $\ccmm$ algorithm  follows directly from the lightweight $\cmt$ algorithm, requiring four evaluation keys, three of which are for the lightweight $\cmt$ algorithm and one for relinearization. 

 \subsubsection{General reduction using \RGSW-format ciphertexts.}
We describe the \RGSW-like approach, and start by recalling how scalar homomorphic computation is handled in~\cite{GSW13}. \RGSW\ encryption is compatible with the use of \emph{gadget decomposition}; our approach extends \emph{mutatis mutandis} to that setting (see, e.g., \cite{BEHZ16} for RNS gadget decomposition). 
To be more precise, we can multiply a \RGSW-format ciphertext modulo~$pq$ with a \RLWE-format ciphertext in modulus~$q$, where~$p$ is the auxiliary modulus, similarly to \RLWE\ key switching introduced in~\cite{GHS12}.

Let  $p$ be an integer. A \RGSW\ encryption of $u \in \R_{q, N}$ under the secret key $\sk \in \R_N$ consists of two pairs $(a, b), (a', b') \in \R_{pq, N}^2$ such that 
\[
  p u \approx a \cdot \sk + b \bmod pq~~~\mbox{and}~~~p u \cdot \sk \approx a' \cdot \sk + b' \bmod pq \enspace. 
\]
We now recall the so-called external product between a  $\RGSW$ ciphertext and a \RLWE-ciphertext that produces a  \RLWE-ciphertext.  Let $(a, b), (a', b')\in\R_{pq, N}^2$ be a \RGSW\ encryption of~$u$ and $(\alpha, \beta) \in\R_{q,N}^2$ be a \RLWE\ encryption of $v \in\R_{q,N}$. Then
\[
   (\alpha', \beta') = \left(\left\lfloor \frac{a' \alpha + a \beta}{p} \right\rceil, \left\lfloor \frac{b' \alpha + b \beta}{p} \right\rceil \right) \in \R_{q,N}^2 
\]
is such that~$\alpha' \cdot \sk + \beta' \approx u\cdot v \bmod q$. For the error to be sufficiently small, it is required that~$p \approx q$. Note that this can be viewed as reducing a homomorphic multiplication with modulus~$q$ to four plaintext multiplications modulo~$pq$. 

We extend \RGSW\ encryption to matrices in the form of a pair of matrix encryptions\footnote{A more general version of the RGSW format consists of encryptions of $p\BM$ and $p \BM \BS'$ under the secret matrix~$\BS$, for $\BS'\ne \BS$, like RLWE key-switching keys; we will use this more general version in Section~\ref{sse:gsw}.} 
\[
    p\BM  \approx \BS  \BA_0 + \BB_0  \bmod pq~~~\mbox{and}~~~p\BM \BS  \approx \BS   \BA_1 + \BB_1 \bmod pq \enspace. 
\]
These encryptions are compatible with all the various matrix encryption formats (shared-$a$-based, \MLWE-based, shared-secret-based) considered for $\cpmm$. The external product can be extended to this setting to handle the product of an \RGSW-encrypted matrix by an \RLWE-encrypted vector, so that, if $\bv$ is a vector encrypted as $(\ba, \bb)$, in the sense that $\BS\cdot \ba + \bb \approx \bv \bmod q$, we have that
    \[
    (\ba', \bb') = \left( \left\lfloor \frac{1}{p}(\BA_1\ba + \BA_0\bb)\right\rceil ,  \left\lfloor \frac{1}{p}(\BB_1\ba+\BB_0\bb)\right\rceil  \right)
    \bmod q
    \]
is an \RLWE-encryption of the product $\BM \cdot \bv$ using the same encryption format as~$\BM$. This extends from a vector  to a matrix by viewing the matrix as a family of vectors. 

This yields a reduction of any single linear algebra operation to at most four modular operations of similar type on plaintext data, with a modulus that is twice larger. The dimensions may vary depending on the encryption format.

\subsection{Related works}
There has been a massive effort on encrypted linear algebra, with a large body of works considering the encrypted Mv and MM problems. 
Among those, we may mention~\cite{JKLS18,HCJG24,MMJG24,ZL23,HHW+21,GQH+24,ZLW23,CYWLF24,RT22,SCCTA20,HWC24,JLKN+22} in the MM case, and~\cite{HS14, HS18, ACLS18, ALPRSSY21, CHAM23, Rhombus24,SCCTA20} in the Mv case. None of these provides reductions to cleartext Mv/MM, preventing from using highly efficient cleartext linear algebra like BLAS. 
Their limitation stems from the intertwining of the linear algebra and homomorphic operations. 

We are aware of only two previous works that provide algorithms for linear algebra which proceed by reduction to cleartext data. We already mentioned~\cite{LZ22}, which provides a reduction from one $\cpmm$ to two $\ppmm$'s. However, it requires matrix dimensions above or equal to the RLWE ring-degree. In practice, to enable computations with a reasonable precision and 128-bit security, the matrix dimension should be at least~$2^{12}$. 
Another work~\cite{GHV10} adapts the BGN cryptosystem~\cite{BGN05} to LWE. It also provides a reduction from $\ccmm$ to $\ppmm$'s. However, this is a pre-FHE scheme that supports only a single matrix-matrix multiplication and
no further homomorphic computations beyond additions.  In particular, the output ciphertexts do not have the same structure as its input ciphertext. In contrast, our $\ccmm$ algorithms can be used in an FHE computation.

Following the initial submission of this article, and during the review process, two relevant works on encrypted matrix multiplication appeared. In~\cite{GL25}, the authors proposed a homomorphic encryption scheme that supports fast encrypted matrix arithmetic, including matrix-matrix multiplications and matrix transposition. Both~\cite{GL25} and~\cite{CKL25} consider fast batch $\ccmm$, i.e., $\ccmm$ on multiple (pairs of) inputs in parallel. We observe that the solution from~\cite{CKL25} is directly compatible with the CKKS scheme, while that of~\cite{GL25} uses a different algebraic framework. These are reductions to batch $\ppmm$'s.

\subsection{Additional contributions compared to the conference publications}
This work is an extended version of articles published by the same authors in the proceedings of the Crypto~2024 conference~\cite{BCHPS24} and by the fourth author in the proceedings of the Eurocrypt 2025 conference~\cite{Park25}. It includes the fast encrypted matrix multiplication algorithms of both articles, and the reductions from encrypted matrix multiplications to cleartext matrix multiplications. 

Building on the two conference articles, this work summarizes and generalizes the framework for homomorphic linear algebra across various ciphertext formats, including RGSW format. 
Additionally, it introduces new algorithms and reductions for performing matrix-vector multiplications on encrypted data using the RGSW format in Section~\ref{sse:gsw}. 
This work also improves on~\cite{BCHPS24} by explicitly describing several strategies for reducing modular $\ppmm$ to floating-point $\ppmm$ and providing comprehensive experiments for the proposed algorithms in Section~\ref{ssec:impl-aspects}.
\section{Preliminaries}\label{sec:prelim}
\label{sec:omega}\label{sec:R}
Vectors (resp.\ matrices) are denoted in bold lower-case (resp.\ upper-case) letters. Unless explicitly stated otherwise,we consider column vectors. We let $\langle \cdot,\cdot \rangle$ denote the formal dot product of two vectors: for any ring~$R$ and dimension~$k$,
for $\bx = (x_i)_i$ 	and~$\by = (y_i)_i \in R^k$, we write
$\langle \bx, \by \rangle = \sum_i x_i y_i$. 
For~$N$ a power of two and~$q \geq 2$, we define $\R_N = \ZZ[X] / (X^N+1)$ and\label{not:R} $\R_{q,N} = \ZZ_q[X] / (X^N+1)$. We skip the index~$N$ when it is clear from the context.  For $v\in\R_{q,N}$, we let $\VVec{v} \in \ZZ_q^{N}$ be the coordinate vector of~$v$ in the monomial basis. 
Conversely, if $\bv \in \ZZ_q^{N}$, we write $\Pol{\bv} = \sum_{0 \leq i < N} v_{i} X^i\in\R_{q,N}$. As in Section~\ref{sec:technical_overview}, the Toeplitz matrix of $m=\sum_i{m_iX^i} \in \R$  is defined as $$\toep{m}=
\begin{bmatrix} 
m_0&-m_{N-1}&\ldots&-m_{1}\\
m_{1}&m_0&\ldots&-m_{2}\\
\vdots&\vdots&\ddots&\vdots\\
m_{N-1}&m_{N-2}&\ldots&m_{0}
\end{bmatrix} \enspace.$$

For a real number~$x$, we let 
$\lfloor x \rceil$ denote the integer part of~$x + 1/2$.
The notation~$\log$ refers to the base-2 logarithm. We let~$\omega$ refer to the complexity exponent~\label{not:omega} of square matrix multiplication and assume that~$\omega>2$. Finally, in complexity analyses, we use the notation $\widetilde{O}(f(n_1, \dots, n_\ell))$ to mean 
\[
\bigcup_{(k_1, \dots, k_\ell) \in \ZZ_{\ge 0}^{\ell}} O\left(f(n_1, \dots, n_\ell) (\log n_1)^{k_1} \dots (\log n_\ell)^{k_{\ell}}\right).\]

\subsection{\RLWE\ and \MLWE\ ciphertext formats}
\label{sec:RLWE}

For $N$ a power of two  and $q\geq 2$ an integer, \label{not:RLWE} a $\RLWE_{q,N}^{\phantom{2}}$ ciphertext for a plaintext~$m \in \R_{q,N}^{\phantom{2}}$ under a secret key~$\sk \in \R_{N}^{\phantom{2}}$ is a pair $(a,b) \in \R_{q,N}^2$ such that  $a\cdot \sk +b \approx m \bmod q$. In this work, we consider plaintexts~$m$ that are:
\begin{itemize}
    \item[$\bullet$] small, i.e., have coefficients with small absolute values compared to~$q$; 
     \item[$\bullet$] approximate, i.e., $m$ could as well be $m+e$ for a small~$e$. 
\end{itemize}
Such ciphertexts can be generated in a symmetric manner, if the encryptor knows~$\sk$, or in an asymmetric manner if the encryptor knows encryptions of~$0$. The ciphertexts are indistiguishable from uniform for any adversary that does not know~$ \sk$, under the RLWE hardness assumption~\cite{SSTX09,LPR10}.

\smallskip

 The $\MLWE$ ciphertext format~\cite{BGV14,LS15} \label{not:MLWE} is the following generalization. Let~$k \geq 1$. An $\MLWE_{q,N}^{(k)}$ ciphertext for a plaintext~$m \in \R_{q,N}^{\phantom{2}}$ under a secret key~$\bsk \in \R_{N}^k$ is a pair $({\bf a},b) \in \R_{q,N}^k \times \R_{q,N}^{\phantom{k}}$ such that  $\langle {\bf a},  \bsk \rangle +b \approx m \bmod q$. Here~$k, N$ and~$q$ are  respectively referred to as (module) rank, (ring) degree and modulus. Note that $\MLWE_{q,N}^{(1)}$ and $\RLWE_{q,N}^{\phantom{(1)}}$ coincide.

\subsection{The CKKS  scheme}\label{sec:keyswitch}\label{sec:s2c}\label{sec:ecdslot}\label{sec:ls2c}
CKKS~\cite{CKKS17} is an RLWE-based FHE scheme that performs approximate computations on real and complex numbers. The CKKS 
plaintexts are elements of~$\R_N$. 
Even though the main encoding of data into plaintexts used in the context of CKKS is the \emph{slot-encoding}, in the present work, we only consider the \emph{coefficient-encoding}, which we now describe.

\smallskip
\noindent
{\bf Encoding.}
For $z\in \mathbb{R}^N$, we define 
\[ \Ecd_{\coeff}(z) = \sum_{0 \leq i<N} \lfloor \Delta \cdot z_i \rceil X^i \in \R \enspace, \]
where~$\Delta$ is a scaling factor.
The corresponding decoding function is defined, for $p = \sum_{0 \leq i<N} p_i X^i \in \R$, by 
\[
\Dcd_{\coeff}(p) = \left(\frac{p_i}{\Delta}\right)_{0\le i < N} \in \RR^{N}\enspace. 
\]

Coefficient encoding is typically mostly used for internal operations, in the bootstrapping process~\cite{CHKKS18b}, but will be the only one considered in this work. 

\smallskip
\noindent
{\bf Basic functionalities.}
CKKS provides the following elementary operations on keys and ciphertexts:
\begin{itemize}
\item[$\bullet$] $\KeyGen$. Given a security parameter~$\lambda$, $\KeyGen(1^\lambda)$ 
  returns public key $\pk  \in \R_{q}^2$ for some~$q$, and a secret key $\sk\in \R$.
\item[$\bullet$] $\Enc$.  For a plaintext~$m$ (which is obtained by an encoding function as explained above), $\Enc(\pk, m)$ returns a \RLWE-format ciphertext~$(a,b) \in \R_{q}^2$ such that~$a \cdot \sk +b = m +e \bmod q$, where $e$ has small-magnitude coefficients. 
\item[$\bullet$] $\Dec$. Given a ciphertext~$(a,b)  \in \R_{q}^2$, $\Dec(\sk, (a,b))$ returns $a \cdot \sk +b \bmod q$. 
\item[$\bullet$] $\SWKGen$. Given two integers $p, q$ and two keys $\sk, \sk'\in \R_{N}$, $\SWKGen(\sk, \sk')$
  returns a switching key $\swk_{q,p,\sk \rightarrow \sk'}$ from
  key~$\sk$ to key~$\sk'$ for ciphertexts modulo~$q$ and with auxiliary
  integer~$p$. It is of the form:
  \[ 
\swk_{q,p,\sk \rightarrow \sk'} 
= (a_{\swk}, b_{\swk}) 
= (a, -a \cdot \sk' + e + p \cdot \sk)
\in\R_{pq}^2 \enspace ,
\]
where~$e \in\R_N$ has small-magnitude coefficients.\footnote{Switching keys can also be defined with more ring
elements when using a so-called gadget rank larger than~$1$ (see~\cite{BEHZ16}). Our
techniques also carry over to that setup, but we restrict ourselves to
switching keys as above for the sake of simplicity.}

\item[$\bullet$] $\KeySwitch$\label{not:keyswitch}. Given as input a mod-$q$ ciphertext~$\ct$ for a message~$m$ under~$\sk$, 
$\KeySwitch(\swk_{q,p, \sk\rightarrow \sk'}, \ct)$ returns a mod-$q$ ciphertext~$\ct'$ for~$m$ under~$\sk'$. From a computational viewpoint, key-switching reduces to $O(1)$ polynomial multiplications over $\ZZ_{pq}$, hence has a complexity of~$O(N\log N)$ operations in $\ZZ_{pq}$, where the involved constant in the~$O(\cdot)$ notation depends on the so-called \emph{decomposition number} \cite{GHS12, HK20}. 
\end{itemize}

\noindent 
{\bf Error.} CKKS is an inherently approximate homomorphic encryption system. Errors occur from several sources:
\begin{itemize}
    \item[$\bullet$] using CKKS, input data is discretized, inducing an error which propagates throughout all the computation; this error is similar to the one occurring in a cleartext fixed-point evaluation; 
    \item[$\bullet$] as an RLWE-based cryptosystem, CKKS-encrypted material (ciphertexts, keys) includes a cryptographic error; 
    \item[$\bullet$] finally, homomorphic computations introduce further errors, the most important ones being rescaling errors and key-switching errors; these  can be controlled by tuning homomorphic parameters (scaling factor, auxiliary modulus, gadget decomposition). 
\end{itemize}   
 
CKKS encoding does not separate the errors from the various sources. In the sequel, we thus use ``error'' as a generic term for all those error types. As we discuss linear algebra, there is little to be done on the first type of error, the second one is intrinsic, and we mostly focus on the third one.  
We account for these errors using the $\approx$ symbol, which means that  equations are valid up to the accumulation of errors coming from the three sources  discussed in this paragraph. 

\pagebreak
\noindent
{\bf Homomorphic operations.}
We now describe a subset of the homomorphic operations provided 
by CKKS. 
\begin{itemize}
\item[$\bullet$] $\Add$. 
  Given two ciphertexts~$\ct$  and~$\ct'$ for the same modulus~$q$, $\Add$ computes and returns $\ct'' = \Add(\ct, \ct') = \ct + \ct'$. If~$\Dec(\sk, \ct) \approx m$ and~$\Dec(\sk, \ct') \approx m'$, then we have~$\Dec(\sk, \ct'') \approx m + m'$. 
\item[$\bullet$] $\PCMult$. Given a ciphertext $\ct = (a, b) \in \R_{q}^2$ and a monomial $X^t \in R$, $\PCMult$ computes and returns $\ct' = (X^t \cdot a, X^t \cdot b) \in \R_{q}^2$. If $\Dec(\sk, \ct) \approx m$, then $\Dec(\sk, \ct') \approx X^t \cdot m$. We restrict to this particular case of plaintext-ciphertext multiplication as this  is the only one we require. Note that it does not  increase the error. 
\item[$\bullet$] $\Aut$. Given $\ct$ such that $\Dec(\sk, \ct) \approx \Ecd_{\coeff}(x_0, \ldots, x_{N-1})$ and an odd index $\ell$, $\Aut$ computes and returns~$\ct'$ such that $\Dec(\sk, \ct') \approx \Ecd_{\coeff}((-1)^{\lfloor \ell' i / N \rfloor}\cdot x_{\ell' i \bmod N})_{0\le i < N}$, where $\ell'$ is the inverse of~$\ell$ modulo~$2N$. This requires an automorphism key~$\rk_\ell$, which is a switching key from $\sk(X^\ell)$ to $\sk$. We assume that the corresponding switching key is given as an input, and we omit it for the sake of readability.

\item[$\bullet$] $\relin$. Given a ciphertext $\ct$ that decrypts to $m$ under the secret $\sk^2\in\ring$, $\relin$ computes and returns~$\ct'$ that decrypts to~$m$ under the secret $\sk\in\ring$. 
It requires a relinearization key, namely a switching key from~$\sk^2$ to~$\sk$. 
\end{itemize}

We shortly discuss levels and rescaling,  which is required in order to correct the scaling factors after a multiplication of two encoded objects. As $\Ecd(x) \approx \Delta \cdot x$, the multiplication of two encodings leads to a result that is scaled by a factor~$\Delta^2$:
\[
\Ecd(x) \times  \Ecd(x') \ \approx \ 
\left(\Delta \cdot \Pol{x}\right)\times  \left(\Delta \cdot \Pol{x'}\right) \ \approx \ \Delta^2 \cdot (\Pol{x} \times \Pol{x'})\enspace ,
\]
where~$\times$ refers to the polynomial product in $\R$. In order to restore a $\Delta$-scaled encoding,
one can divide this result by~$\Delta$. At the ciphertext level, this is approximately achieved by means of the following function.
\begin{itemize}
\item[$\bullet$] \label{not:rescale}$\Rescale_{q\rightarrow q'}$. Given $q > q'$ such that $q/q' \approx \Delta$ and a ciphertext~$\ct$ modulo~$q$, $\Rescale$ computes and returns
  $\lfloor (q'/q) \cdot \ct \rceil \in \R_{q'}^2$. If~$\ct$ decrypts to~$m$, then so does~$\ct'$ (up to some error). Note that the ciphertext modulus has changed from~$q$ to~$q'$.\footnote{Most CKKS descriptions usually account for this change through the notion of level. As all our algorithms use one single level, we restrict to the two moduli $q$ and $q'$; the reader more familiar with the level view should think of moduli $Q_{\ell}$ and $Q_{\ell-1}$, for some integer $\ell$.}  
\end{itemize}
\section{Formats}
We turn to a discussion of encryption formats for vectors $\bv \in \RR^{d_1}$ encoded as plaintext vectors $\lfloor \Delta \bv \rceil \in \ZZ_q^{d_1}$ and matrices $\BM \in \RR^{d_1\times d_2}$ encoded as plaintext matrices as $\lfloor \Delta \BM \rceil \in \ZZ_q^{d_1\times d_2}$. To ease notations, we drop the~$\lfloor \cdot \rceil$ and shall use approximate equalities  to account both for the encoding error (bounded from above by~$1/2$) and cryptographic error stemming from \RLWE\ encryption. The approximate equality $u \approx v \bmod q$ means that the (unique) representative of $u-v \in \RR/q\ZZ$ in $(-q/2, q/2]$ is small. 

Our goal is to describe and characterize a number of encryption formats for vectors of matrices, in order to be able to efficiently handle  various dimensions both smaller and larger than the RLWE ring-degree. Ultimately, these formats also yield a simple and uniform algebraic description, with structure constraints on one of the matrices. In order for the reader to be able to spot at a glance the structure constraints, we  designate the matrices to which these constraints apply by a $\isst{}$ symbol. 

We shall start with \RLWE\ formats, and prove that most of these formats, in their column-based version where each ciphertext encrypts a single column or a portion of it, can be described by approximate linear algebra identities of the form
\begin{align}
    \isst{\BS} \cdot \ba + \bb & \approx \Delta \cdot \bv \bmod q \mbox{\ (for vectors)} \enspace, \label{eq:enc_vec}\\
    \isst{\BS} \cdot \BA + \BB & \approx \Delta \cdot \BM \bmod q \mbox{\ (for matrices)}\enspace, \label{eq:enc_mat}
\end{align}
where the matrix $\isst{\BS}$ has a structured shape which characterizes a given format: conversely, Equation~\eqref{eq:enc_vec} (resp.\ Equation~\eqref{eq:enc_mat}) with a matrix~$\isst{\BS}$ having the right structure shows that $(\ba, \bb)$ (resp.\ $(\BA, \BB)$) is an encryption of a vector (resp. matrix) in the corresponding format. This result is formalized in Lemmas~\ref{le:vec_encodings} (for vectors) and~\ref{le:mat_struct_s} (for matrices). 

Two of our matrix formats, which we shall identify as ``structured-$\BA$ formats'' swap the role of $\BA$ and $\BS$ through a different use of the shared-$a$ technique; in that case, the identity characterizing the column version will be
\[
\isst{\BA} \cdot \BS + \BB \approx \Delta \cdot \BM \bmod q \enspace, 
\]
where $\isst{\BA}$ is structured while $\BS$ may be arbitrary. This is formalized in Lemma~\ref{le:mat_struct_a}. 

These formats allow for the most efficient reductions, in the context where the matrix $\BU$ by which we want to multiply $\BM$ is cleartext and known beforehand, and precomputations are allowed. 
All those \RLWE\ formats have row versions, which are obtained by transposing the identities above; the structure of the matrix $\BS$ (or~$\BA$) is, in that case, also transposed. 
Finally, we shall also derive a \RGSW-version of those \RLWE\ formats.

 \subsection{Column vector encryption formats}\label{sec:vec_encryption formats}
We shall separate the discussion into three cases, depending on the dimension $d_1$ of the vector: $d_1 = N$, $N < d_1$ and $d_1 < N$. In the last two cases, we shall assume that the vector is padded so that $N | d_1$ (resp.~$d_1 | N$~--~note that when $d_1 \in (N/2, N)$, padding leads to~$d_1 = N$). As CKKS uses power-of-two ring degrees~$N$, this can be achieved at the cost of an increase of~$d_1$ by a factor~$\leq 2$.

\subsubsection{\RLWE\ format.}\label{se:vec_rlwe}
We start with the case~$N = d_1$. 
The encoded plaintext vector $\lfloor \Delta \bv \rceil \in \ZZ_q^{N}$ is associated to the plaintext polynomial $\Pol{\lfloor \Delta \bv \rceil}$, which we represent as $(a, b)$ using \RLWE\ encryption, namely 
\[
a\cdot \sk + b \approx \Delta\cdot \Pol{\bv} \bmod q \enspace. 
 \]
For $s \in \ring_N$, $\toep{s}$ is the matrix of the linear map $u \mapsto u\cdot s$ of $\R_N$ in the monomial basis. We can thus rewrite the previous identity as 
\[
\toep{\sk}\cdot \VVec{a} + \VVec{b} \approx \Delta \cdot \bv \bmod q \enspace. 
\]
Setting $\BS_{\RLWE}(\sk) = \toep{\sk} \in \ZZ_q^{N\times N}$, $\ba = \VVec{a}$ and $\bb = \VVec{b}$, this identity becomes
\[
\BS_{\RLWE}(\sk) \cdot \ba + \bb \approx  \Delta \cdot \bv \bmod q \enspace.
 \]

\subsubsection{Shared-$s$ encryption format.}
We move to the case where the dimension $d_1$ is divisible by $N$. We split the encoded input vector $\bv = (v_i)_{0\le i <d_1}$ into $N$-dimensional vectors $\bv^{(j)} = (v_{jN + i})_{0\le i< N}$, for $0\le j < d_1/N$. Each of these vectors is represented in basic \RLWE\ format as above, i.e., as $(a_j, b_j) \in\R_{q,N}^2$ such that 
\[
\BS_{\RLWE}(\sk)\cdot \VVec{a_j} + \VVec{b_j} \approx \Delta \cdot \bv^{(j)} \bmod q, \ \ 0\le j < \frac{d_1}{N}\enspace.
\]

We define $\BS_{\shs}(\sk) = {\bf I}_{d_1/N} \otimes \BS_{\RLWE}(\sk) \in \ZZ_q^{d_1\times d_1}$ and $\ba, \bb \in \ZZ_q^{d_1}$ as the $d_1$-dimensional vectors obtained by vertically concatening  the vectors $\VVec{a_j}$ and $\VVec{b_j}$ for $0\le j < d_1/N$. With these notations, we obtain the identity 
\[
\BS_\shs(\sk) \cdot \ba + \bb \approx \Delta \cdot \bv \bmod q \enspace.
\]

\subsubsection{Shared-$a$ encryption format.}
In the same setting $N|d_1$, sharing the $a$-part allows a different encryption format. We define~$k = d_1/N$. 
In that case, we assume that  the ciphertexts encrypting the vectors~$\bv^{(j)}$ are in the shared-$a$ format, i.e., that we have
\[
   a \cdot \sk^{(j)} + b^{(j)} \approx \Delta \cdot \Pol{\bv^{(j)}} \bmod q, \ \ 0\le j < k \enspace, 
\]
or, in matrix terms
\[
   \toep{\sk^{(j)}}\cdot \VVec{a} + \VVec{b^{(j)}} \approx \Delta \cdot \bv^{(j)} \bmod q,  \ \ 0\le j < k\enspace. 
\]

Let $\BS_\sha((\sk^{(j)})_{0\le j < k}) \in \ZZ_q^{d_1 \times N}$  be the vertical concatenation $(\toep{s_0}^t | \toep{s_1}^t | \ldots | \toep{s_{k-1}}^t)^t$, $\ba = \VVec{a}$ and 
$\bb \in \ZZ_q^{d_1}$ be the vertical concatenation of the vectors $\VVec{b^{(0)}}, \ldots, \VVec{b^{(k-1)}}$. We obtain the matrix form
\[
\BS_\sha((\sk^{(j)})_{0\le j < k}) \cdot \ba + \bb \approx \Delta \cdot \bv \bmod q \enspace, 
\]
where $\BS_\sha((\sk^{(j)})_{0\le j < k}) \in \ZZ_q^{d_1\times N}$, $\ba \in \ZZ_q^{N}$ and~$\bb \in \ZZ_q^{d_1}$. 

\subsubsection{\MLWE\ encryption format.}
In the case where $d_1 < N$, we assume that $d_1$ divides $N$ and write $\ell = N/d_1$. We then use \MLWE\ encryption with dimension $\ell$ and ring degree~$d_1$.  
If $(a^{(0)}, \ldots, a^{(\ell - 1)}, b)$ is such an encryption of $\Pol{\lfloor \Delta \cdot \bv\rceil}$ viewed as an element of $\R_{q,d_1}$, we have 
\[
\sum_{0 \leq j<\ell} a^{(j)} \cdot \sk^{(j)} + b \approx  \Delta \cdot \sum_{0 \leq i< d_1} v_{i} X^i \bmod (q, X^{d_1}+1) \enspace, 
\]
where $a^{(j)}, b \in\R_{q,d_1}$ and $\sk^{(j)} \in\R_{d_1}$ for all~$j <\ell$.
If we let $\BS_\MLWE((\sk^{(j)})_{0\le j < \ell})\in \ZZ_q^{d_1 \times N}$ be the horizontal concatenation $(\toep{\sk^{(0)}} | \toep{\sk^{(1)}} | \ldots | \toep{\sk^{(\ell-1)}})$,  $\ba\in \ZZ_q^{N}$ be the vertical concatenation of $\VVec{a^{(0)}}, \ldots, \VVec{a^{(\ell-1)}}$ and $\bb = \VVec{b}$, then we obtain the identity 
\begin{equation}\label{eq:mlwe}
\BS_\MLWE((\sk^{(j)})_{0\le j < \ell}) \cdot \ba + \bb \approx \Delta \cdot \bv \bmod q\enspace, 
\end{equation}
with $\BS_\MLWE((\sk^{(j)})_{0\le j < \ell}) \in \ZZ_q^{d_1\times N}$, $\ba \in \ZZ_q^{N}$ and~$\bb \in \ZZ_q^{d_1}$.  

\medskip
We summarize the above discussion in the following unified lemma. 
\begin{lemma}\label{le:vec_encodings}
Let $\bv \in \RR^{d_1}$ be a vector, with $d_1 | N$ or $N | d_1$; we can then write the \{\RLWE, shared-$a$ \RLWE, \MLWE\} encryption under $\sk$ of its CKKS-encoding as follows. There exist a matrix $\isst{\BS}\in \ZZ_{q}^{d_1\times N}$, a vector~$\ba \in \ZZ_q^{N}$ and a vector $\bb \in \ZZ_q^{d_1}$ such that  
\begin{equation}\label{eq:col_lwe}
\isst{\BS} \cdot \ba + \bb \approx \Delta \cdot \bv \bmod q \enspace.
\end{equation}

Furthermore, if Equation~\eqref{eq:col_lwe} holds with
\begin{itemize}
    \item[$\bullet$]  $\isst{\BS} = \BS_\RLWE(\sk)$ for some $\sk\in \ring_N$, then $(\ba, \bb)$ is a \RLWE\ encryption of  $\bv$;
    \item[$\bullet$]  $\isst{\BS} = \BS_{\sha}((\sk^{(j)})_{0\le j < d_1/N})$ for $\sk^{(j)} \in \ring_N$ for all~$j$, then $(\ba, \bb)$ is a shared-$a$ encryption of~$\bv$; 
    \item[$\bullet$]  $\isst{\BS} = \BS_{\MLWE}((\sk^{(j)})_{0\le j < N/d_1})$ for $\sk^{(j)} \in\R_{d_1}$ for all~$j$, then $(\ba, \bb)$ is an \MLWE\ encryption of~$\bv$. 
\end{itemize}

When $N|d_1$, we can write the shared-$s$ encryption of the CKKS-encoding of~$\bv$ in the following form.  There exist a matrix $\isst{\BS}\in \ZZ_{q}^{d_1\times d_1}$, a vector $\ba \in \ZZ_q^{d_1}$ and a vector $\bb \in \ZZ_q^{d_1}$ such that  
\begin{equation}\label{eq:shared-s-2}
\isst{\BS}\cdot \ba + \bb \approx \Delta \cdot \bv \bmod q \enspace.
\end{equation}
In particular, if Equation~\eqref{eq:shared-s-2} holds with $\isst{\BS} = \BS_{\shs}(\toep{s})$ for some $s\in \ring_N$, then $(\ba, \bb)$ is a shared-$s$ encryption of~$\bv$.
\end{lemma}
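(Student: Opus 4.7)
The plan is to verify the lemma by checking, for each of the four formats (RLWE, shared-$s$, shared-$a$, MLWE), that the construction exhibited in the text preceding the statement indeed produces matrices/vectors satisfying the claimed equation, and that conversely, the structured form of $\isst{\BS}$ forces the pair $(\ba,\bb)$ to be a valid ciphertext in the corresponding format. This is essentially a packaging lemma: all the computations were already done in the subsections, and we only need to assemble them.

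The single algebraic fact underlying all four cases is that, for any $s\in\R_N$, the matrix $\toep{s}\in\ZZ^{N\times N}$ represents the $\ZZ$-linear map $u \mapsto s\cdot u$ from $\R_N$ to itself in the monomial basis. Thus, for any $u\in\R_{q,N}$, one has $\toep{s}\cdot\VVec{u} = \VVec{s\cdot u}$ in $\ZZ_q^N$. All four verifications reduce to this identity.

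For the forward direction, I would proceed case by case. For RLWE ($d_1=N$), the identity $a\sk + b\approx \Delta \Pol{\bv} \bmod q$ in $\R_{q,N}$ yields $\toep{\sk}\VVec{a}+\VVec{b}\approx \Delta\bv\bmod q$ by applying $\VVec{\cdot}$, which is exactly the desired form with $\isst{\BS}=\BS_\RLWE(\sk)$. For the shared-$a$ case with $k=d_1/N$, I stack the $k$ identities $\toep{\sk^{(j)}}\VVec{a}+\VVec{b^{(j)}}\approx\Delta\bv^{(j)}\bmod q$ vertically, yielding an identity of the claimed shape with $\isst{\BS}=\BS_\sha((\sk^{(j)})_j)$. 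For MLWE, applying $\VVec{\cdot}$ to the RLWE relation with ring $\R_{q,d_1}$ and summing the contributions turns the inner product $\sum_j a^{(j)}\sk^{(j)}+b$ into $\sum_j \toep{\sk^{(j)}}\VVec{a^{(j)}}+\VVec{b}$, which is precisely the horizontal concatenation $\BS_\MLWE\cdot\ba$ acting on the stacked $\ba$. For shared-$s$, the $d_1/N$ independent RLWE identities assemble into a block-diagonal system whose block-diagonal is exactly $\BS_\shs(\sk) = \BI_{d_1/N}\otimes\toep{\sk}$.

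For the converse direction, the point is that the structured shape of $\isst{\BS}$ lets us run the forward derivation in reverse: given a relation $\isst{\BS}\ba+\bb\approx \Delta\bv\bmod q$ with, say, $\isst{\BS}=\toep{\sk}$, apply $\Pol{\cdot}$ to both sides and use the same identity $\toep{\sk}\VVec{a}=\VVec{\sk\cdot a}$ to reconstruct the polynomial encryption relation in $\R_{q,N}$. The three other converses are identical, using the block structure of $\BS_\sha$ to read off the per-row RLWE identities with shared $a$, the column-block structure of $\BS_\MLWE$ to read off the MLWE inner product, and the block-diagonal structure of $\BS_\shs$ to read off the $d_1/N$ parallel RLWE encryptions. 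I do not foresee a genuine obstacle; the only care required is consistency of dimensions and the fact that $\bb$ is stacked in the same block ordering as $\ba$ (or as $\bv$ in the shared-$s$ case), so that the block equations align correctly.
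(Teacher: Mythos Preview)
Your proposal is correct and matches the paper's approach exactly: the paper presents this lemma as a summary of the preceding subsections (``We summarize the above discussion in the following unified lemma'') without a separate proof, and your write-up simply assembles those derivations, case by case, using the key identity $\toep{s}\cdot\VVec{u}=\VVec{s\cdot u}$ that the paper relies on throughout.
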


In summary, Equations~\eqref{eq:col_lwe} and~\eqref{eq:shared-s-2} with $\BS$ having the suitable structure and dimensions characterize the various encryption formats for vectors. 
Lemma~\ref{le:vec_encodings} also has a row encryption format version, which can be derived directly or obtained by transposition, in which $\ba$, $\bb$ and~$\bv$ are row vectors and Equation~\eqref{eq:col_lwe} is replaced by 
\[
\ba \cdot {\isst{\BS}}^t + \bb \approx \Delta\cdot \bv \bmod q \enspace. 
\]

\subsection{Matrix formats with structured $\BS$}\label{sec:mat_enc_formats}
We now turn to matrix formats. Let $\BM \in \ZZ_q^{d_1\times d_2}$. The algebraic roles of $a$ and $\sk$ being symmetrical in \RLWE\ encryption, we shall build two types of matrix formats. The first one, and the one that we shall mostly use, is a direct generalization of Lemma~\ref{le:vec_encodings}; we shall call it the \emph{structured-$\BS$} case.  
By independently handling the columns of~$\BM$ using Section~\ref{sec:vec_encryption formats}, we obtain the following generalization of Lemma~\ref{le:vec_encodings}.
\begin{lemma}\label{le:mat_struct_s}
   Let $\BM \in \RR^{d_1 \times d_2}$ be a matrix with $d_1 | N$ or $N | d_1$. We can write its \{\RLWE, shared-$a$ \RLWE, \MLWE\} encryption in the following form: there exist a matrix $\isst{\BS}\in \ZZ_{q}^{d_1\times N}$, a matrix $\BA \in \ZZ_q^{N \times d_2}$ and a matrix $\BB \in \ZZ_q^{d_1 \times d_2}$ such that  
\begin{equation}\label{eq:mat_lwe}
\isst{\BS}\cdot \BA + \BB \approx \Delta\cdot \BM \bmod q \enspace.
\end{equation}

Further, when $N|d_1$, we can write its shared-$s$ encryption in the following form.  There exist a matrix~$\isst{\BS}\in \ZZ_{q}^{d_1\times d_1}$, a matrix $\BA \in \ZZ_q^{d_1 \times d_2}$ and a matrix $\BB \in \ZZ_q^{d_1 \times d_2}$ such that  
\begin{equation}\label{eq:sh-s}
\isst{\BS}\cdot \BA + \BB \approx \Delta \cdot \BM \bmod q \enspace.
\end{equation}
\end{lemma}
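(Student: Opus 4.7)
The plan is to reduce Lemma~\ref{le:mat_struct_s} to Lemma~\ref{le:vec_encodings} in a purely column-wise fashion. Write $\BM = (\bv_0 \mid \bv_1 \mid \cdots \mid \bv_{d_2-1})$ where each $\bv_j \in \RR^{d_1}$ is the $j$-th column of~$\BM$, and note that a matrix encryption of~$\BM$ is, by definition, a collection of column encryptions of the $\bv_j$'s in the chosen format (\RLWE, shared-$a$ \RLWE, \MLWE, or shared-$s$). Applying Lemma~\ref{le:vec_encodings} to each~$\bv_j$ produces, for every~$0 \le j < d_2$, a triple $(\isst{\BS}_j, \ba_j, \bb_j)$ of the correct shape such that $\isst{\BS}_j \cdot \ba_j + \bb_j \approx \Delta \cdot \bv_j \bmod q$.

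The only genuine thing to check is that all the~$\isst{\BS}_j$ can be chosen to coincide, so that they can be factored on the left. This follows from inspecting the four formats: in each one, the structured matrix is built entirely from the secret key material and no other part of the ciphertext, namely $\toep{\sk}$ for \RLWE, ${\bf I}_{d_1/N} \otimes \toep{\sk}$ for shared-$s$, the vertical concatenation of the $\toep{\sk^{(j)}}$ for shared-$a$ \RLWE, and the horizontal concatenation of the $\toep{\sk^{(j)}}$ for \MLWE. Since the secret key (or tuple of secret keys) is fixed once and for all when the matrix encryption is produced, the matrix $\isst{\BS} := \isst{\BS}_0 = \cdots = \isst{\BS}_{d_2-1}$ is well-defined and has the dimensions claimed in the statement.

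Setting $\BA = (\ba_0 \mid \cdots \mid \ba_{d_2-1})$ and $\BB = (\bb_0 \mid \cdots \mid \bb_{d_2-1})$, one then has, column by column,
\[
(\isst{\BS}\cdot \BA + \BB)_{\bullet, j} \;=\; \isst{\BS}\cdot \ba_j + \bb_j \;\approx\; \Delta \cdot \bv_j \;=\; (\Delta \cdot \BM)_{\bullet, j} \bmod q \enspace,
\]
which gives Equations~\eqref{eq:mat_lwe} and~\eqref{eq:sh-s} with the advertised dimensions ($\BA \in \ZZ_q^{N\times d_2}$ in the first three formats because each $\ba_j$ lives in $\ZZ_q^{N}$; $\BA \in \ZZ_q^{d_1\times d_2}$ in the shared-$s$ case because each $\ba_j$ lives in $\ZZ_q^{d_1}$).

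\paragraph{Main subtlety.}
There is no real obstacle beyond the observation above; the only point to be careful about is the approximate-equality tracking: the noise term hidden in the~$\approx$ accumulates columnwise, but since we only take $d_2$ columns and do not perform any further arithmetic, the per-column bound is preserved and no amplification occurs. If desired, a converse direction (analogous to the converse part of Lemma~\ref{le:vec_encodings}) can be stated and proved identically: an approximate factorization $\isst{\BS}\cdot \BA + \BB \approx \BM \bmod q$ with~$\isst{\BS}$ having one of the four structured shapes is, by reading off columns, a legitimate encryption of~$\BM$ in the corresponding format.
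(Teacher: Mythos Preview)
Your proposal is correct and matches the paper's approach exactly: the paper simply states that Lemma~\ref{le:mat_struct_s} follows from Lemma~\ref{le:vec_encodings} ``by independently handling the columns of~$\BM$,'' and you have spelled out precisely this argument, including the key observation that the structured matrix~$\isst{\BS}$ depends only on the secret-key material and is therefore common to all columns. If anything, your write-up is more detailed than the paper's one-line justification.
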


As in Lemma~\ref{le:vec_encodings}, Equations~\eqref{eq:mat_lwe} and~\eqref{eq:sh-s} with a matrix $\isst{\BS}$ of the right format conversely characterize each encryption format. 
Again, we can derive the corresponding matrix row encryption formats, under the form
\begin{equation}
\label{eq:mat_row_format}
\BA \cdot {\isst{\BS}}^t + \BB \approx \Delta \cdot \BM \bmod q \enspace, 
\end{equation}
  where the transposition of $\isst{\BS}$ shows that the structure of the secret matrix is transposed compared to column encryption formats.
 
\subsection{Matrix formats with structured $\BA$}
There is a dual way to view the shared-$a$ and \MLWE\ formats, which is obtained by swapping the roles of~$a$ and~$\sk$. 
We explore this dual format, which we call \emph{structured-$\BA$}. We shall use it in Section~\ref{sec:algorithms} to describe efficient $\cpmm$ algorithms with precomputation, when the plaintext matrix is known beforehand.

\subsubsection{Shared-$a$ format with structured $\BA$.}
Above, the shared-$a$ format was sharing the same $a$-part among all parts of a given column vector. Here, we introduce a dual version of this format, sharing $a_i$ among the $i$-th parts of all column vectors. This yields the following polynomial identity: 
\begin{equation}\label{eq:shasta}
  a_i \cdot \sk_{j} + b_{ij} \approx \sum_{0 \leq k<N} \Delta \cdot m_{Ni + k, j} X^k \bmod q, \ \ 0\le i < d_1/N, \ \ 0\le j < d_2 \enspace. 
\end{equation}
which we rewrite, in vector form, as: 
\[
  \toep{a_i} \cdot \VVec{\sk_{j}} + \VVec{b_{ij}} \approx \Delta \cdot \VVec{(m_{Ni + k, j})_{0\le k < N}} \bmod q, \ \ 0\le i < d_1/N, \ \ 0\le j < d_2\enspace. 
\]
We stress that~$\sk_j$ does not depend on~$i$: if we convert several packs of ciphertexts (here, the packs corresponding to each ``horizontal slice'' $0\le i < d_1/N$ of the matrix) to the shared-$a$ form using a switching key in the same format, then they will end up having the same secret keys~$\sk_j$. 

Defining $\isst{\BA} = (\toep{a_0}^t | \ldots | \toep{a_{d_1/N-1}}^t)^t \in \ZZ_q^{d_1\times N}$, and $\BS \in \ZZ^{N\times d_2}$ as the horizontal concatenation of the vectors $\VVec{\sk_{j}}, 0\le j < d_2$, the latter identity then becomes, in matrix form: 
\begin{equation}\label{eq:this-shared-a}
   \isst{\BA} \cdot \BS + \BB \approx \Delta \cdot \BM \bmod q \enspace.
\end{equation}

Despite the similarity of this equation with~Equation~\eqref{eq:mat_row_format}, we stress that this is a  different type of encryption. In particular, in Equation~(\ref{eq:this-shared-a}), the matrix $\BA$ is structured whereas the secret matrix $\BS$ is arbitrary, and we use a column encoding of the matrix. In Equation~\eqref{eq:mat_row_format}, the roles of $\BA$ and~$\BS$ are swapped, and we use row encoding of the matrix. 

Note that the $a$-part can be shared even further across all ciphertext components encrypting the matrix~$\BM$. This results in a structure where $d_1\cdot d_2$ ciphertexts share a single common $a$ while utilizing $d_1\cdot d_2$ distinct secret keys. This yields yet another encryption structure:
\[
  a \cdot \sk_{ij} + b_{ij} \approx \sum_{0 \leq k<N} \Delta \cdot m_{Ni + k, j} X^k \bmod q, \ \ 0\le i < d_1/N, \ \ 0\le j < d_2 \enspace. 
\]
This corresponds, in matrix form, to $({\bf I}_{d_1/N} \otimes \toep{a}) \cdot \BS + \BB \approx \Delta \cdot \BM\bmod q$. In this case, the matrix $\BS \in \ZZ^{d_1\times d_2}$ is defined by $\BS_{Ni+k,j} = (\sk_{ij})_k$, the $k$-th coefficient of the secret key $\sk_{ij}$, and
$\isst{\BA} = {\bf I}_{d_1/N} \otimes \toep{a} \in \ZZ^{d_1\times d_1}$. 
This situation is similar to the shared-$s$-structured-$s$ case, with the roles and structures of $\BA$ and $\BS$ being exchanged.  

\subsubsection{\MLWE\ matrix format with structured $\BA$.}
Similarly, in the case $d_1 | N$ we can assume that the columns of~$\BM$ share their $a$-parts; again, this formulation corresponds to swapping the roles of $a$ and~$s$ in the \MLWE\ format. It leads to an encryption identity in matrix form
\[
\isst{\BA} \cdot \BS_{\textrm{M-sh-}a-\MLWE} + \BB \approx \Delta \cdot \BM \bmod q \enspace,
\]
where $\isst{\BA} \in \ZZ_q^{d_1\times N}$, $\BS_{\textrm{M-sh-}a-\MLWE} \in \ZZ^{N\times d_2}$ and $\BB \in \ZZ_q^{d_1\times d_2}$. 

\medskip
Overall, we obtain the following lemma.
\begin{lemma}\label{le:mat_struct_a}
Let $\BM \in \RR^{d_1 \times d_2}$ be a matrix  with $d_1 | N$ or $N | d_1$. We can write its structured-$\BA$ \{shared-$a$ \MLWE, \RLWE\} encryption in the following form: there exist a matrix $\BS \in \ZZ_{q}^{N\times d_2}$, a matrix $\isst{\BA} \in \ZZ_q^{d_1 \times N}$ and a matrix $\BB \in \ZZ_q^{d_1 \times d_2}$ such that  
\begin{equation}\label{eq:mat_lwe_structa}
\isst{\BA}\cdot \BS + \BB \approx \Delta \cdot \BM \bmod q\enspace.
\end{equation}
\end{lemma}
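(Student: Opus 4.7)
The plan is to handle the two cases $N \mid d_1$ and $d_1 \mid N$ separately, in each translating the polynomial encryption identities into matrix identities over $\ZZ_q$ via the $\VVec{\cdot}$ isomorphism and the fact that, in the monomial basis, the multiplication-by-$u$ map on $\ring$ is represented by $\toep{u}$. The overall argument parallels the proof of Lemma~\ref{le:mat_struct_s}, but with the structural role of $\isst{\BA}$ and $\BS$ swapped, as motivated in the paragraphs preceding the lemma.

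For the shared-$a$ \RLWE\ case ($N\mid d_1$), I would set $k=d_1/N$ and cut $\BM$ into $k$ horizontal slices of height~$N$. The structured-$\BA$ shared-$a$ format produces, for each $0\le i<k$ and $0\le j<d_2$, elements $a_i\in\ring_{q,N}$ (shared across all $j$), $\sk_j\in\ring_N$ (one per column, shared across all $i$), and $b_{ij}\in\ring_{q,N}$, satisfying $a_i\cdot \sk_j + b_{ij} \approx \Pol{\Delta\cdot\BM_{Ni:N(i+1),\,j}} \bmod q$. Applying $\VVec{\cdot}$ yields $\toep{a_i}\cdot\VVec{\sk_j} + \VVec{b_{ij}} \approx \Delta\cdot\BM_{Ni:N(i+1),\,j} \bmod q$. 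Setting $\isst{\BA}=(\toep{a_0}^t\mid\cdots\mid\toep{a_{k-1}}^t)^t\in\ZZ_q^{d_1\times N}$, $\BS\in\ZZ_q^{N\times d_2}$ with $j$-th column $\VVec{\sk_j}$, and $\BB\in\ZZ_q^{d_1\times d_2}$ with $(i,j)$-th block $\VVec{b_{ij}}$, and then concatenating the vector identities over $i$ (vertically) and $j$ (horizontally) gives Equation~\eqref{eq:mat_lwe_structa}.

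For the \MLWE\ case ($d_1\mid N$), I would set $\ell=N/d_1$. Dualizing the standard \MLWE\ format amounts to assuming shared $a$-parts $a^{(0)},\ldots,a^{(\ell-1)}\in\ring_{q,d_1}$ used by every column, together with column-dependent secret keys $\sk_j^{(0)},\ldots,\sk_j^{(\ell-1)}\in\ring_{d_1}$ and $b_j\in\ring_{q,d_1}$, with $\sum_{k} a^{(k)}\sk_j^{(k)} + b_j \approx \Pol{\Delta\cdot\BM_{\cdot,\,j}} \bmod (q,X^{d_1}+1)$. Applying $\VVec{\cdot}$ column by column yields $\sum_{k} \toep{a^{(k)}}\VVec{\sk_j^{(k)}} + \VVec{b_j} \approx \Delta\cdot\BM_{\cdot,\,j}$. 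Setting $\isst{\BA} = (\toep{a^{(0)}}\mid\cdots\mid\toep{a^{(\ell-1)}})\in\ZZ_q^{d_1\times N}$, $\BS\in\ZZ_q^{N\times d_2}$ with $j$-th column the vertical concatenation of the $\VVec{\sk_j^{(k)}}$'s, and $\BB=(\VVec{b_0}\mid\cdots\mid\VVec{b_{d_2-1}})$, concatenating horizontally over $j$ gives Equation~\eqref{eq:mat_lwe_structa}.

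The main obstacle is not mathematical but purely block-matrix bookkeeping: one has to check that in both cases the dimensions $d_1\times N$, $N\times d_2$, $d_1\times d_2$ for $\isst{\BA},\BS,\BB$ actually fall out correctly. In case 1, this is stacking $k=d_1/N$ copies of an $N\times N$ Toeplitz block vertically; in case 2, it is concatenating $\ell=N/d_1$ copies of a $d_1\times d_1$ Toeplitz block horizontally; both produce the required $d_1\times N$ matrix $\isst{\BA}$. No error tracking is needed beyond observing that the $\VVec{\cdot}$ map and the block stackings are linear, so the $\approx$ symbol is preserved on both sides.
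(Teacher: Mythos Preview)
Your proposal is correct and follows essentially the same approach as the paper: the paper establishes this lemma via the discussion in the two subsections immediately preceding it, deriving exactly the polynomial identities and block-matrix definitions you give for the shared-$a$ case ($N\mid d_1$) and sketching the \MLWE\ case ($d_1\mid N$) as the dual obtained by swapping the roles of~$a$ and~$\sk$. Your write-up is in fact more detailed than the paper's, which simply states the lemma as a summary of that discussion without a separate proof.
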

As in Lemmas~\ref{le:vec_encodings} and \ref{le:mat_struct_s}, Equation~\eqref{eq:mat_lwe_structa} characterizes the structured-$\BA$ \MLWE\ format, as soon as the matrix~$\isst{\BA}$ has the relevant structure.

\subsection{Matrix \RGSW\ encryption}\label{sec:gsw_format}
We now extend the \RGSW\ encryption for ciphertexts to the matrix setting. Let $p, q$ be two positive integers with~$p\cdot\Delta \ge q$.

A \RGSW\ column encryption of a plaintext matrix $\lfloor \Delta \BM \rceil \in \ZZ^{d_1\times d_2}$ with underlying encryption being  \RLWE, \MLWE\ or shared-$a$, and auxiliary modulus~$p$, is a pair $(\BA_0, \BB_0), (\BA_1, \BB_1) \in \ZZ_{pq}^{N\times d_2} \times \ZZ_{pq}^{d_1 \times d_2}$ such that 
\begin{align*}
p\cdot (\Delta \cdot \BM) \approx \isst{\BS} \cdot \BA_0 + \BB_0 \bmod pq \enspace, \\
p\cdot (\Delta \cdot \BM)\cdot {\isst{\BS'}} \approx \isst{\BS} \cdot \BA_1 + \BB_1 \bmod pq \enspace. 
\end{align*}

A specificity of the \RGSW\ format is to enable two ring degrees and two matrix secret keys: 
\begin{itemize}
\item[$\bullet$] the first ring degree~$N$ corresponds to the encryption of the matrix~$\BM$, with $\isst{\BS} \in \ZZ_q^{d_1\times N}$ being the associated matrix secret key,
\item[$\bullet$] the second ring degree~$N'$ corresponds to the encryption of the vector~$\bv$ by which~$\BM$ is to be multiplied, with ${\isst{\BS'}} \in \ZZ_q^{d_2\times N'}$ being the associated matrix secret key. 
\end{itemize}

The structures of $\isst{\BS}$ and $\isst{\BS'}$ are determined by the encryption formats chosen for $\BM$ and $\bv$. In the general case, we can thus define $\GGSW_{d,d'}\ $ to be the encryption format for which:
\begin{itemize}
\item[$\bullet$] $\BM$ is RLWE-encrypted ($d = N$), or MLWE-encrypted $(d <N)$, or shared-$a$ RLWE-encrypted ($d \ge N$);
\item[$\bullet$] $\bv$ is RLWE-encrypted ($d' = N'$), or MLWE-encrypted $(d' <N')$, or shared-$a$ RLWE-encrypted ($d' \ge N'$).
\end{itemize}

Similarly to the \RGSW\ format for ciphertexts, it is possible to give a more general definition involving gadget decomposition, notably to handle the case where $p \ll q$; we shall leave this extension to the reader. 

\subsection{Which format for which algorithm?}

We summarize the various formats described in this section in Table~\ref{tbl:formats} and point to the algorithms of Section~\ref{sec:algorithms} in which they are used. There are two families of formats, the structured-$\BS$ formats (Lemmas~\ref{le:vec_encodings} and~\ref{le:mat_struct_s}) and the structured-$\BA$ formats (Lemma~\ref{le:mat_struct_a}). They are used in different algorithms, with structured-$\BA$ formats being specifically used for the $\cpmm$ algorithm with precomputation (Algorithm~\ref{alg:pcmm-with-precomp}), where the plaintext matrix is known beforehand.

Among the structured-$\BS$ formats, the dimensions of the matrix or of the vector dictate which format should be used, except for the case of shared-$s$ vs shared-$a$. In this case, shared-$s$ gives an algorithm with cheaper pre- and post-processing, while shared-$a$ gives an algorithm with a better reduction to plaintext linear algebra; overall, shared-$s$ is preferable for $\cpmm$ with dimensions $d_1 \times d_2 \times d_3$ when $d_3$ is small, the exact cutoff point being implementation dependent.

Regarding the choice between the \RLWE\ and the \RGSW\ formats, we observe that the \RGSW\ is more flexible in terms of dimensions, but may require pre-processing to compute the \RGSW\ encryption. When both approaches can be used, however, the \RLWE\ approach gives a better reduction to plaintext linear algebra. 

\begin{table}[H]
\centering
\caption{Overview of matrix/vector formats. The second column indicates whether we discuss a vector or a matrix format, while the third column indicates the conditions on dimensions for which this format is used. The fourth column provides the lemmas in which the formats are discussed, whereas the last column gives the algorithms in which these formats are used. 
We assume that matrices have dimensions  $d_1 \times d_2$ and vectors have dimension $d_1$. The RLWE ring-degree is denoted by~$N$. }
\label{tbl:formats}
\begin{tabular}{|c|c|c|c|c|}
\hline
Format & M/v & dimensions & Statement & Algorithms\\
\hline
\RLWE & M, v & $d_1 = N$ & Lemmas~\ref{le:vec_encodings} \& \ref{le:mat_struct_s} & Algorithms~\ref{alg:pcmm}, \ref{alg:ccmm} \& \ref{alg:general}\\ \hline
shared-$a$ & M, v & $N|d_1$ & Lemmas~\ref{le:vec_encodings} \& \ref{le:mat_struct_s} & Algorithms~\ref{alg:pcmm} \& \ref{alg:general}\\ \hline
shared-$s$ & M, v & $N | d_1$ & Lemmas~\ref{le:vec_encodings} \& \ref{le:mat_struct_s} & Algorithms~\ref{alg:pcmm} \& \ref{alg:general} (modified)\\ \hline
\MLWE & M, v & $d_1 | N$ & Lemmas~\ref{le:vec_encodings} \& \ref{le:mat_struct_s} & Algorithms~\ref{alg:pcmm} \& \ref{alg:general}
\\ \hline \hline
structured-$\BA$ shared-$a$ & M & $N | d_1$ & Lemma~\ref{le:mat_struct_a} & Algorithm~\ref{alg:pcmm-with-precomp} \\ \hline
structured-$\BA$ shared-$a$ MLWE & M & $d_1 | N$ & Lemma~\ref{le:mat_struct_a} & Algorithm~\ref{alg:pcmm-with-precomp}\\
\hline 
\end{tabular}
\end{table}

\section{Format conversions}\label{sec:conversions}

Our reductions will make use of the formats described in the previous section. In practice however, in many homomorphic computations (typically, in AI applications where the standard neural network evaluation alternates linear and non-linear steps), linear algebra occurs intertwined with a number of other steps during a much larger computation, and it is not realistic to assume that the packing of the data and encryption structure can be set based upon the requirements of the linear algebra steps only. In such a situation, one thus needs to apply a format conversion algorithm. 

We choose column-packed shared-$s$ \RLWE\ encryption as a base point. We shall give algorithms which allow one to convert other types of encodings to and from this one. 
A large fraction of the results described in this section can be summarized in the following lemma.
\begin{lemma}
Let $d, d' \ge 1$ be two integers such that $N|d$ or $d|N$.

Regarding format conversion for a matrix $\BM \in \RR^{d\times d'}$,
for each matrix encryption format $F$ among structured-$s$ shared-$a$ \RLWE\ (if $N|d$), structured-$\BA$ shared-$a$ \RLWE\ (if $N|d$), \MLWE\ (if $d|N$) and structured-$\BA$ \MLWE\ (if $d|N$), there are two algorithms converting from (shared-$s$) \RLWE\ to $F$ and back. The costs of these  algorithms can be  found in Table~\ref{tab:conversion_costs}.
\end{lemma}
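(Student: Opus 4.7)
The plan is to exhibit, for each of the four target formats $F$, explicit conversion algorithms in both directions (from shared-$s$ \RLWE\ to $F$ and back), and to verify correctness via the algebraic characterizations of Lemmas~\ref{le:vec_encodings}, \ref{le:mat_struct_s} and \ref{le:mat_struct_a}.

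First, I would handle the two structured-$\BS$ targets, namely shared-$a$ \RLWE\ (for $N\mid d$) and \MLWE\ (for $d\mid N$). In both cases, the central tool is column-wise key-switching. Given a shared-$s$ encryption satisfying $\toep{\sk}\cdot \BA + \BB \approx \BM \bmod q$, one partitions each column of $\BM$ into the chunks dictated by the target format---chunks of size $N$ for shared-$a$ \RLWE, with one target secret $\sk^{(j)}$ per chunk; chunks of size $d$ for \MLWE, corresponding to the components of the module secret---applies a precomputed key-switch from $\sk$ to the target secret on each chunk, and repacks the resulting \RLWE/\MLWE\ ciphertexts into the matrix form prescribed by Lemma~\ref{le:mat_struct_s}. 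Correctness is a direct substitution of the key-switching identity into the corresponding matrix equation; the backward direction is entirely symmetric and relies on a switching key in the opposite direction.

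For the structured-$\BA$ targets, an extra algebraic manipulation is required because the defining identity $\isst{\BA}\cdot \BS + \BB \approx \BM$ places the Toeplitz-like structure on the $a$-side rather than on the secret side. My approach is two-step: first, produce a shared-$a$ \RLWE\ (resp.\ \MLWE) encryption using the procedure above, yielding $\isst{\BS}\cdot \BA + \BB \approx \BM$ with $\isst{\BS}$ built from $\toep{\sk^{(j)}}$'s and a shared $a$; second, use the symmetry $\toep{s}\cdot\VVec{a} = \VVec{s\cdot a} = \toep{a}\cdot\VVec{s}$ to swap the structural roles of $a$ and $\sk$ within each ciphertext. This swap amounts to a re-partition of $\BM$ into row-slices instead of column-chunks together with a re-indexing of the ciphertext coefficients; no additional homomorphic operation is performed, so the cryptographic cost and the noise are unaffected. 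The backward conversion first undoes the swap and then applies the backward direction of the shared-$a$ (resp.\ \MLWE) case.

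The main obstacle will be to ensure that all four conversions fit within the cost budget announced in Table~\ref{tab:conversion_costs}. Concretely, one needs to check that the repacking step in the structured-$\BA$ reinterpretation reduces to an $O(dN)$ index-level rearrangement, that the only genuinely cryptographic operations are the $O(d/N)$ or $O(N/d)$ key-switchings per column (with their switching keys generated as part of the precomputation), and that the error growth induced by these key-switchings leaves enough margin for the approximate equalities defining each target format to hold at the required CKKS precision.
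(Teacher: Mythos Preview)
Your plan has genuine gaps in three of the four conversions.

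\textbf{Forward to \MLWE.} The paper's conversion is \emph{not} a key-switch at all: it is the purely syntactic rewriting $\ModDecomp$, which observes that the identity $a\cdot\sk+b\approx m$ in $\R_{q,N}$ decomposes, after writing each polynomial as $\sum_{i<k} p_i(X^k)X^i$ with $k=N/d$, into $k$ rank-$k$ \MLWE\ identities over $\R_{q,d}$ with the very same data. This is why the forward cost in Table~\ref{tab:conversion_costs} is only $O(dd')$: nothing cryptographic happens. Your key-switching approach cannot meet that cost, and in any case there is no single ``target secret'' per chunk to switch to---the \MLWE\ secret components are the pieces of the original $\sk$, not fresh keys.

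\textbf{Forward to structured-$\BS$ shared-$a$.} Applying a standard key-switch from $\sk$ to $\sk^{(j)}$ independently on each chunk does \emph{not} produce shared $a$-parts: the output $a'$ of a key-switch depends on the input $a$, and the $d/N$ chunks of a column start with distinct $a$'s. The paper's Algorithms~\ref{alg:fmtswitch} and~\ref{alg:fast_fmtswitch} use a dedicated matrix-structured format-switching key $({\bf a}_{\fmtswk},\BB_{\fmtswk})$, designed so that the single product $(a_0,\ldots,a_{n-1})\cdot\BB_{\fmtswk}$ simultaneously key-switches all chunks while forcing a common output $a'=\lfloor \langle(a_i)_i,{\bf a}_{\fmtswk}\rangle/p\rceil$. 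The recursive variant is what yields the $\log(d/N)$ factor in the table.

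\textbf{Structured-$\BA$ formats.} Your swap $\toep{\sk}\VVec{a}=\toep{a}\VVec{\sk}$ is valid per ciphertext, but it does not take structured-$\BS$ shared-$a$ to structured-$\BA$ shared-$a$: the two formats share the $a$-part along \emph{different axes}. In structured-$\BS$ shared-$a$, each column has one $a$ shared across its $d/N$ row-chunks; in structured-$\BA$ shared-$a$, each row-slice has one $a_i$ shared across all $d'$ columns (with a different secret $\sk_j$ per column). No reindexing turns one into the other. The paper instead applies Algorithm~\ref{alg:fast_fmtswitch} to the $d'$ ciphertexts of each row-slice (rather than to the $d/N$ chunks of each column), which is exactly why the forward cost carries a $\log d'$ factor instead of $\log(d/N)$.

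The backward conversions (shared-$a\to$ shared-$s$ via independent \KeySwitch, \MLWE$\to$\RLWE\ via \ModPack) are closer to what you sketch and are essentially as in the paper.
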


The costs for converting from and to MLWE can be found in Sections~\ref{sec:moddecomp},~\ref{sec:modpack}. For conversion to and from structured-$\BS$ format, we need to convert all the ciphertexts encrypting a given column to and from the shared-$a$ format; there are $d'$ such columns, each made of $d/N$ ciphertexts. For the structured-$\BA$ case, the situation is symmetrical: we need to convert a given chunk of size $N$ (corresponding to a ciphertext) of all the columns to and from the shared-$a$ format; there are $d/N$ such chunks, and $d'$ columns. 

\begin{table}
\caption{\label{tab:conversion_costs}Conversion costs between encryption formats for a $d\times d'$ matrix, expressed in numbers of operations in~$\ZZ_{pq}$, where~$q$ is the ciphertext modulus and~$p$ is an auxiliary modulus that may be chosen as~$\approx q$. The \textrm{St.-}\BS \textrm{ sh.-}a and  \MLWE\  columns also address the vector case by taking $d' = 1$ (and the formats of the two other columns are not applicable to vectors).}
\[
\begin{array}{c|c|c|c|c|}
\cline{2-5}
&
\multicolumn{2}{c|}{\textrm{Structured-$\BS$ formats}}& 
\multicolumn{2}{c|}{\textrm{Structured-$\BA$ formats}}
\\
\cline{2-5}
& \textrm{St.-}\BS \textrm{ sh.-}a\ \ (\mbox{for }N|d)& \textrm{St.-}\BS\ \MLWE\ \ (\mbox{for }d|N)& \textrm{St.-}\BA \textrm{ sh.-}a \ \  (\mbox{for }N|d)& \textrm{St.-}\BA\ \MLWE\ \ (\mbox{for }d|N)\\
\hline
\multicolumn{1}{|c|}{\multirow{2}{*}{\textrm{ sh.-}$s \rightarrow$}} & O(d d'\log N\log (d/N))  & O(dd') & O(d d' \log d' \log N) & O(d d') \\
\multicolumn{1}{|c|}{} &  \textrm{Sec.}~\ref{sec:fastforwardswitch} 
& \textrm{Sec.}~\ref{sec:moddecomp} &
\textrm{Sec.}~\ref{sec:fastforwardswitch} & \textrm{Sec.~\ref{sec:moddecomp}}
\\
\hline 
\multicolumn{1}{|c|}{\multirow{2}{*}{$\rightarrow \textrm{sh.-}s$}} & O(d d' \log N)&O(\max(d', N/d) N\log N)~  & O(d d'\log N)  & O(\max(d', N/d) N\log N)~\\
\multicolumn{1}{|c|}{} &\textrm{Sec.}~\ref{sec:backfmtswitch} & \textrm{Sec.}~\ref{sec:modpack} & \textrm{Sec.}~\ref{sec:backfmtswitch}  & \textrm{Sec.}~\ref{sec:modpack}
\\
\hline
\end{array}
\]
\end{table}
These conversion algorithms can also handle \RGSW\ encryptions, by viewing them as  pairs of \RLWE\ encryptions. We also study the conversion from row encodings to column encodings, namely transposition, in the case of square matrices, and describe an algorithm with quasi-linear complexity in the case of \RLWE\ encoding. From this algorithm, we derive a conversion from a \RLWE\ encryption modulo~$pq$ into a matrix \RGSW\ encryption in the case of square matrices. 

We note that the conversion algorithms described in this section do not consume any multiplicative level, so they do not affect the FHE parameters, usability, and efficiency when used in combination with the linear algebra algorithms from Section~\ref{sec:algorithms}.

\subsection{From shared-$s$ to shared-$a$}
\label{sse:conversion1}
\label{sec:fmtswitch}
Assume that we have~$n \geq 1$ ciphertexts $\ct_i= (a_i,b_i) \in \R_q^2$ in shared-$s$ format, i.e., such that~$a_i\cdot \sk +b_i \approx m_i \bmod q$ for some shared secret key~$\sk$ and plaintexts~$m_i$ (for all~$0 \le  i < n$). We aim at converting them to~$n$ ciphertexts $\ct'_i= (a',b'_i) \in \R_q^2$ such that~$a'\cdot \sk_i' + b'_i \approx m_i \bmod q$ (for all~$0 \le  i < n$) that share their $a$-parts.  In this context, the secret key~$\sk$ and the row vector of secret keys $(\sk'_0, \ldots, \sk'_{n-1})$ are all sampled during key generation, and 
we assume that the key generation algorithm also outputs a format-switching key: 
\[
\fmtswk_{q,p, \sk \rightarrow \{\sk'_i\}_{0\le i< n}} 
= ({\bf a}_{\fmtswk}, {\bf B}_{\fmtswk})  \in \R_{pq}^n \times \R_{pq}^{n \times n}  \enspace ,
\] 
for some auxiliary integer~$p$ with  ${\bf a}_{\fmtswk}$ sampled uniformly in~$\R_{pq}^n$, and
\begin{equation}
\label{eq:fmtswk}
 {\bf B}_{\fmtswk} 
= 
-{\bf a}_{\fmtswk}
\cdot 
\left(\sk'_0, \ldots, \sk'_{n-1} \right) 
+
{\bf E}
+
p \cdot \sk \cdot {\bf I}_n \bmod pq \enspace,
\end{equation}
where~${\bf E} \in \R^{n \times n}$ has small-magnitude coefficients, and ${\bf a}_{\fmtswk}
\cdot \left(\sk'_0, \ldots, \sk'_{n-1} \right)$ is the multiplication of a column vector and a row vector, resulting in an $n\times n$ matrix.
Note that ${\bf a}_{\fmtswk}$ can be the output of an extendable output function evaluated on a seed, to reduce memory consumption. Nevertheless, the matrix~${\bf B}_{\fmtswk}$ still contains~$n^2$ elements of~$\R_{pq}$.
By a standard hybrid argument (see, e.g.,~\cite[Lemma~6.2]{PW08}),  the format conversion key~$\fmtswk_{q,p, \sk \rightarrow \{\sk'_i\}_{0\le i< n}}$ can be shown to be computationally indistinguishable from uniform.  

Algorithm~\ref{alg:fmtswitch} is the format switching algorithm.
We note that the $i$-th column of $\BB_{\fmtswk}$ contains $b$-parts of switching keys from $\sk$ to~$\sk'_i$, while the $j$-th row of~$\BB_{\fmtswk}$ can be seen as a set of $b$-parts for ciphertexts in shared-$a$ format. In both cases, the corresponding $a$-part is found in the vector~${\bf a}_{\fmtswk}$. Computing the product $(a_0, \ldots, a_{n-1}) \cdot   {\bf B}_{\fmtswk}$ preserves those properties and yields a vector of $b$-parts of switching keys from~$\sk$ to~$\sk'_i$ under the common $a$-part $(a_0, \ldots, a_{n-1}) \cdot {\bf a}_{\fmtswk}$. The rest of the algorithm  completes the actual key-switching.

\begin{algorithm}
    \caption{\label{alg:fmtswitch}Shared-$s$ to shared-$a$ conversion.}
\begin{algorithmic}[1]
\REQUIRE Format-switching key $\fmtswk_{q,p, \sk \rightarrow \{\sk'_i\}_{0\le i< n}} = ({\bf a}_{\fmtswk}, {\bf B}_{\fmtswk})  \in \R_{pq}^n \times \R_{pq}^{n \times n}$.
\REQUIRE
Ciphertexts $((a_i, b_i))_{0\le i < n}$ such that $a_i \cdot \sk + b_i \approx m_i \bmod q$  for all $0\le i < n$.
\ENSURE
A polynomial $a'$ and polynomials $( b'_i)_{0\le i < n}$ 
such that  
  $a' \cdot \sk'_i + b'_i \approx m_i \bmod q$ for all $0\le i < n$.
\STATE $a'\leftarrow \left\lfloor \frac{1}{p} \Big( \langle (a_0, \ldots, a_{n-1})^t , {\bf a}_{\fmtswk} \rangle \bmod pq \Big) \right\rceil  \ $;
\STATE $(c_0, \ldots, c_{n-1}) \leftarrow \left\lfloor \frac{1}{p} \Big( (a_0, \ldots, a_{n-1}) \cdot   {\bf B}_{\fmtswk} \bmod pq \Big) \right\rceil \ $;
\FOR{$i = 0$ to $n-1$}
\STATE $b'_i \leftarrow b_i + c_i \bmod q$\ ;
\ENDFOR
\RETURN $a', \{ b'_i\}_{0\le i < n}$\ .
\end{algorithmic}
\end{algorithm}

\begin{lemma}\label{lem:shared-a}
Assuming~$p \geq q$, Algorithm~\ref{alg:fmtswitch} is correct. Further, it consumes $\widetilde{O}(n^2 N)$ arithmetic operations in~$\ZZ_{pq}$.
\end{lemma}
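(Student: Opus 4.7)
The plan is to establish correctness by substituting Equation~\eqref{eq:fmtswk} into the algorithm and then observing that the resulting identity matches the target decryption relation up to small errors, and then to bound the cost by a straightforward operation count.

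More concretely, I would first unfold Step~2. Writing $\ba = (a_0, \ldots, a_{n-1})^t$ and using Equation~\eqref{eq:fmtswk}, the vector-matrix product satisfies
\[
\ba^t \cdot \BB_{\fmtswk} \ =\ - \langle \ba, \ba_{\fmtswk}\rangle \cdot (\sk'_0, \ldots, \sk'_{n-1}) \ +\ \ba^t \cdot \BE \ +\ p\, \sk \cdot \ba^t \pmod{pq}.
\]
Dividing by $p$ and rounding coordinatewise, the third summand contributes exactly $\sk \cdot a_i$, the first summand contributes $-a' \cdot \sk'_i$ (by definition of $a'$ in Step~1, together with the fact that the associated rounding error is of magnitude $O(1)$ and is multiplied by the small $\sk'_i$), and the second summand contributes a small term stemming from $\ba^t \BE / p$. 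Using the assumption $p \geq q$ and the smallness of $\BE$ and of the $a_i$ coefficients modulo $p$ after centering, one shows that $c_i \approx \sk\cdot a_i - a'\cdot \sk'_i \bmod q$. Adding $b_i$ on Line~4 then gives
\[
a'\cdot \sk'_i + b'_i \ =\ a'\cdot \sk'_i + b_i + c_i \ \approx\ \sk\cdot a_i + b_i \ \approx\ m_i \pmod{q},
\]
which is the desired identity. The key point of the condition $p \geq q$ is to guarantee that the rounding and \RLWE\ noise terms of magnitude $\lVert\ba^t\BE\rVert_\infty/p$ remain small relative to~$q$.

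The main obstacle is the careful bookkeeping of the several error contributions: the pre-existing error in the input ciphertexts, the encryption error $\BE$ embedded in the format-switching key, and the two independent rounding errors on Lines~1 and~2. Each of them needs to be shown to remain of small magnitude modulo~$q$ after the final addition, and in particular the rounding in Step~1 only affects the output through multiplication by the small $\sk'_i$, which is essential for the bound.

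For the complexity, each multiplication in $\R_{pq}$ costs $\widetilde{O}(N)$ arithmetic operations in $\ZZ_{pq}$ using NTT-based multiplication. Step~1 performs $n$ such multiplications (inner product of two vectors of length $n$ with entries in $\R_{pq}$), for a total of $\widetilde{O}(nN)$ operations. Step~2 is a vector-matrix product with an $n \times n$ matrix over $\R_{pq}$, hence $n^2$ multiplications in $\R_{pq}$ and cost $\widetilde{O}(n^2 N)$. The rounding steps and the loop of Lines~3--5 add $O(n N)$ operations in $\ZZ_{pq}$. Summing, the overall cost is $\widetilde{O}(n^2 N)$ arithmetic operations in $\ZZ_{pq}$, as claimed.
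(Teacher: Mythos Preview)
Your proposal is correct and follows essentially the same approach as the paper: you substitute the definition of $\BB_{\fmtswk}$ from Equation~\eqref{eq:fmtswk} into the product $\ba^t \cdot \BB_{\fmtswk}$, identify the main term $p\,\sk\cdot a_i$ and the cross-term $-\langle \ba, \ba_{\fmtswk}\rangle \cdot \sk'_i$, control the residual $\ba^t\BE/p$ via $p\ge q$, and observe that the Step~1 rounding error is harmless because it is multiplied by the small~$\sk'_i$. The paper organizes the same computation coordinate-by-coordinate (fixing~$i$ and introducing explicit rounding errors $e_a,e_b$ and integer carries $k_a,k_b,k_i$), but the substance and the cost analysis are identical.
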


\begin{proof}
Let~$0\le i< n$. There exist~$e_a,e_b \in \mathbb{R}[X]/(X^N+1)$ with coefficients in $(-1/2,1/2]$ and~$k_a,k_b \in \R$ such that, over~$\mathbb{R}[X]/(X^N+1)$:
\begin{eqnarray*}
a' \cdot \sk'_i & = & \frac{1}{p}  \langle (a_0,\ldots,a_{n-1})^t ,  {\bf a}_{\fmtswk} \rangle \cdot \sk'_i + e_a \cdot \sk'_i + k_aq \enspace , \\  
b_i' & = & b_i + \frac{1}{p}  (a_0,\ldots,a_{n-1}) \cdot {\bf B}_{\fmtswk,i} + e_b  + k_bq \enspace ,
 \end{eqnarray*}
where ${\bf B}_{\fmtswk,i}$ refers to the $i$-th column of~${\bf B}_{\fmtswk}$. Taking the $i$-th column of Equation~\eqref{eq:fmtswk}, we obtain that there exists~$k_i \in \R$ such that the following holds over~$\R$:
\[
{\bf a}_{\fmtswk} \cdot \sk'_i + {\bf B}_{\fmtswk,i} = {\bf e}_i + p \cdot (0,\ldots,0,\sk, 0, \ldots, 0)^t + k_i pq \enspace , 
\]
where~${\bf e}_i$ refers to the $i$-th column of~${\bf E}$ and~$\sk$ is in the $i$-th coefficient of $(0,\ldots,0,\sk, 0, \ldots, 0)^t$. 
Combining the last three equations, we obtain:
\begin{eqnarray*}
a' \cdot \sk'_i + b'_i & = & a_i \cdot \sk + b_i  + (k_a+k_b+k_i) q  \\ && + \left( \frac{1}{p} \langle (a_0,\ldots,a_{n-1})^t , {\bf e}_i \rangle   + e_a \cdot \sk_i' + e_b\right) \\
& \approx & m_i + (k_a+k_b+k_i) q \enspace .
\end{eqnarray*}
Here, we used the fact that the $a_i$'s have coefficients in $(-q/2,q/2]$ and~$p\geq q$, implying that the ring element $\frac{1}{p}  \langle (a_0,\ldots,a_{n-1})^t, {\bf e}_i \rangle$ has small-magnitude coefficients. Reducing modulo~$q$ gives the result.

The cost of this algorithm is dominated by the matrix-vector product of Step~2, of dimension $1\times n \times n$ over $\R_{pq}$. It thus implies $O(n^2N\log N)$ additions and multiplications in $\ZZ_{pq}$.
\qed
\end{proof}

\subsection{Improved shared-$s$ to shared-$a$ conversion}
\label{sec:fastforwardswitch}
A drawback of the format conversion approach described above is the quadratic growth of the size of the format conversion key and the quadratic number of arithmetic operations in $\ZZ_{pq}$, as a function of~the number of ciphertexts. We propose an alternative recursive approach for $n$ ciphertexts: first, each pair of ciphertexts is transformed so that they share their $a$-parts (i.e., there are only~$n/2$ distinct $a$-parts); then each pair of pair of ciphertexts is transformed so that they share their $a$-parts (i.e., there are only $n/4$ distinct $a$-parts), etc. This is formalized in Algorithm~\ref{alg:fast_fmtswitch}.

For the sake of simplicity, assume that~$n=2^\kappa$ for some integer~$\kappa \geq 1$. 
We define~$\bsk^{(0)}=\sk$ and~$\bsk^{(\kappa)} = (\sk'_0,\ldots,\sk'_{2^\kappa-1})$ (which is a row vector). For~$\ell \in [1,\kappa-1]$, we sample secret keys~$\sk^{(\ell)}_0,\ldots,\sk^{(\ell)}_{2^\ell-1}$ and define~$\bsk^{(\ell)}$ as their concatenation (we could alternatively define~$\bsk^{(\ell)}$ as the first $2^\ell$ elements of~$\bsk^{(\kappa)}$). We define the recursive format conversion key as follows: 
\[
\widetilde{\fmtswk}_{q,p, \sk \rightarrow \{\sk'_i\}_{0\le i < n}}
= \{({\bf a}^{(\ell)}_{\fmtswk}, {\bf B}^{(\ell)}_{\fmtswk})\}_{\ell \in [1,\kappa]} \enspace ,
\] 
for some auxiliary integer~$p$, with  ${\bf a}^{(\ell)}_{\fmtswk}$ sampled uniformly in~$\R_{pq}^2$ (and possibly seeded), and
\begin{eqnarray*}
{\bf B}^{(\ell)}_{\fmtswk}
 & = & 
- {\bf a}^{(\ell)}_{\fmtswk}
\cdot 
\bsk^{(\ell)}
+ {\bf E}^{(\ell)} \\ 
& & +
p \cdot \left( \begin{array}{cccccc}
\sk_0^{(\ell-1)} & \ldots & \sk_{2^{\ell-1}-1}^{(\ell-1)} &&& \\
&&&  \sk_0^{(\ell-1)} & \ldots & \sk_{2^{\ell-1}-1}^{(\ell-1)} 
\end{array} \right)
  \in \R_{pq}^{2 \times 2^{\ell}} \enspace ,
\end{eqnarray*}
where~${\bf E}^{(\ell)}$ has small-magnitude coefficients.

\begin{algorithm}
    \caption{\label{alg:fast_fmtswitch}Fast shared-$s$ to shared-$a$ conversion.}
\begin{algorithmic}[1]
\REQUIRE Recursive format conversion key 
$\widetilde{\fmtswk}_{q,p, \sk \rightarrow \{\sk'_i\}_{0\le i < n}}$, where $n$ is a power of two. 

\REQUIRE
Ciphertexts $((a_i, b_i))_{0\le i < n}$.

\ENSURE
A polynomial $a'$ and polynomials $( b'_i)_{0\le i < n}$ such that
  $a' \cdot \sk'_i + b'_i = m_i \bmod q$ for all $0\le i < n$.
\FOR{$i$ from $0$ to $n-1$}
\STATE $(a_i^{(0)}, b_i^{(0)}) \leftarrow (a_i, b_i)$;
\ENDFOR
\FOR{$\ell$ from $1$ to $\log n$}{}
\FOR{$i$ from $0$ to $n/2^\ell - 1$}
\STATE
$a^{(\ell)}_{i}  \leftarrow\left\lfloor 
\frac{1}{p} \Big( \langle (a^{(\ell-1)}_{2i}, a^{(\ell-1)}_{2i+1})^t ,  {\bf a}^{(\ell)}_{\fmtswk} \rangle \bmod pq \Big) \right\rceil  \ $ ;
\STATE 
$(b^{(\ell)}_{i 2^{\ell}}, \ldots, b^{(\ell)}_{(i+1) 2^{\ell}-1}) \leftarrow  
(b^{(\ell-1)}_{i 2^{\ell}}, \ldots, b^{(\ell-1)}_{(i+1) 2^{\ell}-1}) 
 +\left\lfloor 
\frac{1}{p} \Big( (a^{(\ell-1)}_{2i}, a^{(\ell-1)}_{2i+1}) \cdot   {\bf B}_{\fmtswk}^{(\ell)} \bmod pq \Big) \right\rceil \ $;

\ENDFOR
\ENDFOR
\RETURN $a_0^{(\log n)}, ({b_i}^{(\log n)})_{0\le i < n}$\ .
\end{algorithmic}
\end{algorithm}

\begin{theorem}
Assuming~$p \geq q$, Algorithm~\ref{alg:fast_fmtswitch} is correct. Further, it consumes $O(n  N (\log n) (\log N))$ arithmetic operations in~$\ZZ_{pq}$.
\end{theorem}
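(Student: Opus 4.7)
The plan is to mimic the correctness and complexity analysis of Lemma~\ref{lem:shared-a}, but to organize the correctness argument as an induction over the recursion level~$\ell$.

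For correctness, I would maintain, for each $\ell \in [0,\log n]$, the loop invariant that for all $i \in [0, n/2^\ell)$ and all $j \in [0, 2^\ell)$, we have $a^{(\ell)}_i \cdot \sk^{(\ell)}_j + b^{(\ell)}_{i\cdot 2^\ell + j} \approx m_{i\cdot 2^\ell + j} \bmod q$. The base case $\ell=0$ reduces to the input assumption, since $\bsk^{(0)} = \sk$. For the inductive step, I would fix~$\ell\geq 1$ and~$j \in [0, 2^\ell)$, and expand~$a^{(\ell)}_i \cdot \sk^{(\ell)}_j + b^{(\ell)}_{i\cdot 2^\ell + j}$ using Steps~6 and~7 of Algorithm~\ref{alg:fast_fmtswitch}. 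Writing $\epsilon_a, \epsilon_b$ for the rounding errors (bounded in absolute value by $1/2$) introduced at Steps~6 and~7, the sum rewrites as
\[
\tfrac{1}{p}\,(a^{(\ell-1)}_{2i}, a^{(\ell-1)}_{2i+1})\cdot\bigl({\bf a}^{(\ell)}_{\fmtswk}\cdot \sk^{(\ell)}_j + ({\bf B}^{(\ell)}_{\fmtswk})_j\bigr) + b^{(\ell-1)}_{i\cdot 2^\ell + j} + \bigl(\text{rounding terms}\bigr) \bmod q.
\]
By the definition of~${\bf B}^{(\ell)}_{\fmtswk}$, the parenthesized factor equals $({\bf e}^{(\ell)}_j) + p\cdot {\bf v}_j$, where ${\bf v}_j = (\sk^{(\ell-1)}_j, 0)^t$ for $j<2^{\ell-1}$ and ${\bf v}_j = (0, \sk^{(\ell-1)}_{j - 2^{\ell-1}})^t$ otherwise. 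The $p\cdot {\bf v}_j$ contribution yields either $a^{(\ell-1)}_{2i} \cdot \sk^{(\ell-1)}_j + b^{(\ell-1)}_{i\cdot 2^\ell + j}$ (when $j<2^{\ell-1}$) or $a^{(\ell-1)}_{2i+1}\cdot \sk^{(\ell-1)}_{j-2^{\ell-1}} + b^{(\ell-1)}_{i\cdot 2^\ell + j}$ (when $j\geq 2^{\ell-1}$). In both cases, the inductive hypothesis identifies this quantity with $m_{i\cdot 2^\ell + j}$ modulo~$q$ up to a small error. The residual term $\tfrac{1}{p}(a^{(\ell-1)}_{2i}, a^{(\ell-1)}_{2i+1})\cdot {\bf e}^{(\ell)}_j$ is small because the~$a^{(\ell-1)}$ are bounded by~$q/2$ in magnitude, ${\bf e}^{(\ell)}_j$ has small coefficients, and $p \geq q$; combining with the rounding terms~$\epsilon_a, \epsilon_b$ closes the inductive step.

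Specializing the invariant to~$\ell = \log n$ gives $a^{(\log n)}_0 \cdot \sk'_j + b^{(\log n)}_j \approx m_j \bmod q$, which is precisely the output guarantee of the algorithm.

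For the complexity, I would sum iteration costs layer by layer. At level~$\ell$, the outer loop runs $n/2^\ell$ times. Step~6 is a length-2 inner product in~$\R_{pq}$, costing $O(N\log N)$ arithmetic operations in~$\ZZ_{pq}$ via NTT; Step~7 is a $1\times 2 \times 2^\ell$ matrix product in $\R_{pq}$, costing $O(2^\ell\cdot N\log N)$. The per-level total is therefore $O(n\cdot N\log N)$, and summing over $\ell = 1,\dots,\log n$ yields $O(n\cdot N\log n\cdot \log N) = \widetilde{O}(nN)$ operations in~$\ZZ_{pq}$.

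I expect the main subtlety to be the control of error propagation across the $\log n$ levels: the noise is multiplied at each level by the key~$\sk^{(\ell)}_j$ and gains the contribution of~${\bf e}^{(\ell)}_j$ plus rounding, so one has to verify that the error only grows by a $O(\log n)$ factor (times polynomial factors in~$N$ and key norms) rather than compounding multiplicatively. This follows because the only multiplications applied to prior-level noise are by the honestly-sampled keys $\sk^{(\ell)}_j$ and by rounding correctives, never by the input ciphertext components; hence the final error bound is essentially the maximum of the individual layer errors times a $\mathrm{poly}(N, \log n)$ factor, which still lies well within the approximation slack implicit in the~$\approx$ symbol.
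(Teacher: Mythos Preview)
Your proposal is correct and follows essentially the same approach as the paper: the same loop invariant $a^{(\ell)}_i\cdot\sk^{(\ell)}_j + b^{(\ell)}_{i2^\ell+j}\approx m_{i2^\ell+j}\bmod q$ proved by induction on~$\ell$, and the same layer-by-layer complexity count summing to $O(n\log n)$ key-switchings. The only cosmetic difference is that the paper invokes the two-dimensional case of Lemma~\ref{lem:shared-a} as a black box for the inductive step, whereas you unfold that computation explicitly; your additional paragraph on error accumulation across the $\log n$ levels is a detail the paper leaves implicit in its use of~$\approx$.
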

\begin{proof}
We prove by induction that, for all $\ell$, we have
\[
   a_i^{(\ell)} \cdot \sk_j^{(\ell)} + b_{i 2^{\ell}+j}^{(\ell)} \approx m_{i 2^{\ell}+j}^{\phantom{(\ell)}} \bmod q, \ \ 0\le i < n/2^{\ell}, \ \ 0 \le j  < 2^\ell \enspace.
\]
For $\ell = 0$, this follows from the definition of $(a_i, b_i)$. For $\ell = \log n$, this will yield correctness. 

Assume that it holds for $\ell - 1$. We compute, for $0\le i < n/2^\ell$, $0 \le j < 2^\ell$:
\begin{align*}
a_i^{(\ell)} \cdot \sk_j^{(\ell)} + b_{i2^\ell +j}^{(\ell)} & = 
\left\lfloor
\frac{1}{p} \Big( \langle (a^{(\ell-1)}_{2i}, a^{(\ell-1)}_{2i+1})^t ,  {\bf a}^{(\ell)}_{\fmtswk} \rangle \bmod pq \Big) \right\rceil
\cdot \sk_j^{(\ell)}  \\ 
& \hspace*{1cm} + b_{i2^{\ell}+j}^{(\ell -1 )} + 
\left\lfloor \frac{1}{p} \Big( (a^{(\ell-1)}_{2i}, a^{(\ell-1)}_{2i+1}) \cdot   {\bf B}_{\fmtswk}^{(\ell)} \bmod pq \Big)_j \right\rceil\enspace,
\end{align*}
where the last ``index~$j$'' indicates that we take the $j$-th entry of $(a^{(\ell-1)}_{2i}, a^{(\ell-1)}_{2i+1}) \cdot   {\bf B}_{\fmtswk}^{(\ell)} \bmod pq$.
The right hand side computation then corresponds to Algorithm~\ref{alg:fmtswitch}, with $n=2$ and the following ciphertexts as inputs:
\begin{eqnarray*}
(a_{2i}^{(\ell-1)}, b_{2i\cdot 2^{\ell-1}+j}^{(\ell-1)}) & \mbox{ and } &  
(a_{2i+1}^{(\ell-1)}, b_{(2i+1)\cdot 2^{\ell-1} + j}^{(\ell-1)})\enspace,
\end{eqnarray*}
for each $0 \le j < 2^{\ell-1}$ and $0 \le i < n/2^\ell$. 
We thus deduce, thanks to Lemma~\ref{lem:shared-a}, that for each $0 \le j < 2^{\ell-1}$ and $0 \le i < n/2^\ell$,
\begin{eqnarray*}
a_i^{(\ell)} \cdot \sk_{j}^{(\ell)} + b_{2i\cdot2^\ell +j}^{(\ell)} 
&~\approx~& a_{2i}^{(\ell-1)} \cdot \sk_{j}^{(\ell - 1)} + b_{2i\cdot2^\ell+j}^{(\ell-1)} \hspace{.65cm} \approx~m_{2i\cdot2^\ell+j}\enspace, \ \textrm{ and}\\
a_i^{(\ell)} \cdot \sk_{j+2^\ell}^{(\ell)} + b_{(2i+1)\cdot2^\ell +j}^{(\ell)} 
&~\approx~& a_{2i+1}^{(\ell-1)} \cdot \sk_{j}^{(\ell - 1)} + b_{(2i+1)\cdot2^\ell+j}^{(\ell-1)}~\approx~m_{(2i+1)\cdot2^\ell+j} \enspace,
\end{eqnarray*}
where the last relations hold thanks to the induction hypothesis. 
It proves that our hypothesis holds for~$\ell$.

For each $\ell$, Algorithm~\ref{alg:fast_fmtswitch} makes 
$n/2^{\ell}$ calls to  Algorithm~\ref{alg:fmtswitch} with two secrets for the $a$-part and $n$ calls to Algorithm~\ref{alg:fmtswitch} with two secrets for the $b$-part. The total cost is thus $O(n \log n)$ key-switchings. \qed
\end{proof}

For a moderate number of ciphertexts, this asymptotic improvement may not show in practice as this method involves more changes of modulus (from~$q$ to~$pq$ and backwards). Indeed, modulus changes can be costly in practice.

\subsection{Shared-$a$ to shared-$s$ conversion}
\label{sec:backfmtswitch}
Suppose some computations have been run on shared-$a$ ciphertexts, initially under keys~$(\sk_i')_{0\le i < n}$ and we want to convert back to ciphertexts for a shared secret~$\sk$. 
This may be implemented using the \KeySwitch\  algorithm for each of the ciphertexts.
We observe that this introduces a circular security assumption, as the format conversion key from~$\sk$ to the~$\sk_i'$'s contains
\RLWE\ encryptions of~$p \cdot \sk$ under the $\sk_i'$'s, and the backward format conversion keys contain \RLWE\ encryptions of~$p \cdot \sk_i'$'s under~$\sk$. The total cost of this conversion is $O(n)$ key switchings, so $\widetilde{O}(nN)$ arithmetic operations in~$\ZZ_{pq}$.

\subsection{From $\RLWE$ to $\MLWE$}\label{sec:moddecomp}
We consider  the following conversion from $\RLWE$ to $\MLWE$: 
\begin{itemize}
    \item[$\bullet$] \ModDecomp\label{not:moddecomp}. On input $\ct = (a, b) \in \R_{q,N}^2$, \ModDecomp\ returns  $({\ba}_i, b_i) \in \R_{q,N'}^{k+1}$ for $0\le i < k$, such that the underlying plaintexts correspond to the plaintext underlying~$(a,b)$. The cost of \ModDecomp\ is linear in the size of its input -- it is a reorganization of data. 
\end{itemize}
This is a reformulation of the identity $a\cdot \sk + b \approx m \bmod q$: decomposing the polynomials $a$, $\sk$, $b$ and~$m$ of  degrees smaller than~$N$  in the form $P = \sum_{0 \leq i <k} p_i(Y)\cdot X^{i}$, with $\deg p_i < N'=N/k$ and $Y=X^k$, and writing 
\[
a\cdot \sk + b \approx m \ \Longleftrightarrow 
   \sum_{0 \leq i <k} a_i(Y) \cdot \sk_{(j-i) \bmod k}(Y) \cdot Y^{e_{ij}} + b_j(Y) \approx m_j(Y)\bmod q, \ \ 0\le j < k \enspace, 
\]
where 
\[
e_{ij} = \left\{ \begin{array}{ll}
1& \textrm{if}\ i + j \ge k\enspace, \\
0& \textrm{otherwise}\enspace.\end{array}\right.
\]
This is an \MLWE\ encryption of~$m_j$. 
Depending on whether the term $Y^{e_{ij}}$ is merged with $a_i$ or $\sk_{(j-i) \bmod k}$, we shall obtain a   decomposition in $k$ ciphertexts with common $\sk$ part (shared-$s$) or common $a$ part (shared-$a$).

\subsection{From $\MLWE$ to $\RLWE$}\label{sec:modpack}
Going back from $k=N/N'$ $\MLWE_{q,N'}^{k}$ ciphertexts to one $\RLWE_{q,N}$ ciphertext is an instance of the  {\em ring-packing} problem. This problem has been extensively studied (see, e.g., \cite{CGGI17, BGGJ20, CDKS21, BCK+23}) in the context of packing multiple $\LWE$ ciphertexts into an $\RLWE$ ciphertext. 
The more general task of packing $\MLWE_{q,N'}$ ciphertexts into an $\MLWE_{q,N}$ ciphertext is addressed in~\cite{BCK+23}.

\begin{itemize}
    \item[$\bullet$] \ModPack\label{not:modpack}. On input $({\bf a}_i, b_i) \in \R_{q,N'}^{k+1}$ for $0\le i < \ell$ for a common key~$\bsk \in \R_{N'}^k$, \ModPack\ returns  $\lceil  \ell/k \rceil$ ciphertexts $(a^{(j)},b^{(j)}) \in \R_{q,N}^{2}$, $0\le j < \lceil \ell/k\rceil$, whose underlying plaintexts correspond to the tuple of plaintexts underlying the $({\bf a}_i, b_i)$'s.
    This requires switching keys to the modulus $pq$, for a suitable auxiliary modulus $p$. The cost of \ModPack\ is $O(k \cdot \lceil \ell/k\rceil)$ operations in $\R_{pq,N}$, so $O(k\cdot \lceil \ell/k\rceil \cdot N\log N)$ operations in $\ZZ_{pq}$.
\end{itemize}
We refer to~\cite{BCK+23} for details. We note that $k \cdot \lceil \ell/k \rceil = O(\max(k, \ell))$; in the sequel, we shall use this simplified alternative form.

\subsection{Ciphertext matrix transpose}\label{sec:transpose}
We propose a fast ciphertext matrix transposition (\cmt) algorithm for $N \times N$ matrices. 
The algorithm converts 
$N$ ciphertexts that encrypt each row of a given matrix 
to
$N$ ciphertexts that encrypt each column.
To be precise, for a matrix $\textbf{M} = (M_{i,j})_{0 \leq i,j <N}$,
the \cmt algorithm takes as inputs
$N$ ciphertexts $\{(a_i, b_i)\}_{0\le i <N}$ such that
$$a_i\cdot\sk + b_i ~\approx~\sum_{0\le j < N} \Delta \cdot M_{i,j} X^j \bmod q, \ \ 0\le i < N\enspace,$$
and returns $N$ ciphertexts $\{(a'_j, b'_j)\}_{0\le j < N}$ such that
$$a'_j\cdot\sk + b'_j ~\approx~\sum_{0\le i<N} \Delta \cdot M_{i,j} X^i\bmod q, \ \ 0\le j < N\enspace,$$
where $\sk$ is the secret key.
In matrix form, these identities translate as 
\begin{equation}\label{eq:cmt_matrixform}
\BA \cdot \toep{\sk} + \BB \approx \toep{\sk}^t \cdot \BA' + \BB' \enspace. 
\end{equation}

Below, we propose a \cmt algorithm with $\widetilde{O}(N^2)$ operations in~$\mathbb{Z}_q$. 
We remark that  a transposition requires  $\Omega(N^2)$ operations to read and write the data.
While our algorithm focuses on $N\times N$ matrices, it can be extended to larger matrices by considering $N \times N$ blocks and transposing them individually. 
In addition, we note one can generalize the proposed transpose algorithm for \MLWE\ and shared-$a$ \RLWE\ formats, 
to transpose row-wise encryptions of those formats for matrices of various dimensions.

\subsubsection{High-level description.}\label{ssec:transpose-motive}
We first describe the motivation of our algorithm with cleartext ring elements. 
The algorithm starts from the observation that the trace of a ring element $m(X)=\sum_{0\le i<N} m_iX^i \in\ring_q$ is, up to a multiplicative factor, the constant term~$m_0$. Formally, 
$$N\cdot m_0 = \trace{m(X)} = \sum_{0\le t < N} m(X^{2t+1})\enspace.$$
Similarly, for each $0\le j < N$, the trace of $X^{-j}\cdot m(X)$ gives $m_j$:
$$N\cdot  m_j = \trace{X^{-j}\cdot m(X)} = \sum_{0\le t <N} X^{-j(2t+1)}\cdot m(X^{2t+1})\enspace.$$

Assume now that we are given $N$ ring elements $m_0, \ldots, m_{N-1}$ in $\ring$: 
$$m_i(X)=\sum_{0\le j < N} {m_{i,j}X^j}\in\ring\enspace.$$ 
We aim to obtain $m'_j$ such that
$$m'_j(X)=\sum_{0\le i < N} {m_{i,j}X^i}\in\ring\enspace,$$ 
for each $0\le j < N$.
The above observation implies that
\begin{equation*}\begin{split}
N\cdot  m'_j(X)&=\sum_{0\le i< N} \trace{X^{-j}\cdot m_i}\cdot X^i \\ 
& = \sum_{0\le i < N}\sum_{0\le t< N} X^{-j(2t+1)+i}\cdot m_{i}(X^{2t+1})\\
         &=\sum_{0\le t<N} \left(\sum_{0\le i< N} X^{i} \cdot m_{i}(X^{2t+1})\right) X^{-j(2t+1)},~~~0\le j < N \enspace.
\end{split}\end{equation*}
Importantly, the term $\sum_{0\le i < N} X^{i}\cdot  m_{i}(X^{2t+1})$ is independent from $j$ for all $0\le t <N$. 

Based on this discussion, the outline of the  $\cmt$ algorithm is as follows:
\begin{enumerate}
    \item Compute $\{\widetilde{m}_t=\sum_{0\le i< N} m_i(X)\cdot X^{i\cdot (2t+1)^{-1}} \}_{0\le t <N}$ from $\{m_i\}_{0\le i < N}$;
    \item Compute $\{\overline{m}_t=\widetilde{m}_t(X^{2t+1})\}_{0\le t <N}$ from $\{\widetilde{m}_t\}_{0\le t < N}$;
     \item Compute $\{m'_j=\sum_{0\le t<N} {\bar{m}_t(X)}\cdot X^{-j(2t+1)}\}_{0\le j < N}$ from $\{\bar{m}_t\}_{0\le t < N}$. 
\end{enumerate}
In the first step, the term $(2t+1)^{-1}$ refers to the inverse of~$2t+1$ modulo~$2N$. 
We can homomorphically perform all the above steps in encrypted state by using key switching and arithmetic operations of FHE schemes.
The homomorphic algorithm contains only $N$ key switchings (Step~2), but requires $N^2$ ring additions during the first and third steps, which consume $O(N^3)$ operations in~$\mathbb{Z}_q$. 

To avoid this high cost, we devise
\tweak, an algorithm that can perform the first and third steps using~$\widetilde{O}(N^2)$ operations instead of~$O(N^3)$. 
It relies on the observation  that the involved ring additions  are structured and have the form 
$$\sum_{0\le i<N} {f_i(X)}\cdot X^{2ij},~~~0\le j <N\enspace,$$
for some $f_i(X)$'s, $0 \leq i < N$. 
In particular, the first and third steps yield $\sum_{i} {(X^i\cdot m_i(X))\cdot X^{2i\frac{(2t+1)^{-1}-1}{2}}}$ and $\sum_{t} {(X^{-j}\cdot \bar{m}_t(X))\cdot X^{2t(-j)}}$, respectively.

\subsubsection{The \tweak algorithm.}\label{ssec:tweak}
Algorithm~\ref{alg:tweak} takes as inputs $n$ ciphertexts $(\ctvec_i)_{0\le i < n}$ and outputs the following $n$ ciphertexts:
$$\left\{ \sum_{0\le i< n} X^{2ij \frac{N}{n}} \cdot \ctvec_i\right\}_{0\le j < n} \enspace.$$ 

To compute the above sums, it iteratively updates $\{\ctvec_{j}'\}_j$ to store the relevant sums in Equation~\eqref{eq:ct_prime_j},
then $\{\ctvec_0,\ldots,\ctvec_3\}$, and so on, until obtaining $\{\ctvec_{j}'\}_{0\le j<n}$. 
To compute $\{\ctvec_{j}'\}_{0\le j<2^\ell}$ from $\{\ctvec_{j}'\}_{0\le j<2^{\ell-1}}$, it solves and applies the result of a subproblem on $\{\ctvec_{i}\}_{0\le i<2^\ell}$. 
After repeating this process  $\log{n}$ times,  the set~$\{\ctvec_j'\}_j$ contains the desired output. 
Algorithm~\ref{alg:tweak} describes the process in detail.

\begin{algorithm}[H]
\caption{\tweak}\label{alg:tweak} 
\begin{algorithmic}[1]
\REQUIRE Ciphertexts $(\ctvec_i)_{0 \leq i < n}$, where~$n$ is a power of two.
\ENSURE Ciphertexts $(\ctvec'_j)_{0 \leq j < n}$ such that $\ctvec'_j = \sum_{0\le i<N} {X^{2ij\frac{N}{n}}\ctvec_i}$ for all $j$.
\IF {$n=1$}
    \RETURN $\ctvec_0$.
\ENDIF
\STATE $\ctvec'_0=\ctvec_0$;
\FOR {$\ell\gets 0~\TO~\log{n}-1$}
    \STATE $\auxvec \gets \tweak\left((\ctvec_{(2j+1)n/2^{\ell+1}})_{0 \leq j<2^\ell}\right)$\ ;
    \FOR {$j\gets 0~\TO~2^\ell -1$}
        \STATE $\ctvec'_{j+2^\ell}  \gets \ctvec'_j - X^{\frac{N}{2^\ell}j} \cdot \auxvec_j$\ ;
        \STATE $\ctvec'_{j} \gets \ctvec'_j + X^{\frac{N}{2^\ell}j} \cdot \auxvec_j$\ ;
    \ENDFOR
\ENDFOR
\RETURN $(\ctvec'_j)_{0\le j <n}$.
\end{algorithmic}
\end{algorithm}

\begin{theorem}\label{thm:tweak-correctness}
    Algorithm~\ref{alg:tweak} is correct. Further, it consumes $\widetilde{O}(nN)$ arithmetic operations in $\mathbb{Z}_{q}$. 
\end{theorem}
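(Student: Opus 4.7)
The plan is to prove correctness by induction on $n$, via an invariant describing the contents of the cells $\ctvec'_j$ after each iteration of the outer $\ell$-loop, and then to bound the complexity by unfolding the associated recurrence.

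For each $\ell \in \{0, \ldots, \log n - 1\}$, introduce $S_\ell = \{(2k+1)\cdot n/2^{\ell+1} : 0\leq k < 2^\ell\}$ and $I_\ell = \{0\} \cup S_0 \cup \cdots \cup S_\ell$. A quick inspection of $2$-adic valuations shows that every index $i \in \{1, \ldots, n-1\}$ lies in exactly one $S_\ell$, and hence $I_{\log n - 1} = \{0, \ldots, n-1\}$. The invariant I plan to establish is: after iteration $\ell$, the cells $\ctvec'_j$ for $j \in [0, 2^{\ell+1})$ store the partial sums $P_\ell^{(j)} := \sum_{i \in I_\ell} X^{2ij N/n} \ctvec_i$, where the pseudocode is read in an in-place fashion, with $\ctvec_j$ on the right-hand side of the butterfly denoting the running value of cell $\ctvec'_j$. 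Applied at $\ell = \log n - 1$, this invariant yields exactly the claimed output.

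The inductive step will combine two elementary identities. First, by the induction hypothesis applied to the recursive call of size $2^\ell$, we have $\auxvec_j = \sum_{k<2^\ell} X^{2kj N/2^\ell} \ctvec_{(2k+1)n/2^{\ell+1}}$; substituting $i = (2k+1)n/2^{\ell+1}$ rewrites this as $X^{jN/2^\ell} \auxvec_j = \sum_{i \in S_\ell} X^{2ij N/n} \ctvec_i$. Second, the monomial $X^{2i \cdot 2^\ell N/n}$ equals $+1$ when $i \in I_{\ell-1}$ (since $v_2(i) \geq \log n - \ell$, or $i = 0$) and $-1$ when $i \in S_\ell$ (since $v_2(i) = \log n - \ell - 1$, so the exponent is an odd multiple of $N$ and $X^N = -1$ in $\ring$). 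Together these give $P_\ell^{(j)} = P_{\ell-1}^{(j)} + X^{jN/2^\ell} \auxvec_j$ and $P_\ell^{(j+2^\ell)} = P_{\ell-1}^{(j)} - X^{jN/2^\ell} \auxvec_j$ for $j \in [0, 2^\ell)$, which is exactly the butterfly update and closes the induction.

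For the complexity, the inner loop at iteration $\ell$ performs $2^\ell$ butterflies, each consisting of one monomial multiplication and two ciphertext additions, at a cost of $O(N)$ arithmetic operations in $\ZZ_q$. Setting $a_k = T(2^k)$, the total cost satisfies $a_k = \sum_{j<k} a_j + O(2^k N)$; subtracting consecutive instances yields the first-order recurrence $a_k = 2a_{k-1} + O(2^k N)$, whose solution $a_k/2^k = O(kN)$ gives $T(n) = O(nN \log n) = \widetilde{O}(nN)$. The main subtlety I anticipate is the in-place reading of the pseudocode: strictly interpreting $\ctvec_j$ on the right-hand side as the original input would collapse the invariant, and one must view it instead as the current value of $\ctvec'_j$ written during earlier iterations. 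Once this convention is pinned down, correctness reduces to the $2$-adic dichotomy above and the complexity bound is a routine divide-and-conquer unfolding.
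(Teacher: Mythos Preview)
Your proposal is correct and follows essentially the same route as the paper: an invariant of the form $\ctvec'_j=\sum_{i} X^{2ijN/n}\ctvec_i$ over the index set $\{k\cdot n/2^{\ell+1}:0\le k<2^{\ell+1}\}$ (which is precisely your $I_\ell$), established by induction on $\ell$ using the recursive call's correctness, together with the same additive recurrence for the cost. Your explicit identification of the in-place reading of the butterfly (i.e., that $\ctvec_j$ on the right-hand side must denote the running $\ctvec'_j$) is a useful clarification that the paper's proof relies on tacitly.
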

\begin{proof}
    We prove the correctness by using induction on~$n$. When $n=1$, it is direct. Then, for each $n>1$, suppose that the algorithm is correct for any~$n_0<n$.  We prove  that the following holds for all~$\ell$:
    \begin{equation}
    \label{eq:ct_prime_j}
    \ctvec'_j = \sum_{0\le i < 2^{\ell+1}} X^{2ij\frac{N}{2^{\ell+1}}}\cdot\ctvec_{i\frac{n}{2^{\ell+1}}}, \ \ 0 \leq j < 2^{\ell+1} \enspace,
    \end{equation}  
    after the $\ell$-th  iteration of the loop consisting of Steps~$6$--$10$. 
    Before starting the loop, the claim holds with $\ell=-1$. 
    Suppose the claim holds after the $(\ell-1)$-th iteration. By the induction hypothesis, 
    \begin{equation*}\begin{split}
        \ctvec'_{j} &= \sum_{0\le i < 2^\ell} X^{2ij\frac{N}{2^\ell}}\cdot\ctvec_{i\frac{n}{2^\ell}} 
    + X^{\frac{N}{2^\ell}j} \left(\sum_{0\le i < 2^\ell}{X^{2ij\frac{N}{2^\ell}}\cdot \ctvec_{(2i+1)n/2^{\ell+1}}}\right)\\
    &=\sum_{0\le i < 2^\ell} X^{2(2i)j\frac{N}{2^{\ell+1}}}\cdot\ctvec_{2i\frac{n}{2^{\ell+1}}} 
    + \sum_{0\le i < 2^\ell}{X^{2(2i+1)j\frac{N}{2^{\ell+1}}}\cdot \ctvec_{(2i+1)n/2^{\ell+1}}}\\
    &=\sum_{0\le i < 2^{\ell+1}} X^{2ij\frac{N}{2^{\ell+1}}}\cdot\ctvec_{i\frac{n}{2^{\ell+1}}} \enspace, 
    \end{split}\end{equation*}
    for each $0 \leq j< 2^\ell$. Also, since $X^N=-1$ in $\ring$, we have
    \begin{equation*}\begin{split}
        \ctvec'_{j+2^\ell} &= \sum_{0\le i < 2^\ell} X^{2ij\frac{N}{2^\ell}}\cdot\ctvec_{i\frac{n}{2^\ell}} 
    - X^{\frac{N}{2^\ell}j} \left(\sum_{0\le i < 2^{\ell}}{X^{2ij\frac{N}{2^\ell}}\cdot \ctvec_{(2i+1)n/2^{\ell+1}}}\right)\\
    &=\sum_{0\le i < 2^{\ell}} X^{2(2i)j\frac{N}{2^{\ell+1}}}\cdot\ctvec_{2i\frac{n}{2^{\ell+1}}} 
    - \sum_{0\le i < 2^{\ell}}{X^{2(2i+1)j\frac{N}{2^{\ell+1}}}\cdot \ctvec_{(2i+1)n/2^{\ell+1}}}\\
    &=\sum_{0\le i < 2^{\ell+1}} (-1)^i X^{2ij\frac{N}{2^{\ell+1}}}\cdot\ctvec_{i\frac{n}{2^{\ell+1}}} \\ 
    &=\sum_{0\le i < 2^{\ell+1}} X^{2i(j+2^\ell)\frac{N}{2^{\ell+1}}}\cdot\ctvec_{i\frac{n}{2^{\ell+1}}}\enspace.
    \end{split}\end{equation*}
  Overall, this gives  Equation~\eqref{eq:ct_prime_j}. Taking~$\ell = \log n -1$ completes the proof of the induction step. This provides the correctness. 
  
  We now turn to the complexity analysis. 
Let the cost of \tweak on $n$ ciphertexts be denoted by $T(n)$. 
Steps~$6$--$10$ can be performed with $T(2^\ell)+4\cdot 2^\ell N$ operations over~$\ZZ_q$.
Therefore, for all power-of-two integers~$n$, we have:
$$T(n) \leq T(n/2)+T(n/4)+\ldots+T(1)+4(n-1)N\enspace.$$
Finally, using this inequality, we prove by induction on~$n$ that~$T(n) \leq 2nN \log n$. Note that $T(1)=0$, as the algorithm does not require any computation when~$n=1$. If $T(n)\le 2nN\log{n}$ for all $n<n_0$, then 
\begin{align*}
T(n_0)~&\le~\sum_{i=1}^{\log{n_0}}{2\frac{n_0}{2^i}N\log{\frac{n_0}{2^i}}} +4(n_0-1)N\\~&=~2n_0N\left(\log{n_0}\sum_{i=1}^{\log{n_0}}\frac{1}{2^i} -\sum_{i=1}^{\log{n_0}}\frac{i}{2^i}+2-\frac{2}{n_0}\right)
\\~&=~2n_0N\left(\log{n_0}\left(1-\frac{1}{n_0}\right)-\left(2-\frac{\log{n_0}+2}{n_0}\right)+2-\frac{2}{n_0}\right)\\~&=~2n_0N\log{n_0}\enspace.
\end{align*}
Therefore, we have $T(n)\le 2nN\log{n}$ for all power-of-two integers $n$. 
\qed
\end{proof}

\subsubsection{\cmt algorithm.}\label{ssec:transpose}
We now use  \tweak to build a  \cmt algorithm that consumes $\widetilde{O}(N^2)$ arithmetic operations.
In a nutshell,  Algorithm~\ref{alg:transpose} tweaks the input ciphertexts $\ctvec$ to obtain $\auxvec$ and tweaks them again to obtain the transposed ciphertexts~$\ctvec'$.

\begin{algorithm}[ht]
\caption{\transpose}\label{alg:transpose} 
\begin{algorithmic}[1]
\REQUIRE Ciphertexts $(\ctvec_i)_{0 \leq i < N}$ such that $\ctvec_i$ encrypts $\bm_i = (m_{i,0},\ldots,m_{i,N-1})$ in its coefficient for each $i$.
\REQUIRE Switching keys $\swk_{pq,q, \sk(X^{2t+1})\rightarrow \sk}$ for $1\le t < N$.
\ENSURE Ciphertexts $(\ctvec'_j)_{0 \leq j < N}$ such that $\ctvec'_j$ encrypts $\bm'_j = (m_{0,j},\ldots,m_{N-1,j})$
for each $j$.
\STATE $(\auxvec_j)_{0 \le j < N} \gets \tweak\left((X^i\cdot \ctvec_i)_{0\le i <N}\right)$\ ;
\FOR {$j\gets 0~\TO~N-1$}
    \STATE $\auxvec'_j \gets (N^{-1}\bmod{q}) \cdot \auxvec_{(-1+(2j+1)^{-1}\bmod{2N})/2}$\ ;
    \STATE $\auxvec'_j \gets \aut(\auxvec'_j, 2j+1; \swk_{pq,q, \sk(X^{2j+1})\rightarrow \sk})$ \ ;
\ENDFOR
\STATE $(\ctvec''_j)_{0 \le j < N} \gets \tweak\left((\auxvec_j')_{0 \le j < N}\right)$ \ ;
\FOR {$j\gets 1~\TO~N$}
    \STATE $\ctvec'_{j\bmod{N}} \gets -X^{N-j} \cdot\ctvec''_{N-j}$ \ ; 
\ENDFOR
\RETURN $(\ctvec'_j)_{0 \le j < N}$.
\end{algorithmic}
\end{algorithm}

At Step~3, the current ciphertext is multiplied by the constant $N^{-1}\bmod{q}$. 
As our algorithm involves traces over~$\ring_q$, it creates a scaling of the plaintext by a factor~$N$. Pre-multiplying by $N^{-1}\bmod{q}$ allows us to avoid this scaling. 
This requires that $N$ is invertible modulo~$q$, which is most often the case in CKKS, where~$q$ is typically chosen odd (and~$N$ is a power of two). 
This trick is borrowed from~\cite{CDKS21}.

\begin{theorem}\label{thm:transpose-correctness}
    Assume that~$q$ is coprime with~$N$. Then Algorithm~\ref{alg:transpose} is correct. Further, it consumes $\widetilde{O}(N^2)$ arithmetic operations in $\mathbb{Z}_{pq}$, where~$p \geq q$ is the auxiliary modulus used in \aut's key switchings.
\end{theorem}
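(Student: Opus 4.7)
\medskip
\noindent
\textbf{Proof plan.}
The plan is to establish correctness by tracking what the underlying plaintext of each intermediate ciphertext vector is at each step of Algorithm~\ref{alg:transpose}, and to check that the final $(\ctvec'_j)$ encodes the transposed data. The computation mirrors the three-step mathematical decomposition $m_i \to \widetilde{m}_t \to \bar{m}_t \to m'_j$ from Section~\ref{ssec:transpose-motive}. The main subtle point is bookkeeping around the index $(2j+1)^{-1} \bmod 2N$ at Step~3, which I will address first.

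I would first apply Theorem~\ref{thm:tweak-correctness} to Step~1 with input $(X^i\cdot\ctvec_i)_{0\le i<N}$ and $n=N$, so that $\auxvec_k$ encrypts $\sum_{0\le i<N} X^{(2k+1)i}\cdot m_i(X)$ (where I absorb the factor $X^i$ from the input into the sum). Setting $2k+1=(2j+1)^{-1}\bmod 2N$ (this is an involution on odd residues modulo $2N$, so indexing is well-defined), Step~3 produces $\auxvec'_j$ encrypting $N^{-1}\cdot\widetilde{m}_j$ with $\widetilde{m}_j(X)=\sum_i m_i(X)\cdot X^{i(2j+1)^{-1}}$. The automorphism $X\mapsto X^{2j+1}$ applied at Step~4 sends $X^{i(2j+1)^{-1}}$ to $X^i$ and $m_i(X)$ to $m_i(X^{2j+1})$, so $\auxvec'_j$ now encrypts $N^{-1}\cdot\bar{m}_j$ with $\bar{m}_j=\sum_i X^i\cdot m_i(X^{2j+1})$, as desired. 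The correctness of the key-switching itself follows from standard CKKS \aut semantics.

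Next, I would apply Theorem~\ref{thm:tweak-correctness} to Step~6 (again with $n=N$) to conclude that $\ctvec''_k$ encrypts $N^{-1}\sum_{0\le t<N}X^{2kt}\bar{m}_t$. At Step~8, for $j\in\{1,\ldots,N\}$, setting $k=(N-j)\bmod N$ and using $X^N=-1$ (so $X^{2Nt}=1$ and $-X^{N-j}=X^{-j}$ in $\R$), the ciphertext $\ctvec'_{j\bmod N}=-X^{N-j}\cdot\ctvec''_{(N-j)\bmod N}$ encrypts
\[
N^{-1}\,X^{-j}\sum_{0\le t<N}X^{-2jt}\bar{m}_t=N^{-1}\sum_{0\le t<N}X^{-j(2t+1)}\bar{m}_t.
\]
By the trace identity recalled in Section~\ref{ssec:transpose-motive}, this equals $\sum_{0\le i<N}m_{i,j}X^i$, which is precisely the required transposed plaintext. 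This covers $j=1,\ldots,N-1$ directly and $j=0$ by the $j\bmod N$ reduction. The assumption that $N$ is coprime with~$q$ ensures that the inverse $N^{-1}\bmod q$ used at Step~3 exists, so that the overall scaling factor cancels cleanly.

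For the complexity, the two calls to \tweak each cost $\widetilde{O}(N^2)$ arithmetic operations in~$\ZZ_q$ by Theorem~\ref{thm:tweak-correctness}. The loop of Steps~2--5 performs $N$ applications of \aut, each a single key-switching costing $\widetilde{O}(N)$ operations in $\ZZ_{pq}$, for a total of $\widetilde{O}(N^2)$. The monomial scalings at Steps~3 and~7--9 are cyclotomic multiplications by monomials and constants, costing $O(N)$ per ciphertext and $O(N^2)$ overall. Summing yields the claimed $\widetilde{O}(N^2)$ bound in $\ZZ_{pq}$. The main pitfall, as anticipated, is the indexing identification $2k+1=(2j+1)^{-1}\bmod 2N$; once this is unpacked, the rest is routine verification.
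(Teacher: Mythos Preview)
Your proposal is correct and follows essentially the same route as the paper's proof: apply Theorem~\ref{thm:tweak-correctness} to Step~1, unwind the indexing $2k+1=(2j+1)^{-1}\bmod 2N$ at Step~3, use the \aut\ semantics at Step~4, apply Theorem~\ref{thm:tweak-correctness} again at Step~6, and then reduce Step~8 to the trace identity; the complexity bound is likewise obtained from two \tweak\ calls plus $N$ key-switchings. The only cosmetic difference is that the paper re-indexes Step~8 by substituting $j\mapsto N-j$ and tracks $\ctvec'_{(N-j)\bmod N}$, whereas you keep the loop variable and use $-X^{N-j}=X^{-j}$; both arrive at the same trace computation.
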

\begin{proof}
From the correctness of Algorithm~\ref{alg:tweak} (Theorem~\ref{thm:tweak-correctness}), we have, after Step~$1$,
$$
    \auxvec_t = \sum_{0\le i<N}{X^{2it}\cdot(X^i \cdot\ctvec_i)} = \sum_{0\le i<N}{X^{i(2t+1)}\cdot\ctvec_i}, \ \ 0 \le t < N \enspace.
$$
After Steps $2$--$5$, we have, for each $0\le t <N$:
\begin{eqnarray*}
    N\cdot\auxvec'_t & = &  \aut\left(\sum_{0\le i<N}{X^{i(2t+1)^{-1}}\cdot\ctvec_i},~2t+1\right) \enspace,
\end{eqnarray*}
which decrypts approximately to 
$\sum_{0\le i<N} X^i \cdot \lfloor\Delta\cdot\bm_i(X^{2t+1})\rceil.$

Invoking Theorem~\ref{thm:tweak-correctness} again, we obtain that the following holds for all~$0 \le j < N$ after Step~$6$:
\begin{equation*}\begin{split}
    N\cdot\ctvec''_j  &= N\cdot\sum_{0\le t<N} (X^{2jt}\cdot\auxvec'_t) \enspace. 
\end{split}\end{equation*}
This decrypts approximately to
$\sum_{0\le t<N} \sum_{0\le i<N} X^{2jt}\cdot X^i\cdot \lfloor\Delta\cdot\bm_i(X^{2t+1})\rceil.
$
Finally, for each $0\le j <N$, we have
that $N\cdot\ctvec'_{(N-j)\bmod{N}}$ decrypts approximately to
\begin{equation*}\begin{split}
    -X^j \cdot\sum_{0\le t<N} \sum_{0\le i<N} X^{2jt}\cdot X^i \cdot\lfloor\Delta\cdot\bm_i(X^{2t+1})\rceil
        &= -\sum_{0\le t, i < N} X^{j(2t+1)}\cdot X^i \cdot\lfloor \Delta \cdot \bm_i(X^{2t+1})\rceil \\ &=-\sum_{0\le i<N} X^i \cdot\trace{X^j\cdot\lfloor \Delta\cdot\bm_i\rceil}\\
        &= -\sum_{0\le i<N} X^i \cdot\trace{(-\lfloor \Delta \cdot m_{i,N-j} \rceil,\ldots, \lfloor \Delta \cdot m_{i,N-j-1} \rceil)}\\
        &= N\cdot\sum_{0\le i<N} \lfloor \Delta \cdot m_{i, N-j}\rceil \cdot X^i \enspace.
\end{split}\end{equation*}
This completes the correctness proof.

We now turn to the complexity analysis. Algorithm~\ref{alg:transpose} consists of two calls to the \tweak algorithm and $N$ key switchings. 
Each call to \tweak consumes $\widetilde{O}(N^2)$ arithmetic operations over~$\ZZ_q$, by Theorem~\ref{thm:tweak-correctness}. Further, the cost of each key switching is $\widetilde{O}(N)$.
Consequently, the overall cost of Algorithm~\ref{alg:transpose} is $\widetilde{O}(N^2)$. \qed
\end{proof}

\subsubsection{Lightweight algorithm with small key size.}\label{sec:light}
In Algorithm~\ref{alg:transpose}, we use $N$ switching keys for $N$ automorphisms (Steps~$2$--$5$ in Algorithm~\ref{alg:transpose}). 
To reduce the key size, in the lightweight algorithm, we  update a single switching key before using it for the automorphism. 

The fundamental observation, inspired from~\cite{LLKN23}, is that the switching key is an encryption of $p\cdot\sk(X^{2t+1})$ for each $0\le t < N$, where $p\approx q$ is the auxiliary modulus used for key switchings. As $5$ and $-1$ generate the multiplicative group $(\ZZ/N\ZZ)^\star$, the set of all automorphism keys can be described as 
$$\{\enc{p\cdot\sk(X^{5^i})}\}_{0 \le i <N/2}\cup\{\enc{p\cdot\sk(X^{-5^i})}\}_{0 \le i <N/2} \enspace.$$
Consequently, we can generate another switching key from a given switching key by performing an automorphism to the given switching key. If $p'$ is an auxiliary modulus dedicated to key generation, then 
using only two master rotation keys ($\enc{pp'\cdot\sk(X^5)}$ and $\enc{pp'\cdot\sk(X^{-1})}$ and an initial switching key $\enc{p\cdot\sk(X)}$, then we can generate all switching keys. 

In Algorithm~\ref{alg:transpose}, we sequentially use the switching keys in the consisting of Steps~$2$--$5$. 
Using the above technique, we can use a single switching key by repeatedly updating it.
The switching key is initially set to $\enc{p\cdot\sk(X)}$. 
For the first $N/2$ iterations, we use the key and update it using the master rotation key $\enc{pp'\cdot\sk(X^5)}$. 
More precisely, for each $i$-th loop iteration, the switching key will be updated to $\enc{p\cdot\sk(X^{5^i})}$. 
After $N/2$ iterations, we update the switching key using the master conjugation key $\enc{pp'\cdot\sk(X^{-1})}$ and continue the last $N/2$ iterations. 
In particular, for each $(i+N/2)$-th loop iteration, the switching key will be updated to~$\enc{p\cdot\sk(X^{-5^i})}$. 

With the above strategy, we require only three switching keys. The computational cost increases by a constant factor, since $N$ additional key switchings have been introduced, but the asymptotic complexity remains~$\widetilde{O}(N^2)$.

\subsection{From \RLWE\ to \RGSW\ for square matrices}
\label{sec:lwetogsw}
As an application of the \cmt algorithm, we show how to compute a \RGSW\ encryption of $p\lfloor \Delta\cdot \BM\rceil$ when given an \RLWE\ encryption of $p \lfloor \Delta \cdot \BM \rceil $ modulo $pq$, for an $N \times N$ matrix~$\BM$. We recall from Section~\ref{sec:gsw_format} that the \RGSW\ format can handle two different secret keys $\isst{\BS}$ and ${\isst{\BS'}}$. We assume that they are defined with the same ring degree~$N$, and that ${\isst{\BS}} = \toep{\sk}$ and  ${\isst{\BS'}} = \toep{\sk'}$ for some $\sk, \sk'\in \ring_{N}$. 

From (\ref{eq:cmt_matrixform}), we see that using the \cmt algorithm on input $(\BA, {\bf 0})$ yields a pair $(\BA', \BB')$ such that
\[
{\isst{\BS'}}^t \cdot \BA' + \BB' \approx \BA \cdot {\isst{\BS'}}  \bmod pq \enspace.
\]
Similarly, from $(\BB, {\bf 0})$, we can use the \cmt algorithm to compute $(\BA'', \BB'')$ such that
\[
{\isst{\BS'}}^t \cdot \BA'' + \BB'' \approx \BB \cdot {\isst{\BS'}} \bmod pq \enspace. 
\]

We start from the \RLWE\ encryption of $p \Delta \cdot \BM$ modulo $pq$. This is the first part of the \RGSW\ encryption, namely:
\begin{equation}\label{eq:gsw_part1}
 p \cdot (\Delta \cdot \BM) \approx \isst{\BS} \cdot \BA + \BB \bmod pq \enspace. 
\end{equation}
We now explain how to get the second part of the  \RGSW\ encryption. We first derive
\begin{align*}
p \cdot (\Delta \cdot \BM) \cdot {\isst{\BS'}} & \approx \isst{\BS} \cdot \BA \cdot {\isst{\BS'}} + \BB \cdot {\isst{\BS'}} \bmod pq\enspace, \\
& \approx \isst{\BS}\cdot {\isst{\BS'}}^t \cdot \BA' + \isst{\BS}\cdot \BB' + {\isst{\BS'}}^t \cdot \BA'' + \BB'' \bmod pq\enspace. 
\end{align*}

We notice that ${\isst{\BS'}}^{t} = \toep{\sk'(X^{-1})}$ and $\isst{\BS}\cdot {\isst{\BS'}}^t = \toep{\sk\cdot \sk'(X^{-1})}$. Using a key-switching from $\sk \cdot \sk'(X^{-1})$ to $\sk$ on the columns of $(\BA', {\bf 0})$ and from $\sk'(X^{-1})$ to $\sk$ on the columns of $(\BA'', {\bf 0})$, we obtain
\begin{align*}
\isst{\BS}\cdot{\isst{\BS'}}^t \cdot \BA' & \approx \isst{\BS}\cdot \BA_0 + \BB_0 \bmod pq\enspace, \\
{\isst{\BS'}}^t\cdot \BA'' & \approx \isst{\BS}\cdot \BA_1 + \BB_1 \bmod pq\enspace, 
\end{align*}
leading to 
\[
p \cdot (\Delta \cdot \BM) \cdot {\isst{\BS'}} \approx \isst{\BS}\cdot \left(\BB' + \BA_0 + \BA_1 \right) + \BB'' + \BB_0 + \BB_1 \bmod pq\enspace,
\]
which is the second part of the \RGSW\ encryption of~$\BM$.

As we need to perform key-switching on ciphertexts modulo $pq$, we require switching keys defined modulo $p'pq$, where $p' \approx pq$ (note that this requirement can be reduced by using gadget decomposition). This eventually gives Algorithm~\ref{alg:rlwe_to_gsw}.

\begin{algorithm}
    \caption{\RLWE\ to \RGSW\ matrix conversion\label{alg:rlwe_to_gsw} }
\begin{algorithmic}[1]
\REQUIRE Switching keys $\swk_{p'pq,pq, \sk(X^{2t+1})\rightarrow \sk}$ for $1\le t < N$ required by Algorithm~\ref{alg:transpose}
\REQUIRE Keys $\swk_{pq,p'pq,\sk\cdot \sk'(X^{-1})\rightarrow \sk}$ and $\swk_{pq, p'pq,\sk'(X^{-1}) \rightarrow \sk}$. 
\REQUIRE An encryption $(\BA, \BB) \in \ZZ_{pq}^{N \times N} \times \ZZ_{pq}^{N\times N}$, under $\isst{\BS} = \toep{\sk}$, of the plaintext matrix $p\left\lfloor \Delta \cdot \BM \right\rceil \in \ZZ^{d_1\times d_2}$, such that Equation~\eqref{eq:gsw_part1} holds.
 \ENSURE A \RLWE\ encryption $(\widetilde{\BA}, \widetilde{\BB})$ of $p \left \lfloor \Delta \cdot \BM\right\rceil \isst{\BS'}$ under $\isst{\BS}$, where $\isst{\BS'} = \toep{\sk'}$. 
\STATE $(\BA', \BB') \leftarrow \transpose((\BA, {\bf 0}), (\swk_{p'pq,pq, \sk(X^{2t+1})\rightarrow \sk})_{1\le t < N})$;
\STATE $(\BA'', \BB'') \leftarrow \transpose((\BB, {\bf 0}), (\swk_{p'pq,pq, \sk(X^{2t+1})\rightarrow \sk})_{1\le t < N})$;
\STATE $(\BA_0, \BB_0) \leftarrow \textsf{\KeySwitch}_{\sk\cdot \sk'(X^{-1})\rightarrow \sk}(\BA', {\bf 0})$;
\STATE $(\BA_1, \BB_1) \leftarrow \textsf{\KeySwitch}_{\sk'(X^{-1})\rightarrow \sk}(\BA'', {\bf 0})$;
\RETURN $(\BA_0 + \BA_1 + \BB', \BB_0 + \BB_1 + \BB'')$.
\end{algorithmic}
\end{algorithm}

The correctness statement of the following theorem follows from the previous discussion. The cost estimate follows from inspection.  
\begin{theorem}
Assume that $pq$ is coprime with~$N$. Then Algorithm~\ref{alg:rlwe_to_gsw} is correct; in particular, it returns the second part of the \RGSW\ encryption of~$\BM$ under the secret key~$\isst{\BS}$. Furthermore, it consumes  $\widetilde{O}(N^2)$ arithmetic operations in~$\ZZ_{p'pq}$, where~$p' \geq pq$ is the auxiliary modulus used in~\KeySwitch.
\end{theorem}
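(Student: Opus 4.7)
The plan is to verify correctness by tracking the algebraic identity satisfied by each intermediate pair and then read off the complexity from direct inspection of the three main building blocks (two calls to \transpose, two key-switchings, and simple additions). I would begin by showing that the returned pair satisfies the second \RGSW\ defining identity
\[
p(\Delta \cdot \BM) \cdot \isst{\BS}' \approx \isst{\BS} \cdot (\BA_0 + \BA_1 + \BB') + (\BB_0 + \BB_1 + \BB'') \bmod pq,
\]
which, combined with Equation~(\ref{eq:gsw_part1}) on the input side, gives an \RGSW\ encryption of~$\BM$ as required.

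The first step is to use the key algebraic identity $\toep{s(X^{-1})} = \toep{s}^t$ for any $s\in \R_N$. This identity lets me reinterpret the pair $(\BA,\mathbf{0})$ as a row-wise \RLWE\ encryption of $\BA \cdot \isst{\BS}'$ under the auxiliary secret $\sk'(X^{-1})$, and likewise $(\BB,\mathbf{0})$ as a row-wise encryption of $\BB \cdot \isst{\BS}'$. Invoking Theorem~\ref{thm:transpose-correctness} on both inputs then produces $(\BA',\BB')$ and $(\BA'',\BB'')$ such that
\[
{\isst{\BS}'}^t \BA' + \BB' \approx \BA \cdot \isst{\BS}' \bmod pq
\quad\text{and}\quad
{\isst{\BS}'}^t \BA'' + \BB'' \approx \BB \cdot \isst{\BS}' \bmod pq.
\]

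The second step is to right-multiply Equation~(\ref{eq:gsw_part1}) by $\isst{\BS}'$ and substitute the two transpose identities, obtaining
\[
p(\Delta \BM) \isst{\BS}' \approx \isst{\BS} \cdot {\isst{\BS}'}^t \BA' + \isst{\BS} \BB' + {\isst{\BS}'}^t \BA'' + \BB'' \bmod pq.
\]
Using once more that $\isst{\BS} \cdot {\isst{\BS}'}^t = \toep{\sk}\toep{\sk'(X^{-1})} = \toep{\sk \cdot \sk'(X^{-1})}$ is itself a Toeplitz secret, the pair $(\BA',\mathbf{0})$ is a valid ciphertext under that product secret, and $(\BA'',\mathbf{0})$ a valid ciphertext under $\sk'(X^{-1})$. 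The two calls to \KeySwitch\ rewrite both under the target secret $\sk$, and a direct substitution collapses the right-hand side to $\isst{\BS}(\BA_0+\BA_1+\BB') + (\BB_0+\BB_1+\BB'')$, as desired.

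The main obstacle I would watch is the interplay of moduli and error growth. The transpose algorithm operates on ciphertexts modulo $pq$ but requires internally an auxiliary modulus at least as large as $pq$ in its own key-switchings (hence the hypothesis $p' \geq pq$), and the coprimality of~$N$ with~$pq$ is needed to multiply by $N^{-1}$ inside \transpose; both are part of the hypothesis. I would finally check that the cumulative errors of two transpose calls and two key-switchings still fit the ``approximate equality modulo $pq$'' tolerated by the \RGSW\ specification, which is routine since each call adds only polynomially small noise in the RLWE parameters. The complexity estimate follows immediately: each \transpose\ call costs $\widetilde{O}(N^2)$ by Theorem~\ref{thm:transpose-correctness}, and each column-wise key-switching over $N$ columns contributes an additional $\widetilde{O}(N^2)$, all in $\ZZ_{p'pq}$, yielding the claimed bound.
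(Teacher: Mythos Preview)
Your proposal is correct and follows essentially the same route as the paper: the paper's proof consists only of the sentence ``follows from the previous discussion,'' and that discussion is precisely what you have outlined---the Toeplitz identity $\toep{s(X^{-1})}=\toep{s}^t$, the two \transpose\ calls yielding $(\BA',\BB')$ and $(\BA'',\BB'')$, the product $\isst{\BS}\cdot{\isst{\BS}'}^t=\toep{\sk\cdot\sk'(X^{-1})}$, and the two key-switchings to $\sk$, with the complexity read off by inspection.
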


Note that the cost of this conversion is too large for it to be used in the context of a ciphertext to cleartext reduction for Mv, as it would
not be negligible compared to the cost of~$\ppmv$. In the MM case with dimensions~$d_1 \times d_2 \times d_3$, the conversion cost is negligible even with moderate dimension~$d_3$. 
\section{MM and MV Algorithms}\label{sec:algorithms}

We now present the MM and MV algorithms. As encryption format conversion have been handled in Section~\ref{sec:conversions}, we shall assume, in each algorithm, that the input matrices / vectors are given in the most suitable format, and that the output matrix / vector is also returned in the most suitable format. This simplifies the description of the linear algebra algorithms and increases modularity. 
After each algorithm description, we discuss the costs of pre- and post-processing \emph{from and to the shared-$s$ format}, which is the most common one in the context of FHE. For a matrix-matrix product of dimension $d_1 \times d_2 \times d_3$, pre- and post-processing is negligible as soon as~$d_3$ and~$d_2$ are sufficiently large; we derive the corresponding bounds.  
When applicable, we shall also discuss pre-computations. 

For convenience, we extend $\Rescale$ and $\Relin$ to let them operate on matrices as follows: 
\begin{itemize}
\item[$\bullet$] $\Rescale(\BA, \BB)$ is defined as returning $(\lfloor \BA/\Delta \rceil, \lfloor \BB/\Delta \rceil),$ where $\Delta$ is the underlying scaling factor.
\item[$\bullet$] $\Relin(\BA, \BB)$ takes as inputs $\BA, \BB \in \ZZ_q^{N\times N}$ in \RLWE\ format and performs a relinearization on the ciphertexts corresponding to the columns of $\BA$ and~$\BB$.
\end{itemize}
Further, in all this section, we implicitly assume that the scaling factor $\Delta$ divides the ciphertext modulus~$q$. If it is not the case in practice, then one needs to choose~$q_1 \approx \Delta$ that divides~$q$ and replace the moduli~$q/\Delta$ by~$q/q_1$. 

We let the notation $\ModPPMM_{q; d_1, d_2, d_3}$ denote the multiplication of two plaintext matrices of respective dimensions $d_1 \times d_2$ and $d_2 \times d_3$, modulo the positive integer~$q$. Most of the time, the modulus~$q$ is clear from the context and will be omitted from the subscript. The case $d_3 = 1$ will be denoted by $\ModPPMv$.

\subsection{Reducing $\cpmm$ and $\cpmv$ to their PP counterparts}\label{sse:pcmm}
We start by observing that structured-$\BS$ matrix formats, as described in Lemma~\ref{le:mat_struct_s}, yield a $\cpmm$ algorithm. Indeed, from Equation~\eqref{eq:mat_lwe} and using the notations of the aforementioned lemma, we deduce, for~$\BU_0 \in \ZZ^{d_2\times d_3}$:
\[
\isst{\BS}\cdot \BA \cdot \BU_0 + \BB \cdot \BU_0 \approx \Delta \cdot \BM \cdot \BU_0 \bmod q \enspace.  
\]
This shows that $(\BA \cdot \BU_0, \BB \cdot \BU_0)$ is a  structured-$\BS$ matrix encryption of~$\BM \cdot \BU_0$, using the same encryption format as $\BM$. We thus obtain Algorithm~\ref{alg:pcmm}.

\begin{algorithm}
\caption{\label{alg:pcmm}$\cpmm$ algorithm for shared-$a$, \MLWE\ and \RLWE\ formats.}
\begin{algorithmic}[1]
\REQUIRE
    A matrix $\BU \in \RR^{d_2\times d_3}$ encoded as the plaintext matrix $\BU_0 = \lfloor \Delta \cdot \BU \rceil \in \ZZ_q^{d_2\times d_3}$ with scaling factor $\Delta$. \\      
\REQUIRE
    An encryption $(\BA, \BB) \in \ZZ_q^{N\times d_2} \times \ZZ_q^{d_1\times d_2}$ (under the matrix secret $\isst{\BS}$) of the plaintext matrix $\lfloor \Delta \cdot \BM \rceil \in \ZZ_q^{d_1\times d_2}$ encoding $\BM \in \RR^{d_1\times d_2}$, such that Equation~\eqref{eq:enc_mat} holds.
\ENSURE
    An encryption $(\BA', \BB') \in \ZZ_{q'}^{N\times d_3} \times \ZZ_{q'}^{d_1\times d_3}$ such that 
    \[ \Delta \cdot (\BM \cdot \BU) \approx \isst{\BS} \cdot \BA' + \BB' \bmod q' \enspace. \]

\STATE $(\BA', \BB')\gets (\BA \cdot \BU_0, \BB \cdot \BU_0)$;
\RETURN $\Rescale_{q\rightarrow q'}(\BA', \BB')$.
\end{algorithmic}
\end{algorithm}

\begin{theorem}\label{th:pcmm}
Let $d_1, d_2, d_3$ be positive integers with $d_1|N$ or $N|d_1$.
Algorithm~\ref{alg:pcmm} is correct and reduces $\CPMM_{d_1, d_2, d_3}$ in shared-$a$ (resp.\ \MLWE \ or \RLWE) encryption format to one call to $\ModPPMM_{N, d_2, d_3}$ and one call to $\ModPPMM_{d_1, d_2, d_3}$, plus $O(\max(d_1, N) d_3)$ arithmetic operations on integers whose absolute values are smaller than the ciphertext modulus. 
\end{theorem}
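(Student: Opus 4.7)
The plan is to verify correctness by propagating the structured-$\BS$ matrix identity through a right multiplication by $\BU_0$ and then a rescaling, and to bound the cost by counting the operations performed at each of the two steps.

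First, I would start from the structural encryption identity guaranteed by Lemma~\ref{le:mat_struct_s}, namely $\isst{\BS}\cdot \BA + \BB \approx \Delta\cdot \BM \bmod q$, where the structure of $\isst{\BS}$ encodes the chosen format (\RLWE, shared-$a$ or \MLWE). Multiplying on the right by the integer matrix $\BU_0 = \lfloor \Delta\cdot \BU\rceil \in \ZZ_q^{d_2\times d_3}$ yields
\[
\isst{\BS}\cdot (\BA\cdot \BU_0) + \BB\cdot \BU_0 \approx \Delta\cdot \BM\cdot \BU_0 \approx \Delta^2\cdot (\BM\cdot \BU) \bmod q,
\]
the second approximate equality coming from the $\Delta$-encoding error $\|\BU_0 - \Delta\BU\|_\infty \leq 1/2$, which contributes an additional error term bounded by $\|\BM\|\cdot d_2/2$. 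Dividing both sides by $\Delta$ and invoking the definition of $\Rescale$ produces $(\BA', \BB') \in \ZZ_{q/\Delta}^{N\times d_3} \times \ZZ_{q/\Delta}^{d_1\times d_3}$ satisfying $\isst{\BS}\cdot \BA' + \BB' \approx \Delta\cdot (\BM\cdot \BU) \bmod q/\Delta$. Crucially, the Toeplitz / concatenated-Toeplitz structure of $\isst{\BS}$ is left untouched by the procedure, so by the converse part of Lemma~\ref{le:mat_struct_s} the pair $(\BA', \BB')$ is a valid encryption of $\BM\cdot \BU$ in the same format as the input.

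For the complexity count, Step~$1$ computes two cleartext matrix products over $\ZZ_q$: $\BA\cdot \BU_0$, whose dimensions are $N\times d_2\times d_3$ (since $\BA \in \ZZ_q^{N\times d_2}$, the $N$ coming from the width of $\isst{\BS}$ irrespective of whether $d_1$ is smaller or larger than $N$), and $\BB\cdot \BU_0$, whose dimensions are $d_1\times d_2\times d_3$. These are by definition one call to $\ModPPMM_{N,d_2,d_3}$ and one call to $\ModPPMM_{d_1,d_2,d_3}$. Step~$2$ rescales two matrices of total size $(N+d_1)d_3 = O(\max(d_1,N)d_3)$ and each entry-wise rescaling is performed with a constant number of integer operations on values below $q$, giving the stated $O(\max(d_1,N)d_3)$ overhead.

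The only slightly delicate point is ensuring that the structure of $\isst{\BS}$ really is preserved through Step~$1$: the argument is that Step~$1$ only acts on the right-hand factors $\BA$ and $\BB$ of the encryption identity and never touches $\isst{\BS}$, so that whatever Toeplitz-like shape characterizes the chosen format remains in place and the converse direction of Lemma~\ref{le:mat_struct_s} applies directly. Everything else is routine bookkeeping on the scaling factor and on the dimensions of the involved matrices.
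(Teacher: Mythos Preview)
Your proof is correct and follows essentially the same approach as the paper: multiply the structured-$\BS$ encryption identity on the right by $\BU_0$, observe that $\isst{\BS}$ is untouched so the format is preserved (invoking the converse of Lemma~\ref{le:mat_struct_s}), and read off the two $\ModPPMM$ dimensions from the sizes of $\BA$ and $\BB$ plus the $O(\max(d_1,N)d_3)$ cost of $\Rescale$. The paper's own argument is terser (it simply points to the preceding discussion and says the cost follows by inspection), but your added detail on the $\Delta$-encoding error and the rescaling bookkeeping is consistent with it and not a genuinely different route.
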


Correctness follows from the discussion above, while cost can be bounded by algorithm inspection.

The uniform character of this theorem hides an important point: the dominant cost, among the two plaintext matrix multiplications, may be either the first $\ModPPMM$ (in the \MLWE\ case where $d_1 | N$) or the second one (in the shared-$a$ case where $N | d_1$). In the second case, when $d_1 \rightarrow \infty$ for fixed $N$, we observe that the costs of $\CPMM_{d_1, d_2, d_3}$ and $\ModPPMM_{d_1, d_2, d_3}$ are asymptotically equal. 

In the shared-$s$ encryption format, Algorithm~\ref{alg:pcmm} remains valid up to the fact that $\BA \in \ZZ_q^{d_1\times d_2}$, $\BA' \in \ZZ_q^{d_1\times d_3}$. Theorem~\ref{th:pcmm} is then adapted as follows.  

\begin{theorem}\label{th:pcmm-shs}
Let $d_1, d_2, d_3$ be positive integers with $N|d_1$. Algorithm~\ref{alg:pcmm} modified as described above for the shared-$s$ encryption format is correct and reduces $\CPMM_{d_1, d_2, d_3}$ in the shared-$s$ format to two calls to $\ModPPMM_{d_1, d_2, d_3}$, plus $O(d_1 d_3)$ arithmetic operations on integers whose absolute values are smaller than the ciphertext modulus.
\end{theorem}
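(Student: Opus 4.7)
The proof follows almost verbatim from Theorem~\ref{th:pcmm}, with the only change being the shape of~$\BA$ in the shared-$s$ format. The plan is to invoke Lemma~\ref{le:mat_struct_s} to get the matrix form (\ref{eq:sh-s}), right-multiply it by the plaintext matrix~$\BU_0 = \lfloor \Delta \cdot \BU\rceil$, verify that the output equation again describes a shared-$s$ encryption, and finally track the scaling factor and count arithmetic operations.

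First, I would start from Equation~(\ref{eq:sh-s}) of Lemma~\ref{le:mat_struct_s}, which, in the shared-$s$ case, reads $\isst{\BS} \cdot \BA + \BB \approx \Delta \cdot \BM \bmod q$, where $\isst{\BS} = \BS_{\shs}(\sk) = \mathbf{I}_{d_1/N} \otimes \toep{\sk} \in \ZZ_q^{d_1 \times d_1}$, $\BA \in \ZZ_q^{d_1\times d_2}$ and $\BB \in \ZZ_q^{d_1 \times d_2}$. Multiplying on the right by~$\BU_0 \in \ZZ_q^{d_2 \times d_3}$ gives
\[
\isst{\BS} \cdot (\BA \cdot \BU_0) + (\BB \cdot \BU_0) \approx \Delta^2 \cdot \BM \cdot \BU \bmod q \enspace,
\]
where both sides are in $\ZZ_q^{d_1 \times d_3}$. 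The crucial observation is that the left-hand side matrix~$\isst{\BS}$ is unchanged, so the structural characterization in Lemma~\ref{le:mat_struct_s} still applies and $(\BA \cdot \BU_0, \BB \cdot \BU_0)$ is a valid shared-$s$ encryption (of the product), except that the scaling factor is now~$\Delta^2$ instead of~$\Delta$. The $\Rescale$ step divides through by~$\Delta$, yielding the announced output modulo~$q/\Delta$.

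For the cost, the two plaintext matrix products $\BA \cdot \BU_0$ and $\BB \cdot \BU_0$ are both of dimension $d_1 \times d_2 \times d_3$ (contrary to Theorem~\ref{th:pcmm}, where the first one was of dimension $N \times d_2 \times d_3$, since in the shared-$s$ case $\BA$ has $d_1$ rows rather than $N$). These are exactly two calls to $\ModPPMM_{d_1,d_2,d_3}$. The $\Rescale$ step amounts to coordinate-wise division and rounding on two matrices of size $d_1 \times d_3$, for $O(d_1 d_3)$ operations on integers bounded by the ciphertext modulus.

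There is no real obstacle here: the proof is a direct instantiation of the $\cpmm$ pattern using the encryption identity of Lemma~\ref{le:mat_struct_s} in the shared-$s$ case. The only technical point worth flagging is verifying that the error term hidden inside~$\approx$ indeed remains small after right-multiplication by $\BU_0$ and after rescaling; this follows from the fact that~$\BU_0$ has entries bounded in magnitude by~$\Delta \cdot \|\BU\|_\infty$ (up to the encoding rounding) and that the rescaling error is on the order of $\|\sk\|_\infty$, both of which affect only the implicit parameter choices rather than correctness of the algebraic identity.
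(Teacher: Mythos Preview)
Your proposal is correct and follows exactly the paper's approach. The paper's own treatment of Theorem~\ref{th:pcmm-shs} is even more terse: it simply states that Algorithm~\ref{alg:pcmm} ``remains valid up to the fact that $\BA \in \ZZ_q^{d_1\times d_2}$, $\BA' \in \ZZ_q^{d_1\times d_3}$'' and that Theorem~\ref{th:pcmm} ``is then adapted'' accordingly, with the proof of Theorem~\ref{th:pcmm} itself being just ``correctness follows from the discussion above, while cost can be bounded by algorithm inspection.'' Your write-up is a faithful (and more explicit) unpacking of precisely this argument.
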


\subsubsection{Costs of format conversions.}
The pre-processing and post-processing costs for Algorithm~\ref{alg:pcmm}
can be found in~Table~\ref{tab:conversion_costs}, by taking $(d, d') = (d_1, d_2)$ (pre-processing) or $(d, d') = (d_1, d_3)$ (post-processing). As mentioned as the start of this section, we only consider pre-processing from (resp.\ post-post-processing to) the shared-$s$ format.
Assuming a plaintext linear algebra cost of $\Theta(d_1 d_2 d_3)$ for a product of matrices of dimensions $d_1\times d_2$ and~$d_2\times d_3$, these costs are negligible as soon as
\begin{itemize}
\item[$\bullet$] $d_3 = \omega(\log N \log (d_1/N))$ and $d_2 = \omega(\log N)$ when $N|d_1$ (large matrices, where we use the structured-$\BS$ shared-$a$ format); 
\item[$\bullet$] $d_3 = \omega(1)$ and $d_1 d_2 d_3 = \omega( \max(d_3, N/d_1) N\log N)$ when $d_1|N$ (small matrices, where we use the structured-$\BS$ $\MLWE$ format); if we assume $d_3 \ge N/d_1$, this condition takes the simpler form $d_1 d_2 = \omega(N \log N)$. 
\end{itemize}

In the case of the shared-$s$ variant of Algorithm~\ref{alg:pcmm}, there is no pre-processing nor post-processing cost, as the structured-$\BS$ shared-$s$ format that we consider to be the default one.
Both Algorithm~\ref{alg:pcmm} and its shared-$s$ version apply for any $d_3$ and, in particular, yield reductions from $\cpmv$ to $\ppmv$. However, in view of the cost discussion above, format conversion no longer has a negligible cost for very small~$d_3$. In such a situation, the shared-$s$ format, which incurs no pre- or post-processing costs, might be the best one. In practice, the best encryption format for a given instantiation of $d_1, d_2, d_3$ will be implementation-dependent. 

\subsection{CPMM and CPMv with precomputation}\label{sse:precomputation}
If the plaintext matrix (or vector) is known beforehand, we can take advantage of the structured-$\BA$ shared-$a$ format to obtain a reduction from $\cpmm$\ and $\cpmv$\ to a single instance of its plaintext counterpart in the same dimensions. Indeed, in that case, the encryption equation is 
\begin{equation}\label{eq:mat-shared-a}
\isst{\BA} \cdot \BS + \BB \approx \Delta\cdot \BM \bmod q \enspace, 
\end{equation}
with a structure assumption on $\isst{\BA}$, whereas $\BS$ is a general matrix.
Multiplying on the right by the cleartext matrix $\BU \in \ZZ_q^{d_2\times d_3}$ thus yields 
\[
\isst{\BA} \cdot (\BS  \cdot \BU) + \BB \cdot \BU \approx \Delta \cdot \BM \cdot \BU \bmod q \enspace. 
\]
This identity is a structured-$\BA$ shared-$a$ encryption of $\BM$ under the matrix secret key $\BS \cdot \BU$.
From a practical point of view, to use this identity, we assume that the matrix $\BU$ is known beforehand, and precompute the switching keys from the columns $\sk_i^\BU$ of $\BS\cdot \BU$ to a target secret key~$\sk$. 

The above discussion leads to Algorithm~\ref{alg:pcmm-with-precomp}.

\begin{algorithm}
\caption{\label{alg:pcmm-with-precomp} Precomputation $\cpmm$ algorithm, for structured-$\BA$ shared-$a$ and shared-$a$ \MLWE\  formats.}
\begin{algorithmic}[1]
\REQUIRE
    A matrix $\BU = \lfloor \Delta \textbf{U}_0 \rceil\in\ZZ_q^{d_2\times d_3}$, where $\BU_0\in \RR^{d_2\times d_3}$. \\      
\REQUIRE
    A structured-$\BA$ shared-$a$ (or shared-$a$ \MLWE)  encryption $(\isst{\BA}, \BB) \in \ZZ_q^{d_1\times N} \times \ZZ_q^{d_1\times d_2}$ of the matrix $\lfloor \Delta \BM\rceil \in \ZZ_q^{d_1\times d_2}$ as in Equation~\eqref{eq:mat-shared-a}.
\ENSURE
    A structured-$\BA$ shared-$a$ (or shared-$a$ \MLWE)  encryption $(\isst{\BA''}, \BB'') \in \ZZ_{q'}^{d_1\times N} \times \ZZ_{q'}^{d_1\times d_3}$ of $\lfloor \Delta (\BM \cdot \BU_0) \rceil $ under the matrix secret key $\BS \cdot \BU$. 
\STATE $\BB'\gets \BB \cdot \BU$;
\RETURN $\Rescale_{q\rightarrow q'}(\isst{\BA}, \BB')$.
\end{algorithmic}
\end{algorithm}

When $d_3$ is small compared to $N$, the size  of $\isst{\BA}$ ($d_1 N$ entries) could be viewed as significantly exceeding the size of the underlying message matrix ($d_1d_3$ entries).
However, in that case, the matrix~$\isst{\BA}$ is structured and should be seen as $d_1 / N$ polynomials of degree $N$ (in the structured-$\BA$ shared-$a$ case) or~$N / d_1$ polynomials of degree~$d_1$ (in the shared-$a$ \MLWE\ case). \textsf{Rescale} should be performed on this polynomial representation of~$\isst{\BA}$ in $O(\max(d_1, N))$ arithmetic operations. 
If $d_3 = 1$, we obtain a $\cpmv$ algorithm with precomputation. As structured-$\BA$ shared-$a$ and shared-$s$ coincide for a vector, we note that for large~$d_1$, the output is actually a vector in shared-$s$ format. 

\begin{theorem}\label{th:pcmm-with-precomp}
Let~$d_1, d_2, d_3$ be positive integers. Assuming pre-computation,
Algorithm~\ref{alg:pcmm-with-precomp} is correct and reduces 
$\cpmm_{d_1,d_2,d_3}$   in structured-$\BA$ shared-$a$ format or shared-$a$ \MLWE\ format to $\ModPPMM_{d_1,d_2,d_3}$, plus $O(d_1 \max(N, d_3))$ arithmetic operations on integers whose absolute values are smaller than the ciphertext modulus.
\end{theorem}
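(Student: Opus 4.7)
The plan is to start from the structured-$\BA$ encryption identity of Equation~\eqref{eq:mat-shared-a}, namely $\isst{\BA} \cdot \BS + \BB \approx \lfloor \Delta \BM \rceil \bmod q$, and to right-multiply both sides by the integer matrix $\BU = \lfloor \Delta \BU_0 \rceil$. This yields
\[
\isst{\BA} \cdot (\BS \BU) + \BB \BU \approx \Delta^2 (\BM \BU_0) \bmod q,
\]
where the original \RLWE\ error has been multiplied by $\BU$ and a second small contribution arises from the approximation $\BU \approx \Delta \BU_0$; both remain within the usual CKKS error budget under standard assumptions on $\|\BU_0\|_\infty$. Crucially, the matrix $\isst{\BA}$ is unchanged by this operation, so the converse direction of Lemma~\ref{le:mat_struct_a} applies directly: the pair $(\isst{\BA}, \BB \BU)$ is a valid structured-$\BA$ shared-$a$ (resp.\ shared-$a$ \MLWE) encryption of $\Delta^2 \BM \BU_0$ under the new matrix secret key $\BS \BU$.

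Next, I would invoke the $\Rescale$ step, which divides both components by $\Delta$ up to rounding, producing an encryption of $\lfloor \Delta (\BM \BU_0) \rceil$ modulo $q/\Delta$. The point to verify here is that the structural shape of $\isst{\BA}$ is preserved under rescaling: rounding acts entry-wise on the underlying polynomials defining each Toeplitz block, so rescaled Toeplitz blocks remain Toeplitz blocks (up to a benign sign treatment on negative entries). This yields the required encryption and completes the correctness argument.

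For the cost analysis, the only expensive operation is the computation of $\BB \BU$ with $\BB \in \ZZ_q^{d_1 \times d_2}$ and $\BU \in \ZZ^{d_2 \times d_3}$, which is precisely one call to $\ModPPMM_{d_1, d_2, d_3}$. Rescaling of $\BB \BU$ touches $d_1 d_3$ entries, while rescaling the structured $\isst{\BA}$ must be performed on its compact polynomial representation, consisting of either $d_1/N$ polynomials of degree $N$ (in the shared-$a$ case, when $N \mid d_1$) or $N/d_1$ polynomials of degree $d_1$ (in the \MLWE\ case, when $d_1 \mid N$); this totals $O(\max(d_1, N))$ operations. Summing yields the claimed $O(d_1 \max(N, d_3))$ overhead. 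The one point that requires real attention is precisely this compact treatment of $\isst{\BA}$: a naive rescale operating on all $d_1 N$ entries would cost $\Theta(d_1 N)$ and exceed the stated bound when $d_3 > N$, so the Toeplitz structure must be exploited explicitly.
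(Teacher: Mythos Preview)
Your argument is correct and mirrors the paper's approach exactly: right-multiply the structured-$\BA$ identity by $\BU$, observe that $\isst{\BA}$ is untouched so the format is preserved under the new key $\BS\BU$, then rescale. One small inaccuracy in your final remark: a naive rescale of $\isst{\BA}$ costs $\Theta(d_1 N)$, but since $N \le \max(N, d_3)$ always, this never actually exceeds the stated $O(d_1 \max(N, d_3))$ bound; the compact polynomial representation is a practical optimization rather than a necessity for the asymptotic claim.
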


For simplicity, Algorithm~\ref{alg:pcmm-with-precomp} performs Rescale before the key-switchings required to go back from the secret key $\BS\cdot \BU$ to another secret key format that would be more suitable for subsequent computations. In practice, exchanging those two steps (as was done in~\cite{BCHPS24}) could yield slightly better precision. 

\subsubsection{Cost of pre-computation.}
The precomputation on~$\BS \cdot \BU$ corresponds to generating switching keys from the columns of $\BS\cdot \BU$ to a common secret key~$\sk$; it can be implemented as a product of~$\BU$ by an encryption of~$\BS$, so amounts to a $\cpmm$. This is indeed a precomputation as it is at least as costly as the actual computation itself. This shows that Algorithm~\ref{alg:pcmm-with-precomp} makes sense mainly when the result of this precomputation can be reused multiple times. Weight matrices for large-language models, mentioned in the introduction, provide a good example of this situation.

\subsubsection{Cost of pre- and post-processing.}
Using Table~\ref{tab:conversion_costs}, we notice that the costs of pre-processing and post-processing are negligible under the following conditions. 
\begin{itemize}
     \item[$\bullet$] For large matrices ($N|d_1$), as soon as $d_2 = \omega(\log N)$, $d_3 = \omega(\log d_2 \log N)$;
    \item[$\bullet$] For small matrices ($d_1 |N$), as soon as $d_3 = \omega(1)$ and $d_1 d_2 d_3 = \omega(\max(d_3, N/d_1) N \log N)$.
\end{itemize}

\subsection{Reducing $\ccmm$ to its PP counterpart}\label{sse:ccmm}
The algorithm focuses on large matrices, whose dimensions are no smaller than the \RLWE\ ring-degree~$N$. 
For ease of discussion, we focus on square $N\times N$ matrices, but note that the algorithm extends  to larger square matrices and, in that case,  provides a reduction to four $\ppmm$ of the same dimensions. 
In the case of smaller-dimensional matrices, however, using the \MLWE\ format would only provide a reduction to $\ppmm$ in dimensions~$N\times d_1 \times N$, which is of lesser interest.

Consider two matrices $\BM$ and $\BM'$ of dimensions~$N\times N$, given in  their column-wise \RLWE\ encryptions~$(\BA,\BB)$ and~$(\BA',\BB')$ under a common secret key~$\sk$. 
The following holds with $\isst{\BS} = \toep{\sk}$: 
\begin{equation}
    \label{eq:ccmm_input}
\Delta\cdot \BM~\approx~\isst{\BS}\cdot\BA+\BB 
\ \ \mbox{ and }
\ \ \Delta \cdot \BM'~\approx~\isst{\BS}\cdot \BA'+\BB' \enspace.
\end{equation}
Then, by using the \cmt algorithm, we transpose the column-wise encryption $(\BA',\BB')$ to obtain a row-wise encryption~$(\wBA',\wBB')$ of~$\BM'$:
\[
\Delta \cdot \BM' \approx \widetilde{\BA}' \cdot {\isst{\BS}}^t + \widetilde{\BB}' \enspace.
\]
We now multiply $(\BA, \BB)$ and $(\wBA',\wBB')$ in matrix forms:
\begin{equation*}\begin{split}
    \Delta^2 \cdot \BM \cdot \BM' & 
        \approx (\isst{\BS} \cdot \BA + \BB)\cdot(\wBA' \cdot {\isst{\BS}}^t +\wBB')\\
        &\approx \isst{\BS} \cdot \BA \cdot \wBA' \cdot {\isst{\BS}}^t + 
            \isst{\BS} \cdot \BA \cdot \wBB' + 
            \BB \cdot \wBA' \cdot {\isst{\BS}}^t + 
            \BB \cdot \wBB' \enspace. 
\end{split}\end{equation*}
We then transpose two row-wise \RLWE\ encryptions $(\BA \cdot \wBA', \textbf{0})$ of $\BA \cdot \wBA' \cdot {\isst{\BS}}^t$ and $(\BB \cdot \wBA',\textbf{0})$ of $\BB \cdot \wBA' \cdot {\isst{\BS}}^t$, obtaining two column-wise \RLWE\ encryptions $(\BD_0,\BD_1)$ and $(\BD_2,\BD_3)$ such that 
\begin{align*}
\BA \cdot \wBA' \cdot {\isst{\BS}}^t & \approx \isst{\BS} \cdot \BD_0 + \BD_1 \enspace, \\
\BB \cdot \wBA' \cdot {\isst{\BS}}^t & \approx\isst{\BS} \cdot \BD_2 + \BD_3 \enspace. 
\end{align*}

Putting it all together results in the following \RLWE\ encryption form of~$\BM \BM'$: 
\begin{equation*}
    \Delta^2 \cdot \BM \cdot \BM' \approx 
    {\isst{\BS}}^2 \cdot \BD_0 + \isst{\BS} \cdot (\BD_1 + \BD_2 + \BA \cdot \wBB') + (\BD_3+\BB \cdot \wBB') \enspace.
\end{equation*}
As the last step, we relinearize and rescale each column of $(\BD_0,~\BD_1+\BD_2+\BA\cdot\wBB',~\BD_3  + \BB \cdot \wBB')$.

This step outputs the column-wise desired encryption of~$\BM \cdot \BM'$, hence completing the $\ccmm$ procedure. 

To sum up, from the column-wise \RLWE\ encryptions of $\BM$ and~$\BM'$, we obtained a column-wise \RLWE\ encryptions of the product~$\BM \cdot \BM'$.
The overall procedure is detailed in Algorithm~\ref{alg:ccmm}. 
Transposition is implemented using Algorithm~\ref{alg:transpose}.

\begin{algorithm}
\caption{$\ccmm$ algorithm}\label{alg:ccmm} 
\begin{algorithmic}[1]
\REQUIRE Encryptions $(\BA, \BB), (\BA', \BB') \in \ZZ_q^{N \times N} \times \ZZ_q^{N\times N}$ of the matrices $\lfloor \Delta \BM\rceil,  \lfloor \Delta \BM'\rceil\in \ZZ_q^{N\times N}$ under the key $\isst{\BS}$, such that Equation~\eqref{eq:ccmm_input} holds.
\REQUIRE Switching keys $\swk_{pq,q, \sk(X^{2t+1})\rightarrow \sk}$ for $1\le t < N$, and relinearization key $\swk_{pq, q, \sk^2 \rightarrow \sk}$.
\ENSURE An encryption $(\BA'', \BB'') \in \ZZ_{q'}^{N \times N} \times \ZZ_{q'}^{N\times N}$ under $\isst{\BS}$, such that 
    \[ \Delta \cdot (\BM \cdot \BM') \approx \isst{\BS} \cdot \BA'' + \BB'' \bmod q' \enspace. \]
\STATE $(\wBA', \wBB') \gets \text{Transpose}((\BA', \BB');~\{\swk_{pq,q, \sk(X^{2t+1})\rightarrow \sk}\}_{1\le t < N} )$\enspace;
\STATE $\begin{bmatrix}\textbf{C}_{00}&\textbf{C}_{01}\\ \textbf{C}_{10}&\textbf{C}_{11}\end{bmatrix}
\gets \begin{bmatrix}\textbf{A}\\\textbf{B}\end{bmatrix} \cdot
\begin{bmatrix}\wBA'~\wBB'\end{bmatrix}$\enspace;
\STATE $(\BD_0,\BD_1)     \gets \text{Transpose}((\textbf{C}_{00}, \textbf{0});~\{\swk_{pq,q, \sk(X^{2t+1})\rightarrow \sk}\}_{1\le t < N} )$\enspace;
\STATE $(\BD_2,\BD_3) \gets \text{Transpose}((\textbf{C}_{10}, \textbf{0});~\{\swk_{pq,q, \sk(X^{2t+1})\rightarrow \sk}\}_{1\le t < N} )$\enspace;
\STATE $(\textbf{A}'',\textbf{B}'') \gets \rescale_{q\rightarrow q'}\left(\relin(\BD_0,0;~\swk_{pq, q, \sk^2 \rightarrow \sk})+ (\BD_1 + \BD_2 + \textbf{C}_{01} ,\BD_3 + \textbf{C}_{11})\right)$ ;
\RETURN $(\textbf{A}'',\textbf{B}'')$.
\end{algorithmic}
\end{algorithm}

The previous discussion and the correctness of Algorithm~\ref{alg:transpose} yield the following theorem. We observe that the call to $\Relin$ requires $N$ key-switchings, so $\widetilde{O}(N^2)$ operations on integers with absolute values smaller  than the ciphertext modulus, and the call to $\Rescale$ consumes $O(N^2)$ divisions on integers less than the ciphertext modulus.

\begin{theorem}\label{th:ccmm}
Assume that the ciphertext modulus is coprime with~$N$. Then Algorithm~\ref{alg:ccmm} is correct and reduces $\ccmm$ for square~$N$-dimensional matrices to four $\ModPPMM$
for square $N$-dimensional matrices, three calls to Algorithm~\ref{alg:transpose} (for~\cmt), plus $\widetilde{O}(N^2)$ operations on integers whose absolute values are smaller than the ciphertext modulus.
\end{theorem}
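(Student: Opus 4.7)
The plan is to verify correctness by tracking the linear-algebraic identities satisfied after each step of Algorithm~\ref{alg:ccmm}, then count the arithmetic cost by inspection. Throughout, I will write $\isst{\BS} = \toep{\sk}$ and rely on the fact that for any $s \in \R_N$, $\toep{s}^t = \toep{s(X^{-1})}$, so that a row-wise encryption under $\sk$ corresponds exactly to a column-wise encryption under $\sk(X^{-1})$, and \cmt on a column-wise pair $(\BX, \BY)$ returns a pair $(\wBX, \wBY)$ with $\BX \cdot \isst{\BS}^t + \BY \approx \isst{\BS}\cdot \wBX + \wBY \bmod q$ (the correctness statement of Theorem~\ref{thm:transpose-correctness}, applied on the second encryption as a row-encryption).

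First, I would observe that Step~1 of Algorithm~\ref{alg:ccmm} applies \cmt to $(\BA', \BB')$, which is equivalent (after swapping the roles of $\isst{\BS}$ and $\isst{\BS}^t$) to producing $(\wBA', \wBB')$ with $\BM' \approx \wBA' \cdot \isst{\BS}^t + \wBB' \bmod q$. Then at Step~2, the block product yields $\BC_{00} = \BA \cdot \wBA'$, $\BC_{01} = \BA \cdot \wBB'$, $\BC_{10} = \BB \cdot \wBA'$, $\BC_{11} = \BB \cdot \wBB'$. Expanding the product
\[
\BM \cdot \BM' \approx (\isst{\BS}\cdot\BA+\BB)\cdot(\wBA'\cdot\isst{\BS}^t+\wBB') = \isst{\BS}\cdot\BC_{00}\cdot\isst{\BS}^t + \isst{\BS}\cdot\BC_{01} + \BC_{10}\cdot\isst{\BS}^t + \BC_{11}
\]
modulo $q$, identifies the four bilinear terms we need to handle.

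Next, I would use the \cmt algorithm at Steps~3 and~4 on $(\BC_{00}, {\bf 0})$ and $(\BC_{10}, {\bf 0})$ viewed as row-wise encryptions, which, by Theorem~\ref{thm:transpose-correctness}, produces pairs $(\BD_0, \BD_1)$ and $(\BD_2, \BD_3)$ satisfying $\BC_{00}\cdot\isst{\BS}^t \approx \isst{\BS}\cdot \BD_0 + \BD_1 \bmod q$ and $\BC_{10}\cdot\isst{\BS}^t \approx \isst{\BS}\cdot \BD_2 + \BD_3 \bmod q$. Substituting back,
\[
\BM\cdot\BM' \approx \isst{\BS}\cdot(\isst{\BS}\cdot\BD_0 + \BD_1) + \isst{\BS}\cdot\BC_{01} + (\isst{\BS}\cdot\BD_2 + \BD_3) + \BC_{11} \bmod q,
\]
i.e.\ $(\BD_0,\, \BD_1 + \BD_2 + \BC_{01},\, \BD_3 + \BC_{11})$ is a degree-2 ciphertext under $(\isst{\BS}^2, \isst{\BS}, {\bf 1})$ decrypting to $\BM \cdot \BM'$. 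Applying $\Relin$ on the columns collapses the $\isst{\BS}^2$-part into a linear-in-$\sk$ encryption, and $\Rescale$ absorbs the extra $\Delta$-factor, yielding the asserted $(\BA'', \BB'')$ with $\BM\cdot\BM' \approx \isst{\BS}\cdot \BA'' + \BB'' \bmod (q/\Delta)$.

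For the cost, Step~2 consists of one $\ModPPMM_{2N, N, 2N}$ over $\ZZ_q$, which (viewed as a $2\times 2$ block product) costs exactly four $\ModPPMM_{N,N,N}$'s; Steps~1, 3 and 4 are each one call to $\transpose$; Step~5 is a column-wise \Relin\ ($N$ key-switchings, so $\widetilde{O}(N^2)$ $\ZZ_{pq}$-operations) followed by \Rescale\ ($O(N^2)$ divisions). The final additive $O(N^2)$ cost absorbs the block additions in Step~5. The main subtlety I expect is bookkeeping: making sure that the $a$-part of the auxiliary transposes at Steps~3 and~4 are fed in as $(\BC_{\cdot 0}, {\bf 0})$ so that after transposition they genuinely encrypt $\BC_{\cdot 0}\cdot \isst{\BS}^t$; and verifying that the coprimality hypothesis on the ciphertext modulus and $N$, needed by Theorem~\ref{thm:transpose-correctness}, is sufficient for all three \cmt invocations.
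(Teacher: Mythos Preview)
Your proposal is correct and follows essentially the same route as the paper: you track the matrix identities through the three \transpose\ calls and the block product, arrive at the degree-2 ciphertext $(\BD_0,\, \BD_1+\BD_2+\BC_{01},\, \BD_3+\BC_{11})$, relinearize and rescale, and then read off the cost by inspection exactly as the paper does. The only cosmetic point is that your opening sentence refers to ``a column-wise pair $(\BX,\BY)$'' while the identity you write is the row-wise one; the math that follows is fine, but you may want to clean up that phrasing.
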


We note that transpose takes time $\tilde{O}(N^2)$, whereas the best theoretical cost known for $\ModPPMM$ as of today is  $\omega(N^{2.37})$~\cite{ADVSZ24}. The transposition steps are thus negligible compared to $\ModPPMM$, and Theorem~\ref{th:ccmm} can indeed be considered as a reduction from $\ccmm$ to four~$\ModPPMM$'s.

\subsubsection{Encoding structure: row or column.}

Algorithm~\ref{alg:ccmm} takes two column-wise encryptions of matrices as inputs and outputs a column-wise encryption of the product. It can be generalized it to row-wise encrypted inputs and outputs (or even to column-wise inputs and row-wise outputs, and vice versa). 
An interesting scenario is 
when we have a column-wise encryption of~$\BM$ and  a row-wise encryption of~$\BM'$.
In that case, we can complete $\ccmm$ with two $\cmt$'s rather than three, skipping the first~$\cmt$ (i.e., Step~$1$ in Algorithm~\ref{alg:ccmm}). 
More importantly, this gives flexibility on the shapes of~$\BM$ and~$\BM'$;
Algorithm~\ref{alg:ccmm} with Step~$1$ omitted works well with an~$N\times n$ matrix $\BM$ and an $n\times N$ matrix $\BM'$ for any $1\le n \le N$.

\subsubsection{Lightweight $\ccmm$ algorithm.}\label{ssec:ccmm-lightweight}
The $\cmt$ algorithm with a small key size directly implies a $\ccmm$ algorithm with a small key size, by using the lightweight version of the transpose algorithm outlined in Section~\ref{sec:light},  in Algorithm~\ref{alg:ccmm}.
With this modification, we obtain a lightweight version of our $\ccmm$ algorithm requiring four switching keys: one for relinearization, one for automorphisms, and two for updating the automorphism key.

\subsection{The \RGSW\ approach: reducing all operations to their PP counterparts}\label{sse:gsw}

We now turn to our most general reduction, from any homomorphic linear algebra operation to its plaintext counterpart (possibly with different dimensions). This generality comes at a price: the reduction rests on a \RGSW\ encryption of the matrix, on top of any \RLWE-type encryption format for this same matrix. In practice, this implies that 
\begin{itemize}
    \item[$\bullet$] we require this encryption to either be given as an input, or we perform a precomputation as in Section~\ref{sec:lwetogsw}; 
    \item[$\bullet$] we lose a constant factor in the reduction, as the computation involves an \emph{auxiliary integer} in order to prevent an excessive growth of error terms. 
\end{itemize}

\subsubsection{Description of the algorithm.}
We describe our algorithm for the matrix-vector product. The exact same description would apply for matrix-matrix computation, by seeing the second matrix as a batch of column vectors. 
The description is almost agnostic to the encryption formats of the inputs~$\BM \in \RR^{d_1 \times d_2}$ and~$\bv \in \RR^{d_2}$. The format conversion algorithms described in Section~\ref{sec:conversions} adapt to the \RGSW\ encryption by handling it as a pair of \RLWE\ encryptions. 
We stress that the algorithm can handle different secret keys, ring degrees and encryption formats for the input matrix and the input vector. For a matrix vector product $\BM\cdot \bv$, we let~$\BS^{\BM}$ and~$N^{\BM}$ (resp.\ $\BS^{\bv}$ and $N^{\bv}$) denote the matrix form of the secret key and ring degree for~$\BM$ (resp.~$\bv$).
As may be seen in the formal description in Algorithm~\ref{alg:general}, the main idea is that the so-called external product $\RGSW \times \RLWE \rightarrow \RLWE$ extends from scalars to matrices and vectors.

\begin{algorithm}[H]
\caption{\label{alg:general} Encrypted matrix-vector multiplication}
    \begin{algorithmic}[1]
\REQUIRE {$(\BA_0, \BB_0) \in \mathbb{Z}_{pq}^{N^\BM\times d_2} \times \mathbb{Z}_{pq}^{d_1 \times d_2}$ encrypting $p\lfloor \Delta \BM\rceil$ (modulo $pq$).}
\REQUIRE{
$(\BA_1, \BB_1) \in \mathbb{Z}_{pq}^{N^\BM\times N^\bv} \times \mathbb{Z}_{pq}^{d_1 \times N^\bv}$ encrypting $p\lfloor \Delta \BM\rceil\BSv$ (modulo $pq$).}
\REQUIRE{$(\ba, \bb)\in \mathbb{Z}_{q}^{N^\bv} \times \mathbb{Z}_{q}^{d_2}$ encrypting $\lfloor \Delta \bv\rceil$ (modulo $q$).}
\ENSURE{$(\ba', \bb') \in {\mathbb Z}_{q'}^{d_1} \times {\mathbb        Z}_{q'}^{N^{\BM}}$ encrypting $\lfloor \Delta (\BM\cdot \bv)\rceil $ (modulo $q$).} 
    \STATE $\ba'\leftarrow \lfloor \frac{1}{p}(\BA_1\ba + \BA_0\bb \bmod pq)\rceil \bmod q$;
    \STATE $\bb' \leftarrow \lfloor \frac{1}{p}(\BB_1\ba+\BB_0\bb \bmod pq)\rceil  \bmod q$;
    \STATE Return $\Rescale_{q\rightarrow q'}(\ba', \bb')$.
    \end{algorithmic}
\end{algorithm}

If, as in the previous subsections, we  neglect error terms, then the purely algebraic verification of this algorithm is direct, as illustrated by the following chain of equalities modulo~$pq$: 
\begin{align*}
\Delta^2 \cdot p\BM \cdot \bv 
& \approx \Delta \cdot  p\BM \cdot (\BS^\bv \ba + \bb) \\
& \approx \Delta \cdot  (p\BM \cdot \BS^\bv) \cdot \ba + \Delta \cdot  p\BM \cdot \bb \\
& \approx \BSM \BA_1 \cdot \ba + \BB_1 \cdot \ba + \BSM \BA_0 \cdot \bb + \BB_0 \cdot \bb  \\
& \approx \BSM (\BA_1\cdot \ba + \BA_0\cdot \bb) + (\BB_1\cdot \ba + \BB_0 \cdot \bb) \enspace . 
\end{align*}
At first sight, it may seem that the correctness of the algorithm follows (making the role of~$p$ seem somewhat artificial). This description is however too rough, because the error terms in the encryptions of~$p\BM$ and~$p\BM \cdot \BS^{\bv}$ are multiplied by~$\ba$ and~$\bb$. This yields terms that are not small when compared to the ciphertext modulus. This is the reason why an auxiliary modulus is introduced.

\subsubsection{Error analysis.}
We now provide a rigorous analysis of Algorithm~\ref{alg:general}, taking into accounts error terms. For this reason and oppositely to previous subsections, we state the result for \emph{encoded} integer matrices, without approximate equalities. 

\begin{theorem}\label{th:gsw}
Let $d_1$ and~$d_2$ be two positive integers, $\BM$ be an integer matrix of size $d_1 \times d_2$ and $\bv \in \ZZ^{d_2}$ be an integer vector. We define $K = \max(\|\BM\|_\infty, \|\bv\|_\infty)$.  

 We let $\BSM \in \ZZ^{d_1\times N^\BM} $ and $\BSv \in \ZZ^{d_2\times N^\bv}$ be secret keys in matrix form in a format corresponding to the encryption formats for~$\BM$ and~$\bv$. Let $p$ and $q$ be two positive integers.
Let $(\BA_0, \BB_0)\in \mathbb{Z}_{pq}^{N^\BM\times d_2} \times \mathbb{Z}_{pq}^{d_1 \times d_2}$, $(\BA_1, \BB_1)\in \mathbb{Z}_{pq}^{N^\BM\times N^\bv} \times \mathbb{Z}_{pq}^{d_1 \times N^\bv}$ and~$(\ba, \bb)\in \mathbb{Z}_{q}^{N^\bv} \times \mathbb{Z}_{q}^{d_2}$ respectively be encryptions of $p\BM$, $p\BM \BSv$ and~$\bv$, as follows: 
\begin{align*}
p\BM & = \BSM  \BA_0 + \BB_0 + \BE_0 \bmod pq \enspace,\\
p\BM \BSv & = \BSM   \BA_1 + \BB_1 + \BE_1 \bmod pq \enspace,\\
\bv & = \BSv \ba  + \bb + \be \bmod q\enspace, 
\end{align*}
where 
$\BE_0 \in \mathbb{Z}^{d_1 \times d_2}$, $\BE_1 \in \mathbb{Z}^{d_1 \times N^{\bv}}$ and~$\be \in \mathbb{Z}^{d_2}$. We define $\varepsilon =
\max(\|\BE_0 \|_\infty, \|\BE_1\|_\infty, \|\be \|_\infty)$.

Then $(\ba', \bb') = (\lfloor \frac{1}{p}(\BA_1\ba + \BA_0 \bb \bmod pq)\rceil, 
\lfloor \frac{1}{p}(\BB_1\ba+\BB_0\bb \bmod pq)\rceil) \in \mathbb{Z}_q^{N^\BM} \times \mathbb{Z}_q^{d_1}$ satisfies 
\[
\BSM \ba' + \bb' = \BM\cdot \bv  + \be' \bmod q \enspace \ \mbox{ with } \ \enspace  \|\be'\| \leq \frac{\opnorm{\BSM}_\infty+1}{2} + \left( \frac{q}{p} + K\right) \max(d_2, N^\bv) \varepsilon 
\enspace,
\]
where $\opnorm{\BSM}_\infty = \max_{0\le i < d_1}\sum_{0 \le j< N^\BM} |\BSM_{ij}|$. 
\end{theorem}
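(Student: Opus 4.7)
The plan is to lift the three encryption identities to exact equalities over $\ZZ$ (absorbing the modular reductions into explicit integer correction terms), combine them by appropriate linear combinations, and then track the resulting error after division by $p$.

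First I would right-multiply the identity $p\BM = \BSM\BA_0 + \BB_0 + \BE_0 \bmod pq$ by $\bb$ and the identity $p\BM\BSv = \BSM\BA_1 + \BB_1 + \BE_1 \bmod pq$ by $\ba$, and sum. The left-hand side becomes $p\BM(\BSv\ba + \bb)$, and substituting the RLWE identity $\BSv\ba + \bb = \bv - \be + q\bk_v$ (for some integer vector $\bk_v$) rewrites it as $p\BM\bv - p\BM\be$ modulo $pq$. Lifting to $\ZZ$, this yields
\[
p\BM\bv - p\BM\be = \BSM(\BA_1\ba + \BA_0\bb) + (\BB_1\ba + \BB_0\bb) + (\BE_1\ba + \BE_0\bb) + pq \cdot \mathbf{K}
\]
for some integer vector $\mathbf{K}$.

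Next I would use the definitions of $\ba'$ and $\bb'$ to write, over $\ZZ$, $\BA_1\ba + \BA_0\bb = p\ba' + \mathbf{r}_a + pq\bk_a$ and $\BB_1\ba + \BB_0\bb = p\bb' + \mathbf{r}_b + pq\bk_b$, where $\mathbf{r}_a, \mathbf{r}_b$ are the rounding remainders satisfying $\|\mathbf{r}_a\|_\infty, \|\mathbf{r}_b\|_\infty \leq p/2$, and $\bk_a, \bk_b$ capture the mod-$pq$ reduction. Substituting and regrouping yields an integer equality
\[
p(\BSM\ba' + \bb') = p\BM\bv - p\BM\be - R + pq \cdot (\cdots), \qquad R := \BSM\mathbf{r}_a + \mathbf{r}_b + \BE_1\ba + \BE_0\bb.
\]
Since the left-hand side is a multiple of $p$ and so are $p\BM\bv$, $p\BM\be$, and the $pq(\cdots)$ term, the quantity $R$ is itself forced to be a multiple of $p$. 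Dividing throughout by $p$ and reducing modulo $q$ then gives $\BSM\ba' + \bb' \equiv \BM\bv + \be' \bmod q$ with $\be' = -\BM\be - R/p$.

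Finally, I would bound each contribution: $\|\BM\be\|_\infty \leq d_2 K \varepsilon$ (using $\|\BM\|_\infty \leq K$); $\|\BSM\mathbf{r}_a\|_\infty/p \leq \opnorm{\BSM}_\infty/2$ and $\|\mathbf{r}_b\|_\infty/p \leq 1/2$, which together contribute the $(\opnorm{\BSM}_\infty + 1)/2$ term; and $\|\BE_0\bb\|_\infty/p \leq q d_2 \varepsilon/(2p)$, $\|\BE_1\ba\|_\infty/p \leq q N^\bv \varepsilon/(2p)$, using $\|\ba\|_\infty, \|\bb\|_\infty \leq q/2$ and the entrywise bound $\varepsilon$ on $\BE_0, \BE_1$. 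Summing and using the trivial inequalities $d_2 K \leq K\max(d_2, N^\bv)$ and $d_2 + N^\bv \leq 2\max(d_2, N^\bv)$ then delivers the claimed bound. The proof is essentially bookkeeping; the one subtlety worth highlighting is the integrality observation that $R$ must be divisible by $p$, without which $R/p$ in $\be'$ would not be well-defined. This integrality is automatic from the lifted identity being divisible by $p$ on the left, and the remainder of the analysis is routine norm arithmetic.
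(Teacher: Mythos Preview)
Your proof is correct and follows essentially the same route as the paper's: both expand $p\BM(\bv-\be)$ via the two \RGSW\ identities, split off the rounding remainders from $\ba',\bb'$, and bound the resulting error pieces in exactly the same way. Your explicit observation that $R$ must be divisible by $p$ is a welcome clarification that the paper leaves implicit (it works over the reals with centered fractional parts and relies on the final difference being an integer), but this is a presentational rather than a substantive difference.
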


\begin{proof}
There exists an integer vector $\bk \in \mathbb{Z}^{d_2}$ such that 
we can rewrite $p\cdot \BM \cdot (\bv- \be)$ as follows, modulo~$pq$:
\begin{align*}
p\cdot \BM \cdot (\bv - \be) & = p \BM \cdot (\BSv \cdot \ba  + \bb + q \bk) \\
& = (p \BM \BSv)\cdot \ba + (p \BM) \bb  \\
& = \left( \BSM \BA_1 + \BB_1 + \BE_1 \right) \ba
+ (\BSM\BA_0 + \BB_0 + \BE_0) \cdot \bb \\
& = \BSM\cdot (\BA_1 \cdot \ba + \BA_0 \cdot \bb) + (\BB_1 \cdot \ba + \BB_0 \cdot \bb) +  
 (\BE_1 \cdot \ba + \BE_0 \cdot \bb) \enspace. 
\end{align*}

As such, we have (with~$\{ \cdot \}$ denoting the centered fractional part):
\begin{eqnarray*}
\left\| 
(\BSM \cdot \ba' + \bb') - \BM \cdot (\bv  - \be) 
\right\|_\infty & \le &
\left\| \BSM \cdot \left\{ \frac{1}{p}(\BA_1 \cdot \ba + \BA_0 \cdot \bb \bmod pq)
\right\}\right\|_\infty 
\\ 
&&  + \left\| \left\{ \frac{1}{p}(\BB_1 \cdot \ba  + \BB_0 \cdot \bb \bmod pq) \right\} \right\|_\infty \\
&& + \frac{1}{p} \left\| \BE_1 \cdot \ba + \BE_0 \cdot \bb \right\|_\infty\\
& \le &  \frac{\opnorm{\BSM}_\infty+1}{2} + \max(d_2, N^\bv) \cdot \frac{q}{p}\varepsilon.
\end{eqnarray*}
To complete the proof, we observe that
$\| \BM \cdot (\bv - \be) - \BM \cdot \bv \|_\infty  \le  Kd_2 \varepsilon$.
\qed
\end{proof}

The term $\opnorm{\BS^M}_\infty$ generalizes the Hamming weight of  ternary polynomial secret keys which usually appears in  estimates for the rescaling error. If considering ternary secret keys, 
\begin{itemize}
    \item[$\bullet$] in the shared-$a$ case, it is equal to the largest Hamming weight of all keys involved; 
    \item[$\bullet$] in the \MLWE\ case, it is equal to the total Hamming weight of the key (i.e., the sum of Hamming weights of all secret key parts).  
\end{itemize}   

In Algorithm~\ref{alg:general}, we use the error bound from Theorem~\ref{th:gsw} for an encoded matrix $\BM_0$, i.e., for $\BM_0 = \lfloor \Delta\cdot \BM \rceil$ where $\BM \in \RR^{d_1\times d_2}$.  Let us assume that $\BM$ has coefficients in~$[-1, 1]$; in that case, we can take $K = \Delta$. We also assume that $\varepsilon = O(1)$. The final rescaling step in Algorithm~\ref{alg:general} will have the effect of dividing the error term of Theorem~\ref{th:gsw} while introducing a rounding error bounded by $(1 + \opnorm{\BS^\BM}_\infty)/2$. 
Discarding lower-order terms, we find that after rescaling the error is of the order of
\[
\left(1 + \frac{1}{\Delta}\right) \frac{\opnorm{\BS^\BM}_\infty+1}{2} + \left(1 + \frac{q}{\Delta p}\right) \max(d_2, N^\bv) \enspace. 
\]
By taking sparse ternary secret keys and $p \approx q/\Delta,$ we thus obtain a final error after rescaling of the order of $\max(d_2, N^\bv)$. Oppositely, the matrix-vector product is scaled by a factor~$\Delta$. Overall, we expect to lose $\approx \log \max(d_2, N^\bv)$ bits of precision, which is similar to what happens in a cleartext computation.

\subsubsection{Shared-$s$ case.}
When $N^\BM|d_1$ and/or $N^\bv | d_2$, the shared-$s$ representation can be preferred over the shared-$a$ one; this means that the product $\BM\cdot \bv$ is treated as $d_1/N^{\BM}$ products $\BM_i \cdot \bv$, or $d_2/N^\bv$ products $\BM \cdot \bv_j$, or $d_1 d_2/(N^{\BM} N^{\bv})$ products $\BM_i \cdot \bv_j$. 

The proof of Theorem~\ref{th:gsw} carries over by decomposing $\BM$ and/or $\bv$ into blocks of dimensions~$N^{\BM}$ and/or~$N^{\bv}$. For each of these blocks, the theorem applies. It can be checked that the $\ppmv$'s of Algorithm~\ref{alg:general} for various blocks can be grouped, so that Algorithm~\ref{alg:general} still holds. The dimensions of the matrices involved are then described in Table~\ref{tbl:shareds_dims}.

\begin{table}
\label{tbl:shareds_dims}
\caption{Dimensions of the matrices used in Algorithm~\ref{alg:general} while blocking for shared-$s$ representations of~$\BM$ and/or $\bv$.}
\begin{center}
\begin{tabular}{c|c|c|c|}
        &        shared-$s$ $\BM$ &  shared-$s$ $\bv$ &  shared-$s$ $\BM$ and $\bv$ \\ \hline
$\BA_0$   &      $d_1\times d_2$     & $N^\BM \times d_2$ &   $d_1 \times d_2$ \\ \hline
$\BB_0$   &      $d_1 \times d_2$    &  $d_1 \times  d_2$ &   $d_1 \times d_2$  \\ \hline
$\BA_1 $  &      $d_1 \times N^\bv$  & $N^\BM \times d_2$  &  $d_1 \times d_2$  \\ \hline 
$\BB_1$   &      $d_1 \times N^\bv$  &  $d_1 \times d_2$  &   $d_1 \times d_2$  \\ \hline 
$\ba $    &      $N^\bv$             &  $d_2$             &   $d_2$           \\ \hline 
$\bb$     &      $d_2$               &  $d_2$             &   $d_2$           \\    
\hline
\end{tabular}
\end{center}
\end{table}

\subsubsection{Cost analysis of Algorithm~\ref{alg:general}.}

The inspection of Algorithm~\ref{alg:general} leads to the following theorem.
\begin{theorem}\label{th:cost}
Let $d_1, d_2$ be positive integers. Let $\BM \in \mathbb{R}^{d_1\times d_2}$ and $\bv\in \mathbb{R}^{d_2}$ be encoded as $\lfloor \Delta \BM \rceil$ and~$\lfloor \Delta \bv\rceil$, respectively. Let~$N^\BM$ and~$N^{\bv}$ be the encryption dimensions of~$\BM$ and~$\bv$, respectively, and let~$pq$ be the encryption modulus of~$\BM$.  Algorithm~\ref{alg:general} is correct and reduces $\ccmv$ in dimension $d_1 \times d_2$ to four
$\ModPPMv$'s modulo~$pq$ for each pairs of dimensions in $\{N^\BM, d_1\} \times \{ N^{\bv}, d_2\}$, plus $O(d_1 + N^\BM)$ divisions on integers whose absolute values are~$< pq$. 
\end{theorem}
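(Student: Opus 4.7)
\medskip
The plan is to obtain both correctness and cost essentially for free from Theorem~\ref{th:gsw} and a step-by-step inspection of Algorithm~\ref{alg:general}, since the heavy lifting (the error control through the auxiliary modulus~$p$) has already been carried out.

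For correctness, I would apply Theorem~\ref{th:gsw} to the encoded integer matrix~$\lfloor \Delta \BM \rceil$ and the encoded integer vector~$\lfloor \Delta \bv\rceil$, with parameters~$K = \Delta$ and error~$\varepsilon = O(1)$ arising from the encoding and cryptographic noise of the three input encryptions. This yields that the pair~$(\ba', \bb')$ produced at Steps~1--2 satisfies~$\BS^\BM \cdot \ba' + \bb' = \lfloor \Delta \BM\rceil \cdot \lfloor \Delta \bv\rceil + \be' \bmod q$ for some error~$\be'$ whose size is controlled by the bound in Theorem~\ref{th:gsw}. Invoking the discussion immediately following that theorem (with~$p \approx q/\Delta$), the rescaling at Step~3 then returns an encryption of~$\lfloor \Delta (\BM \cdot \bv)\rceil$ whose final error is dominated by~$\max(d_2, N^\bv)$ up to lower order terms, which is consistent with the approximate CKKS semantics that the rest of the paper uses.

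For the cost, I would walk through the two main steps of Algorithm~\ref{alg:general} and identify each matrix--vector product against its dimensions:
\begin{itemize}
  \item[$\bullet$] In Step~1, $\BA_1 \in \mathbb{Z}_{pq}^{N^\BM \times N^\bv}$ multiplies $\ba \in \mathbb{Z}_q^{N^\bv}$, giving one $\ModPPMv$ of dimension $N^\BM \times N^\bv$; and $\BA_0 \in \mathbb{Z}_{pq}^{N^\BM \times d_2}$ multiplies $\bb \in \mathbb{Z}_q^{d_2}$, giving one $\ModPPMv$ of dimension $N^\BM \times d_2$. The sum modulo~$pq$ followed by division by~$p$ and rounding amounts to $N^\BM$ divisions on integers of absolute value~$< pq$.
  \item[$\bullet$] In Step~2, similarly, $\BB_1\ba$ is a $\ModPPMv$ of dimension $d_1\times N^\bv$ and $\BB_0\bb$ is a $\ModPPMv$ of dimension $d_1 \times d_2$, followed by $d_1$ divisions of integers~$< pq$.
  \item[$\bullet$] The final \Rescale\ step is another $O(d_1 + N^\BM)$ divisions and roundings of integers whose absolute values are at most~$q$.
\end{itemize}
Collecting the four products yields exactly one $\ModPPMv$ modulo~$pq$ for each pair $(d,d') \in \{N^\BM,d_1\} \times \{N^\bv,d_2\}$, and the auxiliary arithmetic is $O(d_1 + N^\BM)$ divisions on integers bounded by $pq$, as claimed.

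There is essentially no obstacle: the only potentially subtle point is ensuring that Theorem~\ref{th:gsw}'s parameters match the usage here (in particular that one can set~$\varepsilon = O(1)$ under the standard assumption that all the input encryptions have noise growth negligible compared to $q/\Delta$), and that the four matrix--vector products of Algorithm~\ref{alg:general} are indeed indexed by the four pairs in $\{N^\BM,d_1\}\times\{N^\bv,d_2\}$; both are direct inspections. The theorem is therefore really a corollary of Theorem~\ref{th:gsw} combined with an inspection of Algorithm~\ref{alg:general}.
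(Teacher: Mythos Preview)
Your proposal is correct and mirrors the paper's own treatment: the paper states that the theorem follows from ``the inspection of Algorithm~\ref{alg:general}'', with correctness inherited from Theorem~\ref{th:gsw} and the subsequent error discussion, and the four $\ModPPMv$'s and $O(d_1+N^\BM)$ divisions read off directly from Steps~1--3. Your step-by-step dimension accounting is exactly what that inspection amounts to.
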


\subsubsection{Extensions.}
From Algorithm~\ref{alg:general}~and~Theorem~\ref{th:cost} described in the $\ccmv$ case, we derive a number of extensions as particular cases. 
\begin{itemize}
    \item[$\bullet$] A $\ccmm$ algorithm  by viewing the latter as a batch of $\ccmv$'s; this corresponds to using Algorithm~\ref{alg:general} with $(\ba, \bb)$ being matrices rather than vectors. We notice that the reduction carries over to that setting, i.e., we obtain a reduction from $\ccmm$ to modular $\ppmm$'s of various dimensions. 
    \item[$\bullet$] A $\pcmv$ algorithm for  by using $\BA_0 = {\bf 0}$, which yields a reduction from $\pcmv$ to three $\ppmv$'s modulo~$pq$ of various dimensions.
    \item[$\bullet$] A $\cpmv$ algorithm by setting $\ba = {\bf 0}$.
    \item[$\bullet$] Finally, both $\pcmm$ and $\cpmm$ can be handled as batches of $\pcmv$'s and $\cpmv$'s, respectively. 
\end{itemize}

In all those cases, our approach reduces the ciphertext or mixed plaintext-ciphertext operations to a number of similar operations (but with dimensions which may vary) modulo a small integer.  We state the following simplified theorem in the $\ccmm$ case.
\begin{corollary}\label{cor:gsw-mat}
Let $d_1, d_2$ be positive integers. Let $\BM \in \mathbb{R}^{d_1\times d_2}$ and $\bv\in \mathbb{R}^{d_2}$ be encoded as $\lfloor \Delta \BM \rceil$ and~$\lfloor \Delta \bv\rceil$, respectively. 
Let~$N^\BM$ and~$N^{\bv}$ be the encryption dimensions of~$\BM$ and~$\bv$, respectively, and let~$pq$ be the encryption modulus of~$\BM$.  
Algorithm~\ref{alg:general} reduces the approximate computation of $\lfloor \Delta (\BM \cdot \bv)\rceil$ using \RLWE, shared-$a$ or \MLWE\ encryption formats to four $\ModPPMv$'s modulo~$pq$, one for each pair of dimensions in $\{d_1, N^{\BM}\} \times \{d_2, N^{\bv}\}$, plus $O(d_1 + N^\BM)$ divisions on integers whose absolute values are~$<pq$. 
\end{corollary}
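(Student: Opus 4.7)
The plan is to derive Corollary~\ref{cor:gsw-mat} directly from Theorem~\ref{th:cost} (equivalently, from Theorem~\ref{th:gsw}) by viewing the matrix-matrix product as a batch of $d_3$ matrix-vector products. Concretely, let the second operand be $\BU \in \RR^{d_2 \times d_3}$ with encryption $(\BAU, \BBU)\in \ZZ_q^{N^\BU \times d_3} \times \ZZ_q^{d_2 \times d_3}$ under the matrix secret key~$\BSU$. The strategy is to apply Algorithm~\ref{alg:general} verbatim, but with the vectors $\ba, \bb$ replaced by the matrices $\BAU, \BBU$.

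The first step is to observe that the algebraic derivation in the proof of Theorem~\ref{th:gsw} is linear in $(\ba, \bb)$ and nowhere uses that these are vectors. Substituting matrices, the same chain of identities yields
\begin{align*}
p\BM\cdot(\BU - \BE) & \equiv \BSM\cdot(\BA_1 \BAU + \BA_0 \BBU) + (\BB_1 \BAU + \BB_0 \BBU)\\
& \qquad + (\BE_1 \BAU + \BE_0 \BBU) \pmod{pq},
\end{align*}
where $\BE$ is the encryption-error matrix of~$\BU$. This identity shows that the batched version of Algorithm~\ref{alg:general} produces a valid encryption of $\BM\cdot\BU$ in the same format as that of~$\BM$. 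The error bound is read off column-by-column from Theorem~\ref{th:gsw}, since the $j$-th output column depends only on the $j$-th columns of $\BAU$ and $\BBU$.

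The second step is to match the computation to the claimed reduction. The four matrix products $\BA_1 \BAU$, $\BA_0 \BBU$, $\BB_1 \BAU$, $\BB_0 \BBU$ have left factors of row dimension in $\{N^\BM, d_1\}$ and column dimension in $\{N^\BU, d_2\}$, while the right factors all have column dimension~$d_3$. This accounts for exactly one $\ModPPMM$ per triple of dimensions in $\{d_1, N^\BM\} \times \{d_2, N^\BU\} \times \{d_3\}$. The column-wise \textsf{Rescale} step then costs $O((d_1 + N^\BM) d_3)$ divisions modulo~$pq$, one per coefficient of the output.

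The only point that deserves genuine (if light) attention, rather than being pure bookkeeping, concerns the shared-$s$ variants: when $N^\BM \mid d_1$ or $N^\BU \mid d_2$ and the shared-$s$ representation is preferred for $\BM$ or $\BU$, the dimensions of $\BA_0, \BA_1, \BB_0, \BB_1$ should be read from Table~\ref{tbl:shareds_dims}, with $N^\BM$ (resp.\ $N^\BU$) effectively collapsing to $d_1$ (resp.\ $d_2$) in the appropriate slots. Once this case analysis is performed, the corollary is an immediate consequence of Theorem~\ref{th:cost} applied in parallel over the $d_3$ columns of~$\BU$, so no substantive obstacle arises.
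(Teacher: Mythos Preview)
Your proposal is correct and follows exactly the route the paper takes: the corollary is obtained by viewing $\ccmm$ as a batch of $d_3$ $\ccmv$'s and applying Algorithm~\ref{alg:general} with $(\ba,\bb)$ replaced by matrices, which is precisely what the paper states in the Extensions paragraph preceding the corollary. Your identification of the four products $\BA_i\BAU,\BA_i\BBU,\BB_i\BAU,\BB_i\BBU$ with the four triples in $\{d_1,N^\BM\}\times\{d_2,N^\BU\}\times\{d_3\}$ and the division count is exactly the intended bookkeeping.
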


If both matrices use the same encryption format:
\begin{itemize}
    \item[$\bullet$] when $d_1 \ll N^{\BM}, d_2 \ll N^{\BU}$, the dominant cost corresponds to the products of the $\BA$ parts, with cost
    $O(N^\BU N^\BM)$; 
    \item[$\bullet$] when $d_1 \approx N^\BM, d_2 \approx N^\BU$, all the products have similar costs; 
    \item[$\bullet$] when $d_1 \gg N^{\BM}, d_2 \gg N^{\BU}$ tend to infinity, using shared-$a$ gives a cost equivalent to the cost of the corresponding plaintext computation modulo~$pq$ (as soon as the precomputation cost of moving to shared-$a$ can be neglected).
\end{itemize}

For the sake of comparison with~Algorithm~\ref{alg:ccmm} which reduces $\ccmm$ to $\ModPPMM$ modulo $q$, we assume that $p$ and $q$ are similar-magnitude integers and that a matrix~$\BM$ 
modulo~$pq$ is represented as a pair $(\BM_p, \BM_q)$ via the Chinese Remainder Theorem. 
In this setting, a matrix-matrix product modulo~$pq$ has the same cost as two matrix-matrix products modulo~$q$. As such, we note that Corollary~\ref{cor:gsw-mat} reduces to~8 $\ModPPMM$'s with a modulus of the order of~$q$ while Algorithm~\ref{alg:ccmm} reduces to 4 $\ModPPMM$'s with a similar modulus. We thus expect the latter to outperform the former by a factor~$\approx 2$, when it is applicable (recall that it is more limited in scope, as it requires square matrices of dimension~$\ge N$).

\subsubsection{Costs of format conversions.}
We now estimate the costs of format conversions for the \RGSW-based algorithm, in the general $\ccmm$ case (recall that $\ccmv$ corresponds to the case $d_3 = 1$). Before running the algorithm, we switch to the right encryption format either for the two \RLWE\ ciphertexts encrypting~$\BM$ and~$\BM \BSU$, or for the \RLWE\ ciphertext encrypting~$\BU$, or both. After running the algorithm, we  switch the encryption format for the \RLWE\ ciphertext encrypting~$\BM\cdot \BU$. The encryption formats for~$\BM$ and $\BM\BSU$ must be the same, but we still have to consider four cases depending on the encryption formats of~$\BM$ and~$\BU$. For the sake of brevity, we ignore the case where at least one of the operands is in shared-$s$ format. Table~\ref{tab:costsgsw}, built using Table~\ref{tab:conversion_costs}, gives the formatting costs in all those cases. 

\begin{table}
\caption{\label{tab:costsgsw}Pre- and post-processing costs for the $\RGSW$-based $\ccmm$ algorithm, up to a multiplicative constant.}
\[
\begin{array}{|c|c|c|}
\hline
& \textrm{pre-processing} & \textrm{post-processing} \\ \hline
N^\BM|d_1, N^\BU |d_2 & d_1 d_2 \log N^\BM \log(d_1/N^\BM) + d_2 d_3 \log N^\BU \log(d_2/N^\BU) & d_1d_3 \log N^\BM\\ \hline
d_1|N^\BM, N^\BU|d_2 &  d_1 d_2 + d_2 d_3 \log N^\BU \log(d_2/N^\BU)&  \max(d_3, N^\BM/d_1) N^\BM \log N^\BM\\ \hline
N^\BM|d_1, d_2|N^\BU & d_1 d_2 \log N^\BM \log (d_1/N^\BM) + d_2 d_3 & d_1d_3 \log N^\BM \\ \hline
d_1 |N^\BM, d_2|N^\BU & d_1 d_2 + d_2 d_3 &  \max(d_3, N^\BM/d_1) N^\BM \log N^\BM \\ \hline
\end{array}
\]
\end{table}

We note, in particular, that all pre-processings cost at least $d_1 d_2$. This shows that for $\ccmv$, pre-processing is non-negligible and even dominates for large matrices with $N^\BM|d_1$ (it  is hence rather a pre-computation). 

\subsection{Integrating the  linear algebra algorithms in a larger computation}
Our algorithms rely on coefficient encoding and dedicated encryption formats (shared-$a$, MLWE, RGSW, etc.) while almost all CKKS computations are performed in slot-encoded representation, in order to benefit from the SIMD capability, and under plain RLWE encryption. 

Regarding format conversions, we suggest to perform conversion whenever needed using the algorithms from Section~\ref{sec:conversions}; the cost of linear algebra dominates the format conversion costs in almost all cases. In some very specific cases, the computation can be continued using the output format of the linear algebra step; see for instance~\cite{BCHPS24} for the case of bootstrapping using the shared-$a$ format. 

Regarding the use of the coefficient encoding, we suggest the following strategies in order to integrate linear algebra in the context of larger computations.
\begin{itemize}
    \item[$\bullet$] Perform linear algebra computations during the StC-first variant of bootstrapping. This is likely the best solution for large linear algebra computations, as their cost, when performed using slot-based methods~\cite{JKLS18}, may be higher than performing the linear algebra computation at low level and then bootstrapping; 
    this is even more true when several linear algebra computations happen in a row, as one can remain in coefficient encoding throughout all of them (consider, e.g., the attention phase of LLMs where the result of two PCMMs are immediately used as operands to a CCMM). In that case, as the number of ciphertexts to be bootstrapped is large, the use of the shared-$a$ technique for bootstrapping~\cite{BCHPS24} is recommended.
    
    \item[$\bullet$] Some of our algorithms are compatible with slot encoding; this is in particular the case of the CPMM algorithms when the number of rows of the encrypted matrix is no smaller than the ring-degree, and of the RGSW algorithm where the matrix can be made of slot-encoded ciphertexts, when the number of rows is no smaller than the ring degree. 
    The vector, on the other hand, must use coefficient encoding. In that case, the result is slot-encoded. 
\end{itemize}

\section{Implementation}\label{sec:experiments}
We have implemented Algorithms~\ref{alg:fast_fmtswitch} and~\ref{alg:transpose} for format conversions, and Algorithms~\ref{alg:pcmm},~\ref{alg:ccmm} and~\ref{alg:general} for $\cpmm$ and $\ccmm$. Regarding Algorithms~\ref{alg:pcmm} and~\ref{alg:general}, we focused on the case of matrices with $d_1 = d_2 \ge N$ and varying~$d_3$.  

Our implementations make use of highly optimized floating-point arithmetic libraries implementing the BLAS (basic linear algebra subroutines). In this context, we mostly use the {\tt dgemm} (double precision general matrix-matrix multiplication) and {\tt dgemv} (double precision general matrix-vector multiplication) primitives. 
As our algorithms provide reductions to $\ModPPMM$, we describe how we implemented $\ModPPMM$ based on $\fpPPMM$.
We note that, in order to implement our algorithms, one might also choose to opt for ad hoc libraries offering implementations of $\ModPPMM$, often by reducing them to BLAS libraries; FLINT~\cite{flint} is one such library. 

Our experiments are based on the HEaaN library~\cite{heaanlib}, which implements CKKS. We used the Open\-\mbox{BLAS-0.3.26} implementation of BLAS. All the provided running times correspond to experiments on an Intel(R) Xeon(R) Gold 6342 CPU @ 2.80GHz, using a single thread. Even though we choose to provide single-thread timings to allow for easier comparison, we stress that most BLAS implementations (including OpenBLAS) are able to take advantage of available parallelism (including GPU parallelism in the case of cuBLAS). Our algorithms mostly rely on calls to BLAS and can thus take advantage of the available parallelism in the same proportion. The only exception to this is the Transpose algorithm, for which the \tweak component exhibits less parallelism. 

In all cases, our timings are averaged over 10 runs. All our parameters enable around 128-bit security, according to the lattice estimator~\cite{lattice-estimator}.\footnote{commit 352ddaf4a288a0543f5d9eb588d2f89c7acec463, Sep.\ 17, 2025.} 
In Tables~\ref{tab:pcmm},~\ref{tab:ccmm-jaihyun} and~\ref{tab:ccmm-yj-symmetric-degree}, the precision (in bits) of an approximation~$\widetilde{\BM}$ to a matrix  $\BM\ne {\bf 0}$ refers to the \emph{relative} precision defined as:
\[
\mathrm{prec}_{\widetilde{\BM},\BM} =\log\|\BM\|_\infty - \log\left(\|\BM-\widetilde{\BM}\|_\infty\right) \enspace.
\]

\subsection{Reduction to BLAS}\label{ssec:impl-aspects}

Our algorithms from Section~\ref{sec:algorithms} all reduce homomorphic linear algebra operations to exact modular $\ppmm$ or $\ppmv$, i.e., multiplications of matrices/vectors over~$\ZZ_{Q}$ for some positive integer~$Q$. 
In practice, we want to perform these $\ModPPMM$ operations using $\fpPPMM$, so as to enjoy fast implementations of the latter. 
We shall describe three different techniques for this task. The first one is a reduction, in the complexity-theoretic sense, from $\ModPPMM$ to $\fpPPMM$, thus providing a complete reduction chain from $\cpmm$ and $\ccmm$ to~$\fpPPMM$. The second one sees the modular~$\ModPPMM$ as an integral~$\ppmm$ and computes an approximation of it using floating-point arithmetic; in that case, we analyze the impact of this approximation on the error in the $\cpmm$ or $\ccmm$ computation. 
Finally, the last technique rests on the use of the $\textsf{ModSwitch}$ operation on ciphertexts, which allows to change the modulus of a ciphertext. Using the Chinese Remainder Theorem (CRT), this allows us to turn a $\cpmm$ or $\ccmm$ into a number of $\ModPPMM$'s, but with a small modulus, for which our first technique applies in a more efficient way.  

In our implementations, we shall use the first method combined with the second one in the PC case, and the last one in the CC case. 

\subsubsection{Strategy 1: chopping off the inputs.}
This technique is folklore and is implemented in libraries such as~FFLAS-FFPACK~\cite{fflas-ffpack, DGP08}.
The IEEE-754~\cite{ieee754} standard for floating-point computation ensures that integer computations can be performed exactly using floating-point arithmetic as long as the input, output, and any intermediate results are integers whose absolute values are $< 2^{53}$. We deduce the following lemma.

\begin{lemma}\label{le:strategy1}
Let $d_1$ and~$d_2$ be two positive integers,
 $\BM\in \ZZ^{d_1 \times d_2}$ and~$\bv\in \ZZ^{d_2}$. Define $\theta_1 = \|\BM\|_\infty, \theta_2 = \|\bv\|_\infty$, 
 $K = \lfloor \sqrt{2^{53}/d_2} \rfloor$
 and  $k_i = \lceil \log (\theta_i) / \log K \rceil$ for $i \in \{1, 2\}$. 
Then the product $\BM\cdot \bv$ can be computed exactly using~$k_1 k_2$ calls to double-precision floating-point arithmetic matrix-vector multiplication in dimension~$d_1 \times d_2$, plus~$O(k_1\cdot k_2\cdot d_2)$ operations on integers whose absolute values are~$\le d_2 \theta_1 \theta_2$.
\end{lemma}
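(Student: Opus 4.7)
The approach follows the standard base-$K$ decomposition technique. First, I would write each of the two inputs in (signed) base $K$: $\BM = \sum_{0 \le i < k_1} K^i \BM_i$ and $\bv = \sum_{0 \le j < k_2} K^j \bv_j$, where each $\BM_i$ (resp.\ $\bv_j$) is an integer matrix (resp.\ vector) with entries of absolute value $<K$. The number of digits $k_i$ suffices because $K^{k_i} \ge \theta_i$ by the very definition of $k_i$. This decomposition costs $O(k_1 d_1 d_2 + k_2 d_2)$ elementary integer operations and is absorbed by the other terms in the complexity statement.

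The product is then expanded as
$$\BM \cdot \bv \ = \ \sum_{0 \le i < k_1} \sum_{0 \le j < k_2} K^{i+j}\, (\BM_i \cdot \bv_j) \enspace.$$
The crux of the argument, and the main thing to verify carefully, is that each of the $k_1 k_2$ partial products $\BM_i \cdot \bv_j$ can be computed exactly via a single call to double-precision floating-point matrix-vector multiplication in dimension $d_1 \times d_2$. Indeed, viewing the entries of $\BM_i$ and $\bv_j$ as doubles, each component of the output is a sum of $d_2$ terms of the form (entry of $\BM_i$) $\times$ (entry of $\bv_j$), each of absolute value $<K^2 \le 2^{53}/d_2$. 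All partial sums accumulated during the dot-product are therefore bounded in absolute value by $d_2 \cdot K^2 \le 2^{53}$, hence exactly representable as doubles under the IEEE-754 standard; the corresponding additions and multiplications are then performed without any rounding error. This is precisely the motivation for the definition $K = \lfloor \sqrt{2^{53}/d_2} \rfloor$, and the only subtle point in the proof.

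Finally, the reconstruction phase sums up the $k_1 k_2$ integer vectors $K^{i+j}(\BM_i \cdot \bv_j)$ using integer arithmetic. Each entry of the final output $\BM \cdot \bv$ has absolute value at most $d_2 \theta_1 \theta_2$, and the same bound holds for all the partial sums accumulated during the reconstruction. The cost is $O(k_1 k_2 d_1)$ integer additions on operands bounded by $d_2 \theta_1 \theta_2$, matching (up to the apparent $d_1 \leftrightarrow d_2$ transposition in the statement) the claimed integer-operation count. Combining the three phases yields the lemma.
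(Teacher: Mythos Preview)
Your proposal is correct and follows essentially the same approach as the paper: base-$K$ decomposition of both operands, the key bound $d_2 K^2 \le 2^{53}$ ensuring exactness of each floating-point partial product, and integer recombination. Your parenthetical remark about the $d_1 \leftrightarrow d_2$ discrepancy in the integer-operation count is apt; the paper's proof does not spell out this count and the stated bound appears to be a minor slip.
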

\begin{proof}
An integer~$r$ with $|r| < K^k$ can be decomposed in base~$K$ as 
\[
r = \sum_{0 \leq i< k} r_i K^i \enspace, 
\]
where the $r_i$'s are of the sign of~$r$ and satisfy~$|r_i| < K$ (we allow the sign of $0$ to be both 1 and $-1$).
We decompose the coefficients of $\BM$ and~$\bv$ in this way to obtain 
\begin{equation}\label{eq:decomp}
\BM = \sum_{0 \leq i < k_1} \BM_i \cdot K^i, 
\ \ \ \bv = \sum_{0 \leq  i < k_2} \bv_i \cdot K^i \enspace,
\end{equation}
where the $\BM_i$'s and~$\bv_i$'s are matrices and vectors with coefficients smaller than~$K$ in absolute value.
Then, we have 
\[
\BM \cdot \bv = \sum_{0\le i < k_1,0\le j < k_2} \BM_i \cdot \bv_j \cdot K^{i+j} \enspace. 
\]
Each of the matrix-vector products $\BM_i \cdot \bv_j$ leads to computations where the largest intermediate result is, in absolute value, no larger than
\[
d_2 \|\BM_i\|_\infty \| \bv_j\|_\infty < d_2 K^2 \leq 2^{53} \enspace.
\]
These products 
can be thus computed exactly using double-precision floating-point arithmetic. We then compute their sum using integer arithmetic on integers $\le d_2 \theta_1 \theta_2$. \qed
\end{proof}

When handling $\ModPPMM$, we use this lemma to compute the underlying product exactly over the integers, then reduce it modulo the modulus. We obtain a reduction of $\ModPPMM$ to $k_1 k_2$ $\fpPPMM$'s.

\subsubsection{Strategy 2: truncation.}
This is a modification of Strategy~1 where we allow the $\ModPPMM$ to be approximate and keep only the largest term in a variant of the decomposition of Equation~\eqref{eq:decomp}. Concretely, we set $K=2^{53}$ and the decomposition starts from the high-order bits of~$\BM$ and~$\bv$ rather than from the low-order bits. This method is mostly applicable to the $\ModPPMM$ which contributes to the $\BB$-part of the final result as, in view of the decryption equations
\[
\isst{\BS}\cdot \BA + \BB \approx \Delta \cdot\BM  \ \textrm{ or } \ \isst{\BA}\cdot \BS + \BB \approx \Delta \cdot \BM \enspace, 
\]
an error of magnitude $e$ on $\BB$ results in the same error on the decrypted plaintext, and in an error $\approx e/\Delta$ on the underlying decoded message. 

We first note that we can round any integer $x$  to 53 bits of precision (assuming that $|x|$ is not larger than the numbers handled by the floating-point format, i.e., $\approx 2^{1023}$ for the IEEE-754 double precision) and thus obtain a double-precision floating-point number which we will denote by~$\tx$, such that 
\[
| x - \tx | \lesssim 2^{-53} \cdot |x| \enspace. 
\]
In order to use floating-point arithmetic to approximate a product $\BB_1 \cdot \BB_2$, where $\BB_1 \in \ZZ_Q^{d_1\times d_2}$ and~$\BB_2 \in \ZZ_Q^{d_2\times d_3}$, we perform this truncation coordinate-wise in order to obtain matrices of floating-point numbers $\widetilde{\BB_1}$ and $\widetilde{\BB_2}$ such that 
\[
\|\widetilde{\BB_i} - \BB_i \|_\infty \lesssim 
2^{-53} \cdot \|\BB_i\|_\infty, \ \ i \in \{1,2\} \enspace .
\]

From this inequality, we derive:
\begin{align*}
\| \widetilde{\BB_1}\cdot \widetilde{\BB_2} - \BB_1 \cdot \BB_2 \|_\infty  & \le d_2 \|\widetilde{\BB_2}\|_\infty \| \widetilde{\BB_1} - \BB_1 \|_\infty + d_2 \|\BB_1\|_\infty \cdot \| \widetilde{\BB_2} - \BB_2 \|_\infty \\
& \lesssim 2^{-52} \cdot d_2 \|\BB_1\|_\infty \|\BB_2\|_\infty \enspace. 
\end{align*}
\textbf{}For $x \in \RR$, we define $\{x\}_Q := x - Q\lfloor x/Q\rceil$.
If $x,y \in \RR$, we then have $|\{x\}_Q - \{y\}_Q \bmod Q| \le |x - y|.$  Applying this remark to the coordinates of $\widetilde{\BB_1}\cdot \widetilde{\BB_2}$ and $\BB_1 \cdot \BB_2$, we obtain 
\[
\left \|
\widetilde{\BB_1}\cdot \widetilde{\BB_2} - Q \left \lfloor \frac{\widetilde{\BB_1} \cdot \widetilde{\BB_2}}{Q} \right\rceil
- \BB_1 \cdot \BB_2 \bmod Q \right\|_\infty \lesssim
2^{-51} \cdot  d_2 \left\|\BB_1\right\|_\infty \left\|\BB_2\right\|_\infty \enspace.
\]
As $\widetilde{\BB_1}$ and $\widetilde{\BB_2}$ are matrices of floating-point numbers, an integer approximation $\widehat{\BB}$ of $\widetilde{\BB_1}\cdot \widetilde{\BB_2} - Q \left \lfloor \widetilde{\BB_1} \cdot \widetilde{\BB_2} / Q \right\rceil$ can be computed using a double-precision floating-point matrix multiplication and double precision floating-point arithmetic, with an additional numerical error  of the order of $\lesssim 2^{-53} \cdot  d_2 \|\BB_1\|_\infty \|\BB_2\|_\infty$. Overall, we obtain
\[
\| \widehat{\BB} - \BB_1 \cdot \BB_2 \bmod Q\|_\infty \lesssim 2^{-50} \cdot  d_2 \|\BB_1\|_\infty \|\BB_2\|_\infty \enspace.
\]

When we consider the error on the underlying cleartext, encoded with scaling factor~$\Delta$, we deduce that using one single floating-point $\ppmm$ for computing the $\BB$-part thus induces, after rescaling, decryption and decoding, an error on the underlying message of the order of~$2^{-50} \cdot d_2 \|\BB_1\|_\infty \|\BB_2\|_\infty / \Delta^2$. 
In particular, combined with Algorithm~\ref{alg:pcmm-with-precomp} in the case where the plaintext matrix is known beforehand and precomputation is available, a reduction from $\cpmm$ in dimension $d_1 \times d_2 \times d_3$ to a single $\fpPPMM$ with the same dimensions.

One can also attempt to use floating-point arithmetic for the $\BA$-part, in the case of the structured-$\BS$ formats, for Algorithms~\ref{alg:pcmm} and~\ref{alg:general}. The main difference is that the $\BA$-part occurs in the decryption equation as~$\isst{\BS}\cdot \BA$: the  error resulting from using floating-point approximation for the computation of $\BA_1\cdot \BA_2$ on the underlying plaintext will now be of the order of 
\[
\lesssim 2^{-50} \cdot \max(d_2, N) \opnorm{\isst{\BS}}_\infty \|\BA_1\|_\infty \|\BA_2\|_\infty \enspace. 
\]
When using the structured-$\BS$ shared-$a$ format and  ternary keys with Hamming weight~$h$, we thus expect to lose~$(\log h)/2$ further bits of relative precision (assuming~$h$ is sufficiently large). 

Finally, note that intermediate compromises between Strategies~1 (more precise) and~2 (more efficient) are possible by varying the number of terms that we keep in Equation~\eqref{eq:decomp}. 

\subsubsection{Strategy 3: modulus switching and CRT.}\label{sec:strategy3}
This strategy follows a different path, by operating at the ciphertext level before the reduction to $\ModPPMM$. We leverage the fact that ciphertexts are amenable to \emph{modulus switching}. 
\begin{itemize}
     \item[$\bullet$] \modswitch. Given a ciphertext $\ct \in \R_{Q}^2$ encrypted under the secret key $\sk \in \R$, and $Q' \approx Q$, \modswitch\ returns a ciphertext $\ct' \in \R_{Q'}^2$ such that $\Dcd(\Dec(\sk, \ct)) \approx \Dcd(\Dec(\sk, \ct'))$.
\end{itemize}

Compared to Lemma~\ref{le:strategy1}, we keep the same value for~$K$ and define $k = \max(k_1, k_2)$. We define~$Q' = p_0 \ldots p_{k-1}$, where the $p_i$'s are pairwise coprime positive integers smaller than~$K$. 
Using \textsf{ModSwitch} , we can thus assume that the $\cpmm$ or $\ccmm$ to be computed is defined modulo~$Q'$. Algorithms~\ref{alg:pcmm},~\ref{alg:ccmm} and~\ref{alg:general} then reduce these problems to a number of~$\ModPPMM$'s modulo~$Q'$. If the underlying matrices have dimensions $d_1\times d_2$ and $d_2 \times d_3$, using the CRT, we reduce each matrix-matrix product modulo~$Q'$ to 
\begin{itemize}
    \item[$\bullet$] one matrix-matrix product of the same dimensions modulo $p_i$, for all~$0\le i < k$; 
    \item[$\bullet$] $d_1 d_3$ calls to the CRT with moduli~$p_0, \ldots, p_{k-1}$. 
\end{itemize}
The second step can be performed in~$O(k)$ arithmetic operations modulo $Q'$ per matrix element once we have precomputed the elements $((Q'/p_i)^{-1} \bmod p_i) (Q'/p_i) \bmod Q'$ (for~$0 \leq i <k$).
We then compute the result of $\cpmm$ or $\ccmm$ modulo $Q'$. If needed, we can \textsf{ModSwitch} back to the initial modulus~$Q$.

A variant of this approach\footnote{replacing \textsf{ModSwitch} by lifting the product modulo $q$ to a product over the integers, and computing it modulo $q' = q_0 \dots q_{k-1}$ with $q' > d_2 q^2$} is implemented in the Flint library~\cite{flint}.

\subsubsection{Comparison between the strategies.}
When applicable (mostly for the $\BB$-part in combination with Strategy~1, when the precision needs are moderate), Strategy~2~is preferable.
Between the other two options, Strategy~3 reduces to only $\max(k_1, k_2)$ calls to $\fpPPMM$, and may thus seem preferable compared to a reduction to $k_1 k_2$ $\fpPPMM$. 
Yet, \emph{in the case of our $\cpmm$ experiments}, the matrix $\BU$ is a low-precision plaintext matrix, which we encode with a small scaling factor~$\Delta$. As long as $\Delta^2 d_2 < 2^{53}$, we can take~$k_2 = 1$. The number of calls to $\fpPPMM$ is then the same as in Strategy~1 and Strategy~2. In that case, Strategy~1 is preferable for two reasons: the associated pre- and post-processing are lighter (chopping and adding, compared to $\textsf{ModSwitch}$, modular reductions and CRT).

\subsection{On CKKS parameter selection}
Our algorithms work for an arbitrary ciphertext modulus. However, our implementation performs the linear algebra at a low modulus. There are three main reasons for this choice.
\begin{itemize}
\item[$\bullet$] Our algorithm uses the coefficient encoding of CKKS. In the CKKS computation chain, in the case of the most common (S2C-first) bootstrapping variant, this format is only used at the lowest and highest computational levels.
\item[$\bullet$] The cost of computation at a given level is roughly proportional to the corresponding level number. This means that one should try to organize the computations so that heavy computations are performed at the lowest possible level. This gives a strong incentive to perform large-dimensional homomorphic algebra at the lowest possible modulus. 
\item[$\bullet$] Our reductions are more efficient when the matrix dimensions are at least as large as the RLWE ring-degree~$N$. When the ciphertext modulus is small, 128-bit security can be achieved for a smaller ring degree, down to $2^{13}$ or sometimes even $2^{12}$ depending on the amount of  auxiliary modulus required. This improves the range of applicability of our most efficient reductions. 
\end{itemize}

We have chosen ring degrees that are sufficiently large to allow for the auxiliary modulus required for the pre- and post-processings between the various encryption formats. In practice, this means $\log N = 13$ for Algorithm~\ref{alg:general} and $\log N = 12$ for Algorithm~\ref{alg:pcmm}. For Algorithm~\ref{alg:ccmm}, we use $N = d \in \{2^{12}, 2^{13}, 2^{14}\}$, as in the description of the algorithm. 
In all cases, we chose the structured-$\BS$ shared-$a$ encryption format as soon as the relevant matrix dimension is larger than the ring degree, and the plain $\RLWE$ encryption format otherwise.

\subsection{Format conversion experiments }\label{sse:exp-fmt-conv}
We present experimental timings for some of the format conversion algorithms described in Section~\ref{sec:conversions}. 

\subsubsection{Shared-$a \leftrightarrow$ shared-$s$.} 
We have implemented Algorithm~\ref{alg:fast_fmtswitch} and measured its performance. We have also implemented backward conversion from shared-$a$ to shared-$s$, which  consists in independent key-switchings for all the ciphertexts. 
Timings (in seconds) are presented in Table~\ref{tab:exp_fmt_conversion}, for varying  numbers of ciphertexts converted to or from shared-$a$ format. In these experiments, we have used $N = 2^{12}$, ciphertext modulus~$q = 2^{54}$ and auxiliary prime $P = 2^{54}$ for key-switching. 
The timings are consistent with the expected behavior: shared-$s$ to shared-$a$ shows a slightly superlinear cost as a function of~$n$, while shared-$a$ to shared-$s$ shows a linear cost as a function of~$n$. 

\begin{table}[h]
    \centering
    \begin{tabular}{|c|c|c|}
        \hline\hline
        $n$ & Algorithm~\ref{alg:fast_fmtswitch}~ & $~\text{sh.-}a \rightarrow \text{sh.-}s$ \\\hline
        $2$ & $1.11\textrm{e-}3$ &  $6.27\textrm{e-}4$ \\\hline
        $2^2$ & $2.83\textrm{e-}3$ &  $1.03\textrm{e-}3$ \\\hline
        $2^6$ & $9.1\textrm{e-}2$ &  $1.37\textrm{e-}2$ \\\hline
         $2^7$ & $0.212$  & $2.7\textrm{e-}2$ \\ \hline
        $2^{12}$ & $1.05\textrm{e}1$ & $0.757$ \\\hline
        $2^{13}$ & $2.45\textrm{e}1$ & $1.68$ \\\hline
        $2^{14}$ & $5.07\textrm{e}1$ & $3.13$\\\hline   
    \end{tabular}
    \caption{\label{tab:exp_fmt_conversion}Timings for  shared-$s$ to shared-$a$ conversion (left) and shared-$a$ to shared-$s$ conversion (right), in seconds (average over 10 runs).}
\end{table}

We have also used these timings to estimate the conversion cost from shared-$s$ representation to either structured-$\BS$ shared-$a$ or structured-$\BA$ shared-$a$ representation, and back. These estimates can be found in Table~\ref{tab:matfmtswitch}.

\begin{table}[h]
    \centering
    \begin{tabular}{|c|c|c||c|c|}
        \hline\hline
        $(d_1, d_2, d_3)$ & $\text{sh.-}s\rightarrow$ & 
       $\text{St.-}\BS\ \text{sh.-} a$ & 
        $\text{sh.-}s \rightarrow $ &
        $\text{St.-}\BA\ \text{sh.-} a$\\         
        &$\text{St.-}\BS\ \text{sh.-} a$  &
        $\rightarrow \text{sh.-}s$ &
        $\text{St.-}\BA\ \text{sh.-}s$ &
              $ \rightarrow \text{sh.-}s$ \\
              \hline
        $(2^{12}, 2^{12}, 1)$ & N/A & N/A  & $1.05\textrm{e}1$ & N/A\\ \hline
        $(2^{12}, 2^{12}, 2^6)$ & N/A & N/A & $1.05\textrm{e}1$ & $1.37\textrm{e-}2$  \\ \hline
        $(2^{12}, 2^{12}, 2^{12})$ & N/A & N/A & $1.05\textrm{e}1$ & $7.57\textrm{e-}1$                 
       \\ \hline
        $(2^{13}, 2^{13}, 1)$ & $9.09$ & $6.27\textrm{e-}4$  & $4.90\textrm{e1}$ & N/A\\ \hline
        $(2^{13}, 2^{13}, 2^6)$ & $9.09$ & $4.01\textrm{e-}2$ & $4.90\textrm{e1}$ & $2.74\textrm{e-}2$ \\ \hline
        $(2^{13}, 2^{13}, 2^{13})$ & $9.09$ & $5.14$ & $4.90\textrm{e1}$ & $3.36$   
        \\\hline
        $(2^{14}, 2^{14}, 1)$ & $4.64\textrm{e}1$ & $1.03\textrm{e-}3$  & $2.03\textrm{e}2$ & N/A\\ \hline
        $(2^{14}, 2^{14}, 2^7)$ & $4.64\textrm{e}1$ & $1.32\textrm{e-}1$ & $2.03\textrm{e}2$ & $1.08\textrm{e-}1$ \\ \hline
        $(2^{14}, 2^{14}, 2^{14})$ & $4.64\textrm{e}1$ & $1.69\textrm{e}1$ & $2.03\textrm{e}2$ & $1.25\textrm{e}1$ \\ \hline
    \end{tabular}
    \caption{\label{tab:matfmtswitch}Estimates for the pre- and post-processing costs for large-dimensional matrix formats (shared-$a$), in seconds. ``N/A''~indicates that the format coincides with Structured-$\BS$ shared-$s$, so that there is no processing cost.}
\end{table}

\subsubsection{Transpose.}
Table~\ref{tab:exp-transpose} gives timings for~Algorithm~\ref{alg:transpose}. As required by the algorithm, the ring dimension is equal to the matrix dimension. 
Despite the limited number of possible values for~$d=N$, we still observe the expected quasi-quadratic behaviour.

\begin{table}[H]
    \centering
    \begin{tabular}{|c|c|}
        \hline
        $N$ & Algorithm~\ref{alg:transpose}\\
        \hline
        $2^{12}$ & 5.60 \\
        \hline
        $2^{13}$ & 25.4 \\
        \hline 
        $2^{14}$ & 117 \\
        \hline
    \end{tabular}
    \caption{\label{tab:exp-transpose}Timings for the transpose of an encrypted matrix of dimension $d \times d$ with~$d = N$, using~Algorithm~\ref{alg:transpose}. Timings are in seconds  (average over 10 runs).}
\end{table}

\subsection{Matrix multiplication experiments}
We turn to our experiments for matrix/matrix and matrix/vector multiplication. 
We chose a base modulus~$q \approx 2^{54}$, with scaling factors of $\approx 2^{20}$, so that after rescaling our final result is obtained modulo~$q_0 \approx 2^{34}$, with a scaling factor $\approx 2^{20}$. 
The auxiliary modulus for the $\RGSW$ method is set to $p = 2^{40}$, while the auxiliary modulus for key-switching is set to $P = 2^{54}$.
Our input matrices are sampled with coefficients that are i.i.d.\ uniform in~$[-1, 1]$. Gadget rank (``{\tt dnum}") is set to 2.

\subsubsection{Experiments with Algorithm~\ref{alg:pcmm}.}
We have implemented Algorithm~\ref{alg:pcmm} using Strategy~1 for the $\BA$-part. As~$\|\BM\|_\infty \approx q/2 \le 2^{53}$ and $\|\BU\|_\infty \le 2^{19}$, we have $k_1 = 3$ and $k_2 = 1$ with the notations of Lemma~\ref{le:strategy1}. Handling the $\BA$-part hence consumes $3$ calls to $\fpPPMM$. We use Strategy~2 for the $\BB$-part, handling it with a single $\fpPPMM$. We only report the timings for $\ModPPMM$, as \textsf{Rescale} uses a negligible amount of time. 

The timings are reported in Table~\ref{tab:pcmm}. The factor 3 between the computation times for the $\BA$-part and the $\BB$-part is clearly visible in the square case $d_1 = d_2 = d_3$. For smaller $d_3$, by comparing the timings of Table~\ref{tab:pcmm} with the cost of one $\fpPPMM$ in the same dimension, we observe that most of the computation time is caused by the process of turning $\ModPPMM$ into $\fpPPMM$ rather than in linear algebra itself. In our opinion, this is explained by the following facts: 
\begin{itemize}
    \item[$\bullet$] for small $d_3$, this processing has the same asymptotic complexity as the $\fpPPMM$ itself; as $\fpPPMM$ only performs very simple arithmetic operations, each component (processing, linear algebra) takes a significant constant proportion of the timings; 
    \item[$\bullet$] we are comparing decades-polished and fine-tuned code (BLAS) to fresh and prototype code (our reduction). Much more work would be required to turn the latter in a code of similar quality as the former. 
\end{itemize}

\begin{table}[h]
    \centering
    \begin{tabular}{|c|c|c|c|c|}
        \hline\hline
        $d_1 \times d_2 \times d_3$ & \multicolumn{3}{c|}{$\ModPPMM$ time} 
        & Precision
        \\\cline{2-4}
        & $a$-part & $b$-part & total & (bits)
        \\ 
        & & (Strategy~1) & (Strategy~2) & 
        \\\hline
$ 2^{12} \times 2^{12} \times 1 $ &  $ 0.178 $ & $ 0.249 $ & $ 0.427 $ & $ 13.5 $\\ \hline
$ 2^{12} \times 2^{12} \times 2^{6} $ &  $ 0.253 $ & $ 0.270 $ & $ 0.523 $ & $ 13.5 $\\ \hline
$ 2^{12} \times 2^{12} \times 2^{12} $ &  $ 5.00 $ & $ 1.78 $ & $ 6.78 $ & $ 13.4 $\\ \hline
$ 2^{13} \times 2^{13} \times 1 $ & $ 0.402 $ & $ 1.20 $ & $ 1.60 $ & $ 13.8 $\\ \hline
$ 2^{13} \times 2^{13} \times 2^{7} $ & $ 0.699 $ & $ 1.42 $ & $ 2.12 $ & $ 13.6 $\\ \hline
$ 2^{13} \times 2^{13} \times 2^{13} $ & $ 19.1 $ & $ 13.3 $ & $ 32.4 $ & $ 13.7 $\\ \hline
$ 2^{14} \times 2^{14} \times 1 $ & $ 0.914 $ & $ 6.19 $ & $ 7.10 $ & $ 14.0 $\\ \hline
$ 2^{14} \times 2^{14} \times 2^{7} $ & $ 1.45 $ & $ 6.74 $ & $ 8.19 $ & $ 13.5 $\\ \hline
$ 2^{14} \times 2^{14} \times 2^{14} $ & $ 73.7 $ & $ 102 $ & $ 176 $ & $ 13.5 $\\ \hline
    \end{tabular}

    \caption{Timings, in seconds, for $\cpmm$ using~Algorithm~\ref{alg:pcmm}, averaged over 10 runs. The computation of the $\BA$-part is reduced to 3 $\fpPPMM$ using Strategy~1, while the computation of the $\BB$ part is reduced to 1 $\fpPPMM$ using Strategy~2. The first three rows use the $\RLWE$ format, the last six the structured-$\BS$ shared-$a$ format. We do not estimate the pre- and post-processing times, for which we refer to Table~\ref{tab:matfmtswitch}. We give the worst precision over all coefficients and the 10 runs.   
    \label{tab:pcmm} }
\end{table}

These experiments also illustrate the impact of the shared-$a$ technique. As we chose $N = 2^{12}$, timings for $(d_1, d_2, d_3)$ with $d_1, d_2 \ge 2^{13}$ all use the structured-$\BS$ shared-$a$ format. We can spot the impact of this format by noticing that if we were using the shared-$s$ format, 
the timings for the $b$-part would remain the same,
but the timings for the $a$-part would be three times the timings for the $b$-part. 
For instance, for $d_1 = d_2 = d_3 = 2^{14}$ we would expect shared-$s$ to take $\approx 400$s instead of $176$s. Even adding the conversion costs ($\approx 60$s, see Table~\ref{tab:matfmtswitch}), using structured-$\BS$ shared-$a$ brings a very significant improvement. 

\subsubsection{Experiments with Algorithm~\ref{alg:ccmm}.}
In Table~\ref{tab:ccmm-jaihyun}, we present  timings for the $\ccmm$ algorithm (Algorithm~\ref{alg:ccmm}), with $d_1 = d_2 = d_3$ equal to the ring degree~$N$, using the \RLWE\ format. 
We used Strategy~3 based on $\textsf{ModSwitch}$, from $q \approx 2^{54}$ to a product $q'_0 q'_1 q'_2$ with $q'_i \le 2^{19}$.   

We observe that, despite the fact that Transpose is asymptotically negligible, it still consumes a significant proportion of the computation time for the values of~$N$ that we consider. Post-processing (Relin/Rescale) accounts for around~$5\%$ of the total computation time. Finally, the $\ppmm$ column shows timings that are close to 12 calls to $\fpPPMM$ in the same dimensions, as expected.

\begin{table}[h]
    \centering
    \begin{tabular}{|c|c|c|c|c|c|c|c|c|}
        \hline\hline
        $d_1 \times d_2 \times d_3$  & \multicolumn{4}{c|}{Latency of each step (s)} & Total &  Precision
        \\\cline{2-5}
        &  Transpose & $\ModPPMM$ & Relin. & Rescale &  & (bits)
        \\\hline
        
        $2^{12} \times 2^{12} \times 2^{12}$ &   $16.8$ & $22.1$ & $1.17$ & $1.07$ & $41.2$ & $9.1$ \\\hline
        $2^{13} \times 2^{13} \times 2^{13}$ &  $73.6$ & $162$ & $5.17$ & $4.81$ & $245$ & $8.8$\\\hline
        $2^{14} \times 2^{14} \times 2^{14}$ &  $352$ & $1300$ & $24.0$ & $21.0$ & $1710$ & $8.3$\\\hline
    \end{tabular}
    \caption{Timings for~Algorithm~\ref{alg:ccmm}, averaged over 10 runs. All three experiments use the $\RLWE$ format. The precision displayed is the worst precision over all coefficients over the runs.
    \label{tab:ccmm-jaihyun}}
\end{table}

\subsubsection{Experiments with Algorithm~\ref{alg:general}.}\label{sse:exp-ccmm}
Finally, in Table~\ref{tab:ccmm-yj-symmetric-degree}, we present the timings for the RGSW-based $\ccmm$ algorithm (Algorithm~\ref{alg:general}), with $d_1 = d_2$. 
We have used two different parameters for the ring degree: $N = 2^{12}$ for both input matrices and $N = 2^{13}$ for both input matrices.
Compared to our other choices, the first choice does not allow  pre- or post-processing to and from the structured-$\BS$ shared-$a$ format, with a decent precision and/or efficiency. However, as the \RGSW-based method is likely to be more useful when  precomputations are allowed, we include those timings which are legitimate if switching to shared-$a$ representation is done as part of the precomputation. Another choice of ring degrees for this method would be $N^\BM = 2^{12}$ for the matrix in $\RGSW$ format and $N^{\BU} = 2^{11}$ for the matrix in $\RLWE$ format. However, this would require precomputation (rather than pre-processing) on both $\BM$ and $\BU$, which seems very limited in terms of applications. 

In the implementation, we use Strategy 3 based on $\modswitch$ for $\ModPPMM$. As $pq \approx 2^{94}$, we used  $q'=q'_0 q'_1 q'_2 q'_3 q'_4$ with $q'_j \approx 2^{19}$ for all~$j$. We thus reduce each $\ModPPMM$ modulo~$pq$ to 5 $\fpPPMM$'s. We do not decompose the timings in several steps as our experiments show that over $95\%$ of the time is spent in the $\ModPPMM$. As for Table~\ref{tab:pcmm}, however, we can see that, for small~$d_3$, the cost does not grow linearly with~$d_3$, as one would expect. Again, our finer measurements suggest that for small~$d_3$, even elementary processing steps (e.g., converting an $\texttt{int64}$ matrix in a $\texttt{float64}$ matrix) take a significant portion of the total time. This suggests that a more careful implementation would be required to minimize these overheads to have a better assessment of the full potential of our reductions in the small $d_3$ case. 

The last row of~Table~\ref{tab:ccmm-yj-symmetric-degree} illustrates the impact of the structured-$\BS$ shared-$a$ format. We expect and observe a factor~$\approx 5/3$ (corresponding to our use of \textsf{ModSwitch} for $\ModPPMM$) between Tables~\ref{tab:ccmm-jaihyun} and~\ref{tab:ccmm-yj-symmetric-degree} for rows using the $\RLWE$ format, notably in the square case where overheads can be neglected. For  $(d_1, d_2, d_3) = (2^{14}, 2^{14}, 2^{14})$, however, our implementation of Algorithm~\ref{alg:ccmm} makes 12 calls to BLAS  in dimensions $2^{14} \times 2^{14}\times  2^{14}$ while, using the structured-$\BS$ shared-$a$ format, our implementation of Algorithm~\ref{alg:general} makes 
\begin{itemize}
    \item[$\bullet$] 5 calls to BLAS in dimensions $2^{14} \times 2^{14}\times  2^{14}$; 
    \item[$\bullet$] 5 calls to BLAS in dimensions $2^{13} \times 2^{14}\times  2^{14}$ and 5 calls in dimensions $2^{14}\times 2^{13}\times 2^{14}$; 
    \item[$\bullet$] 5 calls to BLAS in dimensions $2^{13} \times 2^{13}\times  2^{14}$. 
\end{itemize}
Assuming that the cost grows linearly with the product of the dimensions, we thus expect a $\ModPPMM$ time for Algorithm~\ref{alg:general} of the order of $\approx 11.25$ calls to BLAS
in dimensions $2^{14} \times 2^{14}\times  2^{14}$, so slightly cheaper than for Algorithm~\ref{alg:ccmm}. This is in line with what we observe in Tables~\ref{tab:ccmm-jaihyun} and~\ref{tab:ccmm-yj-symmetric-degree}.

\begin{table}[h]
    \centering
    \begin{tabular}{|c|c|c|c|c|}
        \hline
        $d_1 \times d_2 \times d_3$ & 
        $N^U(=N^M)$&
        $\ModPPMM$ time & Precision
        \\\hline    
        $2^{12}\times 2^{12}\times 1$ & $2^{12}$ & $0.882$ & $17.5$\\\hline
        $2^{12}\times 2^{12}\times 2^6$ & $2^{12}$ & $1.44$ & $17.4$\\ \hline
        $2^{12}\times 2^{12}\times 2^{12}$ & $2^{12}$ & $42.6$ & $17.4$\\ \hline
        $2^{13}\times 2^{13}\times 1$ & $2^{13}$ & $4.35$ & $17.2$\\\hline
        $2^{13}\times 2^{13}\times 2^6$ & $2^{13}$ & $6.47$ & $17.3$\\ \hline
        $2^{13}\times 2^{13}\times 2^{13}$ & $2^{13}$ & $291$ & $17.3$\\ \hline
        $2^{14} \times 2^{14} \times 1$ & $2^{13}$ & $10.5$ & $17.4$ \\\hline
        $2^{14} \times 2^{14} \times 2^7$ & $2^{13}$ & $19.0$ & $17.4$ \\\hline
        $2^{14} \times 2^{14} \times 2^{14}$ &  $2^{13}$ & $1270$ & $17.4$ \\\hline
    \end{tabular}
    \caption{\label{tab:ccmm-yj-symmetric-degree}Timings for the $\ModPPMM$ part of $\ccmv$ and $\ccmm$ using the \RGSW\ approach (Algorithm~\ref{alg:general}), in seconds,  averaged over 10 runs. The precision displayed is the worst over all coefficients and all the 10 runs. The six first rows use the $\RLWE$ encryption format, while the last three rows use the structured-$\BS$ shared-$a$ format.}
\end{table}

\subsubsection{Precision aspects.}
These experiments exhibit somewhat diverse precision results for the three methods. This is related to choices that we have made in order to provide a reliable comparison in terms of efficiency of the approaches, which impact on the precision. 
\begin{itemize}
    \item[$\bullet$] Using the floating-point approach for the $b$-part in Algorithm~\ref{alg:pcmm} reduces the overall precision, as analyzed in Section~\ref{sec:strategy3}.
    \item[$\bullet$] Taking the same moduli across the methods (ciphertext modulus~$q$ and auxiliary modulus~$P$) with a somewhat tight budget of auxiliary modulus for key switching favours Algorithms~\ref{alg:pcmm} and~\ref{alg:general}, which involve fewer key-switchings. Within the same total modulus budget, different choices would yield a better precision for Algorithm~\ref{alg:ccmm}.
    \item[$\bullet$] Finally, Algorithm~\ref{alg:general} is inherently slightly more precise, all other things being equal, as part of the computations are handled at modulus~$pq$, which increases the value of the underlying scaling factor. This impacts some, though not all, of the error terms. 
\end{itemize}

\bibliographystyle{splncs04}
\bibliography{bibliography}

@misc{GL25,
    title = {Fully Homomorphic Encryption for Matrix Arithmetic},
    author = {C. Gentry
             and Y. Lee}, 
    year = {2025},
    note = {Available at \url{https://eprint.iacr.org/2025/1935}}
}

@misc{CKL25,
    title = {Fast Batch Matrix Multiplication in Ciphertexts},
    author = {J. H. Cheon and
              M. Kang and
              J. Lee}, 
    year = {2023},
    note = {Available at \url{https://eprint.iacr.org/2025/1957}}
 }

@inproceedings{RT22,
author = {Rizomiliotis, P. and Triakosia, A.},
title = {On Matrix Multiplication with Homomorphic Encryption},
year = {2022},
series = {CCSW'22}
}

@article{SCCTA20,
  author    = {J. J. Sim and
               F. M. Chan and
               S. Chen and
               B. H. M. Tan and
               K. M. M. Aung},
  title     = {Achieving {GWAS} with homomorphic encryption},
  journal   = {BMC Medical Genomics},
  year      = {2020}
}

@article{HWC24,
title = {Faster matrix approximate homomorphic encryption},
journal = {Computer Standards \& Interfaces},
year = {2024},
author = {J. Hu and F. Wang and K. Chen}
}

@InProceedings{JLKN+22,
author = {Jang, J. and Lee, Y. and Kim, A. and Na, B. and Yhee, D. and Lee, B. and Cheon, J. H. and Yoon, S.},
title = {Privacy-Preserving Deep Sequential Model with Matrix Homomorphic Encryption},
year = {2022},
booktitle={AsiaCCS}
}

@InProceedings{CKKS17,
author="Cheon, J. H.
and Kim, A.
and Kim, M.
and Song, Y.",
title="Homomorphic Encryption for Arithmetic of Approximate Numbers",
booktitle="ASIACRYPT",
year="2017"
}

@inproceedings{JKLS18,
author = {Jiang, X. and Kim, M. and Lauter, K. and Song, Y.},
title = {Secure Outsourced Matrix Computation and Application to Neural Networks},
year = {2018},
booktitle = {CCS}
}

@inproceedings{HK20,
  title={Better bootstrapping for approximate homomorphic encryption},
  author={Han, K. and Ki, D.},
  booktitle={Cryptographers’ Track at the RSA Conference},
  year={2020}
}

@inproceedings{CDKS21,
  author       = {H. Chen and
                  W. Dai and
                  M. Kim and
                  Y. Song},
  title         = {Homomorphic Conversion Between (Ring) {LWE} Ciphertexts},
  booktitle    = {{ACNS}},
  year         = {2021}
}

@inproceedings{PW08,
  author       = {C. Peikert and
                  B. Waters},
  title        = {Lossy trapdoor functions and their applications},
  booktitle    = {{STOC}},
  year         = {2008}
}

@inproceedings{iron,
  title={Iron: Private inference on transformers},
  author={M. Hao and H. Li and H. Chen and P. Xing and G. Xu and T. Zhang},
  booktitle={NIPS},
  year={2022}
}

@article{BGV14,
  author    = {Z. Brakerski and
               C. Gentry and
               V. Vaikuntanathan},
  title     = {({L}eveled) Fully Homomorphic Encryption without Bootstrapping},
  journal   = {{ACM} Trans. Comput. Theory},
  year      = {2014}
}

@article{LS15,
  author    = {A. Langlois and
               D. Stehl{\'{e}}},
  title     = {Worst-case to average-case reductions for module lattices},
  journal   = {Des. Codes Cryptogr.},
  year      = {2015}
}

@article{FV12,
  author    = {J. Fan and
               F. Vercauteren},
  title     = {Somewhat Practical Fully Homomorphic Encryption},
  journal     = {Available at \url{http://eprint.iacr.org/2012/144}},
  year      = {2012}
}

@misc{heaanlib,
author = {CryptoLab},
title = {{HEaaN} Library},
year     = {2022},
note     = {Available at \url{https://www.cryptolab.co.kr/en/products-en/heaan-he/}}
}

@article{lattice-estimator,
  author       = {M. R. Albrecht and
                  R. Player and
                  S. Scott},
  title        = {On the concrete hardness of Learning with Errors},
  journal      = {J. Math. Cryptol.},
  note = {Software available at \url{https://github.com/malb/lattice-estimator}},
  year         = {2015}
}

@inproceedings{BCHPS24,
    title = {Plaintext-Ciphertext Matrix Multiplication and {FHE} Bootstrapping: Fast and Fused},
    author = {
        Y. Bae and
        J. H. Cheon and
        G. Hanrot and
        J. H. Park and
        D. Stehl{\'{e}}
    },
    year = {2024},
    booktitle = {{CRYPTO}}
}

@inproceedings{Park25,
    title = {Ciphertext-Ciphertext Matrix Multiplication: Fast for Large Matrices},
    author = {
        J. H. Park
    },
    year = {2025},
    booktitle = {{EUROCRYPT}}
}

@inproceedings{LLKN23,
author      = {J. W. Lee and 
               E. Lee and
               Y. S. Kim and
               J. S. No},
title       = {Rotation Key Reduction for Client-Server Systems of Deep Neural Network on Fully Homomorphic Encryption},
booktitle   = {{ASIACRYPT}},
year        = {2023}
}

@inproceedings{BCK+23,
  author       = {Y. Bae and
                  J. H. Cheon and
                  J. Kim and
                  J. H. Park and
                  D. Stehl{\'{e}}},
  title        = {{HERMES:} Efficient Ring Packing Using {MLWE} Ciphertexts and Application
                  to Transciphering},
  booktitle    = {{CRYPTO}},
  year         = {2023}
}

@article{LZ22,
  title={Privacy-preserving and publicly verifiable matrix multiplication},
  author={J. Liu and L. F. Zhang},
  journal={IEEE Trans. Services Comput.},
  year={2022}
}

@misc{openblas,
  author = {},
  title ={{OpenBLAS}: An optimized {BLAS} library --  version 0.3.26}, 
note = {Available at \url{https://www.openblas.net/}}}

@inproceedings{MTBH21,
  title={Multiparty homomorphic encryption from ring-learning-with-errors},
  author={Mouchet, Christian and Troncoso-Pastoriza, Juan and Bossuat, Jean-Philippe and Hubaux, Jean-Pierre},
  booktitle={{PETS}},
  year={2021}
}

@article{GSB+23,
  title={HElium: A Lang100uage and Compiler for Fully Homomorphic Encryption with Support for Proxy Re-Encryption},
  author={G{\"u}nther, Mirko and Sch{\"u}tze, Lars and Becher, Kilian and Strufe, Thorsten and Castrillon, Jeronimo},
  journal={arXiv preprint arXiv:2312.14250},
  year={2023}
}

@inproceedings{CGGI17,
  author       = {I. Chillotti and
                  N. Gama and
                  M. Georgieva and
                  M. Izabach{\`{e}}ne},
  title        = {Faster Packed Homomorphic Operations and Efficient Circuit Bootstrapping
                  for {TFHE}},
  booktitle    = {{ASIACRYPT}},
  year         = {2017}
}

@article{BGGJ20,
  author       = {C. Boura and
                  N. Gama and
                  M. Georgieva and
                  D. Jetchev},
  title        = {{CHIMERA:} Combining {Ring-LWE}-based Fully Homomorphic Encryption Schemes},
  journal      = {J. Math. Cryptol.},
  year         = {2020}
}

@inproceedings{ADVSZ24,
      title={More Asymmetry Yields Faster Matrix Multiplication}, 
      author={Josh Alman and Ran Duan and Virginia Vassilevska Williams and Yinzhan Xu and Zixuan Xu and Renfei Zhou},
      year={2025},
booktitle={SODA}}

@inproceedings{BEHZ16,
  title={A full {RNS} variant of {FV} like somewhat homomorphic encryption schemes},
  author={Bajard, Jean-Claude and Eynard, Julien and Hasan, M Anwar and Zucca, Vincent},
  booktitle={SAC},
  year={2016}
}

@inproceedings{SSTX09,
  author    = {D. Stehl{\'{e}} and
               R. Steinfeld and
               K. Tanaka and
               K. Xagawa},
  title     = {Efficient Public Key Encryption Based on Ideal Lattices},
  booktitle = {{ASIACRYPT}},
  year      = {2009}
}

@inproceedings{LPR10,
  author    = {V. Lyubashevsky and
               C. Peikert and
               O. Regev},
  title     = {On Ideal Lattices and Learning with Errors over Rings},
  booktitle = {{EUROCRYPT}},
  year      = {2010}
}

@inproceedings{Brakerski12,
  author    = {Z. Brakerski},
  title     = {Fully Homomorphic Encryption without Modulus Switching from Classical
               Gap{SVP}},
  booktitle = {{CRYPTO}},
  year      = {2012}
}

@article{CYWLF24,
    title = {Homomorphic Matrix Operations under Bicyclic Encoding},
    author = {J. Chen
              and L. Yang
              and W. Wu
              and Y. Liu
              and Y. Feng}, 
    year = {2024},    journal = {IEEE Trans. Inf. Forensics Secur. }
 }

@misc{ZLW23,
    title = {A New Framework for Fast Homomorphic Matrix Multiplication},
    author = {X. Zheng
             and H. Li
             and D. Wang}, 
    year = {2023},
    note = {Available at \url{https://eprint.iacr.org/2023/1649}}
 }

@inproceedings{GKPV10,
  author       = {S. Goldwasser and
                  Y. T.  Kalai and
                  C. Peikert and
                  V. Vaikuntanathan},
  title        = {Robustness of the Learning with Errors Assumption},
  booktitle    = {{ICS}},
  year         = {2010}
}

@inproceedings{BLPRS13,
  author       = {Z. Brakerski and
                  A. Langlois and
                  C. Peikert and
                  O. Regev and
                  D. Stehl{\'{e}}},
  title        = {Classical hardness of learning with errors},
  booktitle    = {{STOC}},
  year         = {2013}
}

@inproceedings{ZL23,
  title={Secure Mutual Learning with Low Interactions for Deep Model Training},
  author={W. Zhu
          and X. Li},
  booktitle={MSN},
  year={2023},
}

@inproceedings{PZM+24,
  title={{BOLT}: Privacy-preserving, accurate and efficient inference for transformers},
  author={Q. Pang
      and J. Zhu
      and H. M{\"o}llering
      and W. Zheng
      and T. Schneider},
  booktitle={{IEEE S\&P}},
  year={2024},
}

@article{HCJG24,
  title={Privacy-preserving model evaluation for logistic and linear regression using homomorphically encrypted genotype data},
  author={S. Hong
      and Y. A. Choi
      and D. S. Joo
      and G. G{\"u}rsoy},
  journal={J. Biomed. Informs.},
  year={2024},
}

@article{MMJG24,
  title={Improved privacy-preserving {PCA} using optimized homomorphic matrix multiplication},
  author={X. Ma
          and C. Ma
          and Y. Jiang
          and C. Ge},
  journal={Comput. Secur.},
  year={2024},
}

@article{GQH+24,
  title={Secure and efficient general matrix multiplication on cloud using homomorphic encryption},
  author={Y. Gao
          and G. Quan
          and S. Homsi
          and W. Wen
          and L. Wang},
  journal={The Journal of Supercomputing},
  year={2024},
}

@misc{nexus,
  title = {Secure Transformer Inference Made Non-interactive},
  author = {J. Zhang and J. Liu and X. Yang and Y. Wang and K. Chen and X. Hou and K. Ren and X. Yang},
  year = {2023},
  note = {Available at \url{https://eprint.iacr.org/2024/136}}}

@article{HHW+21,
  title={More efficient secure matrix multiplication for unbalanced recommender systems},
  author={Z. Huang
          and C. Hong
          and C. Weng
          and W. Lu
          and H. Qu},
  journal={IEEE Trans. Dependable Secure Comput.},
  year={2021},
}

@misc{GPT,
    title = {Improving language understanding by generative pre-training},
    author = {A. Radford and K. Narasimhan and T. Salimans and I. Sutskever},
    year = {2018},
    note = {Available at \url{https://openai.com/research/language-unsupervised}}
}

@inproceedings{bert,
    title = {{BERT}: Pre-training of Deep Bidirectional Transformers for Language Understanding},
    author = {J. Devlin
              and M.-W. Chang
              and K. Lee
              and K. Toutanova}, 
    booktitle={NAACL-HLT},
    year = {2019} 
}

@misc{llama,
      title={{LLaMA}: Open and Efficient Foundation Language Models}, 
      author={H. Touvron and T. Lavril and G. Izacard and X. Martinet and M.-A. Lachaux and T. Lacroix and B. Rozière and N. Goyal and E. Hambro and F. Azhar and A. Rodriguez and A. Joulin and E. Grave and G. Lample},
      year={2023},
      note ={Available at \url{https://arxiv.org/abs/2302.13971}}
}

@inproceedings{CHKKS18b,
  author       = {J. H. Cheon and
                  K. Han and
                  A. Kim and
                  M. Kim and
                  Y. Song},
  title        = {Bootstrapping for Approximate Homomorphic Encryption},
  booktitle    = {{EUROCRYPT}},
  year         = {2018}
}

@misc{east,
  title={East: Efficient and accurate secure transformer framework for inference},
  author={Y. Ding and H. Guo and Y. Guan and W. Liu and J. Huo and Z. Guan and X. Zhang},
  note = {Available at \url{https://arxiv.org/abs/2308.09923}},
  year={2023}
}

@inproceedings{FCES+23,
  title={Scalable and privacy-preserving federated principal component analysis},
  author={D. Froelicher and H. Cho and M. Edupalli and J. S. Sousa and J.-P. Bossuat and A. Pyrgelis and J. R. Troncoso-Pastoriza and B. Berger and J.-P. Hubaux},
  year={2023},
  booktitle={IEEE S\&P}
}

@inproceedings{HS14,
  author       = {S.  Halevi and
                  V. Shoup},
  title        = {Algorithms in {HElib}},
  booktitle    = {{CRYPTO}},
  year         = {2014}
}

@manual{fflas-ffpack,
title = {{FFLAS-FFPACK}: {F}inite {F}ield {L}inear {A}lgebra {S}ubroutines / {P}ackage},
author = {The FFLAS-FFPACK Group\FFLASBibkeyhack FFLAS14},
key = {FFLAS14}, 
edition = {v2.0.0},
year = {2014},
note = {\url{http://linalg.org/projects/fflas-ffpack}}
}

@article{DGP08,
author = {J.-G. Dumas and P. Giorgi and C. Pernet},
title = {Dense Linear Algebra over Word-Size Prime Fields: the {FFLAS} and {FFPACK} Packages},
journal = {ACM Trans. Math. Softw.},
year = {2008}
}

@manual{flint,
    key = {FLINT},
    author = {{The {FLINT} team}},
    title = {{FLINT}: {F}ast {L}ibrary for {N}umber {T}heory},
    year = {2023},
    note = {Version 3.0.0, \url{https://flintlib.org}}
}

@inproceedings{HDCZ23,
  author       = {Alexandra Henzinger and
                  Emma Dauterman and
                  Henry Corrigan{-}Gibbs and
                  Nickolai Zeldovich},
  title        = {Private Web Search with Tiptoe},
  booktitle    = {SOSP},
  year         = {2023}
}

@inproceedings{GHS12,
  title={Homomorphic evaluation of the AES circuit},
  author={C. Gentry and
          S. Halevi and
          N. P. Smart},
  booktitle={{CRYPTO}},
  year={2012},
}

@inproceedings{GHV10,
  author       = {Craig Gentry and
                  Shai Halevi and
                  Vinod Vaikuntanathan},
  title        = {A Simple {BGN}-Type Cryptosystem from {LWE}},
  booktitle    = {{EUROCRYPT}},
  year = {2010}
}

@inproceedings{CHAM23,
  author       = {Xuanle Ren and
                  Zhaohui Chen and
                  Zhen Gu and
                  Yanheng Lu and
                  Ruiguang Zhong and
                  Wen{-}Jie Lu and
                  Jiansong Zhang and
                  Yichi Zhang and
                  Hanghang Wu and
                  Xiaofu Zheng and
                  Heng Liu and
                  Tingqiang Chu and
                  Cheng Hong and
                  Changzheng Wei and
                  Dimin Niu and
                  Yuan Xie},
  title        = {{CHAM:} {A} Customized Homomorphic Encryption Accelerator for Fast
                  Matrix-Vector Product},
  booktitle    = {DAC},
  year         = {2023}
}

@inproceedings{Rhombus24,
  author       = {Jiaxing He and
                  Kang Yang and
                  Guofeng Tang and
                  Zhangjie Huang and
                  Li Lin and
                  Changzheng Wei and
                  Ying Yan and
                  Wei Wang},
  title        = {Rhombus: Fast Homomorphic Matrix-Vector Multiplication for Secure
                  Two-Party Inference},
  booktitle    = {CCS},
  year         = {2024}
}

@inproceedings{ACLS18,
  author       = {Sebastian Angel and
                  Hao Chen and
                  Kim Laine and
                  Srinath T. V. Setty},
  title        = {{PIR} with Compressed Queries and Amortized Query Processing},
  booktitle    = { {IEEE} S\&P},
  year         = {2018}
}

@inproceedings{ALPRSSY21,
  author       = {Asra Ali and
                  Tancr{\`{e}}de Lepoint and
                  Sarvar Patel and
                  Mariana Raykova and
                  Phillipp Schoppmann and
                  Karn Seth and
                  Kevin Yeo},
  title        = {Communication-Computation Trade-offs in {PIR}},
  booktitle    = { {USENIX} },
  year         = {2021}
}

@inproceedings{HS18,
  author       = {S. Halevi and
                  V. Shoup},
  title        = {Faster Homomorphic Linear Transformations in {HElib}},
  booktitle    = {{CRYPTO}},
  year         = {2018}
}

@InProceedings{GSW13, 
  author       = {C. Gentry and
                  A. Sahai and
                  B. Waters},
  title        = {Homomorphic Encryption from Learning with Errors: Conceptually-Simpler,
                  Asymptotically-Faster, Attribute-Based},
  booktitle    = {CRYPTO},
  year         = {2013}
}

@inproceedings{BP16,
  author       = {Z. Brakerski and
                  R. Perlman},
  title        = {Lattice-Based Fully Dynamic Multi-key {FHE} with Short Ciphertexts},
  booktitle    = {CRYPTO}, 
  year         = {2016}
}

@InProceedings{BGN05,
    author={Boneh, D. and Goh, E.-J. and Nissim, K.}, 
    title={Evaluating 2-{DNF} Formulas on Ciphertexts}, 
    booktitle={TCC},
    year={2005}
}

@misc{ieee754,
title= {754-2019 - IEEE Standard for Floating-Point Arithmetic}, 
publisher = {IEEE}, 
year = {2019}}

\end{document}